\documentclass[12pt]{article}

\usepackage{amsmath,amssymb,amsthm,mathtools}
\usepackage[margin=0.93in]{geometry}
\usepackage{natbib}
\usepackage{xr-hyper}
\usepackage[hidelinks]{hyperref}
\usepackage{booktabs}
\usepackage{graphicx}
\usepackage{float}
\usepackage{array}
\usepackage{xcolor}
\usepackage{colortbl}
\usepackage{bm}
\usepackage{mathrsfs}
\usepackage{enumitem}
\usepackage{makecell}
\usepackage{setspace}
\usepackage[mathlines]{lineno}
\usepackage{commands}
\usepackage{titlesec}

\titlespacing*{\paragraph}
{0pt}
{0.25\baselineskip}
{0.75em}
\titleformat{\paragraph}[runin]
{\normalfont\normalsize\bfseries}
{}
{0pt}
{}
[.]
\providecommand{\depIndAnonymous}{0}
\makeatletter
\newcommand{\suppfallback}[2]{%
  \expandafter\gdef\csname suppfallback@#1\endcsname{#2}}
\suppfallback{suppeffOu.ssbp_support}{S1.2}
\suppfallback{suppeffOu.orthogonalSpecialization}{S1.3}
\suppfallback{suppeffOu.sec:alternativeParametrizationsForTwodimensionalBlocks}{S2}
\suppfallback{suppeffOu.sec:alternativeParametrizationWithinHssbpFamily}{S2.1}
\suppfallback{suppeffOu.sec:alternativeParametrizationOutsideHssbpFamily}{S2.2}
\suppfallback{suppeffOu.sec:forward_exp}{S3.1}
\suppfallback{suppeffOu.sec:forward_lyap}{S3.2}
\suppfallback{suppeffOu.forward_frechet}{S4.1}
\suppfallback{suppeffOu.adjoint_exp}{S4.2}
\suppfallback{suppeffOu.adjoint_lyap}{S4.3}
\suppfallback{suppeffOu.sec:numerical_robustness}{S5}
\suppfallback{suppeffOu.sec:simulation_material}{S6}
\suppfallback{suppeffOu.sec:empirical_details}{S7}
\newcommand{\suppsectionnumber}[1]{%
  \@ifundefined{r@#1}{%
    \@ifundefined{suppfallback@#1}{??}{\csname suppfallback@#1\endcsname}%
  }{\ref*{#1}}}
\newcommand{\suppref}[1]{%
  \@ifundefined{r@#1}{%
    Supp. Material, Section~\suppsectionnumber{#1}%
  }{%
    \hyperref[#1]{Supp. Material, Section~\suppsectionnumber{#1}}%
  }}
\newcommand{\supprefsrange}[2]{%
  \@ifundefined{r@#1}{%
    Supp. Material, Sections~\suppsectionnumber{#1} and~\suppsectionnumber{#2}%
  }{%
    \hyperref[#1]{Supp. Material, Sections~\suppsectionnumber{#1} and~\suppsectionnumber{#2}}%
  }}
\makeatother

\setlist[itemize]{itemsep=2pt, topsep=4pt, parsep=0pt, partopsep=0pt}
\author{}
\usepackage{caption}
\begin{document}
\captionsetup{
  labelfont=bf
}
\doublespacing
\setlength{\abovedisplayskip}{9pt plus 0pt minus 0pt}%
\setlength{\belowdisplayskip}{5pt plus 0pt minus 0pt}%
\setlength{\abovedisplayshortskip}{0pt plus 0pt}%
\setlength{\belowdisplayshortskip}{0pt plus 0pt minus 0pt}%
\renewcommand*{\setdisplayskipstretch}[1]{}
\if\depIndAnonymous1

\vspace{-0.8in}

\else
\begin{flushright}
Version dated: August 17, 2026\\
\end{flushright}

\bigskip
\medskip

\fi
\begin{center}

\noindent{\Large \bf
\begin{singlespace}
Stable Matrix Parametrizations \\
and Structured Adjoints\\
for Ornstein--Uhlenbeck Processes
\end{singlespace}
}
\bigskip

\if\depIndAnonymous1

\vspace{-0.1in}

\else

\noindent{\normalsize \sc
	Filippo Monti$^{1}$ \\
  Andrew Holbrook$^{1}$ \\
  Nathan E. Glatt-Holtz$^{2, 3}$ \\
  Marc A.~Suchard$^{1,4,5}$} \\

\bigskip
\noindent {\small
  \it $^1$ Department of Biostatistics, Jonathan and Karin Fielding School of Public Health, University of California Los Angeles, Los Angeles, CA, USA \\
  \it $^{2}$ Department of Statistics, Indiana University Bloomington, Bloomington, Indiana, USA \\
\it $^{3}$ Department of Mathematics, Indiana University Bloomington, Bloomington, Indiana, USA \\
  \it $^{4}$ Department of Biomathematics, David Geffen School of Medicine at UCLA, University of California Los Angeles, Los Angeles, CA, USA \\
  \it $^{5}$ Department of Human Genetics, David Geffen School of Medicine at UCLA, University of California Los Angeles, Los Angeles, CA, USA
} \\

\fi
\end{center}

\if\depIndAnonymous1
\else
\medskip
\vspace{0.5in}
\noindent{\bf Corresponding author:} Marc A.~Suchard, Departments of Biostatistics, Biomathematics, and Human Genetics,
University of California Los Angeles, 695 Charles E.~Young Dr., South,
Los Angeles, CA 90095-7088, USA; E-mail: \url{msuchard@ucla.edu}

\vspace{0.3in}


\newpage
\fi
\begin{abstract}
\noindent
Ornstein--Uhlenbeck processes with flexible multivariate drift matrices are powerful models for capturing coupled, asymmetric, and damped-oscillatory mean reversion.
However, likelihood-based inference is challenging because the drift matrix must remain Hurwitz stable, while likelihood and gradient evaluations require repeated, costly computation and differentiation of the drift matrix exponential and of solutions to the associated Lyapunov equation.


\noindent We introduce the \emph{Hurwitz smooth spectral block parametrization} (H-SSBP), which represents the drift as a change of basis applied to independent one- and two-dimensional stable blocks. 
Simple scalar constraints enforce Hurwitz stability, while the two-dimensional blocks vary smoothly between real- and complex-eigenvalue regimes, avoiding discrete model selection. 
The H-SSBP represents every real Hurwitz matrix diagonalizable over the complex numbers and has dense, full-measure support within the Hurwitz cone. 
Its block structure reduces the OU transition matrix, stationary and innovation covariances, and their reverse-mode derivatives to constant-size block or block-pair computations plus change-of-basis multiplications. 
Numerical experiments demonstrate dramatic speed-ups over alternatives, particularly for matrix-exponential adjoints and Lyapunov-equation kernels.
Real-data analyses of asynchronous financial data and multivariate phylogenetic traits illustrate the proposed framework in practice.

\end{abstract}

\medskip
\noindent\textbf{Keywords:} Hurwitz matrices; matrix exponential; algebraic Lyapunov equation; Kalman smoother; adjoint computation

\clearpage

\section{Introduction}\label{effOu.sec:intro}

Ornstein--Uhlenbeck (OU) processes \citep{uhlenbeck_theory_1930} are fundamental building blocks in statistics and machine learning \citep{rasmussen_gaussian_2006}, signal processing and state-space modeling \citep{kalman_new_1960, sarkka_bayesian_2013}, and phylogenetics \citep{hansen_stabilizing_1997, bastide_efficient_2021}.
Multivariate OU processes are powerful models for capturing mean reversion, dependence across coordinates, and damped-oscillatory dynamics.
Specifically, let $\effOuDataRv{}(\effOuTime)\in\mathbb R^{\effOuDim}$ follow a time-homogeneous OU process with equilibrium mean $\effOuEquilMean\in\mathbb R^{\effOuDim}$, drift matrix $\effOuDriftMatrix\in\mathbb R^{\effOuDim\times\effOuDim}$, and positive-definite diffusion covariance $\effOuDiffMat\in\mathbb R^{\effOuDim\times\effOuDim}$ governed by the stochastic differential equation \citep{sarkka_applied_2019}:
\begin{align}\label{effOu.eq:sde}
\dd\effOuDataRv{}(\effOuTime)
=
\effOuDriftMatrix
\left(
\effOuDataRv{}(\effOuTime)
-
\effOuEquilMean
\right)
\dd\effOuTime
+
\effOuDiffMat^{\frac{1}{2}}
\dd\effOuWienerProcess{}(\effOuTime).
\end{align}
Here $\effOuWienerProcess{}(\effOuTime)$ is a $\effOuDim$-dimensional Wiener process, and $\effOuDiffMat^{1/2}(\effOuDiffMat^{1/2})^{\effOuTranspose}=\effOuDiffMat$.
Mean reversion requires $\effOuDriftMatrix$ to be Hurwitz, meaning that every eigenvalue of $\effOuDriftMatrix$ has negative real part.
Many applications place latent and observed OU processes along graphical structures such as time series and trees.
The process then evolves continuously along graph edges, induces Gaussian dependence among node variables, and can be coupled with observation models for measurement noise or partial observations \citep{kalman_new_1960, sarkka_bayesian_2013, felsenstein_phylogenies_1985, bastide_efficient_2021}.

The drift matrix $\effOuDriftMatrix$ is the main source of flexibility in a multivariate OU model, but also the main obstacle to scalable, gradient-based inference.
Its eigenvalues determine the modal mean-reversion behavior: real eigenvalues produce non-oscillatory exponential relaxation, while complex-conjugate eigenvalues are essential for capturing coupled trajectories with damped oscillatory decay toward equilibrium.
On the computational side, the drift appears inside expensive matrix functions such as matrix exponentials and Lyapunov solves, which need to be repeatedly evaluated and differentiated during inferential routines.
A useful parametrization of $\effOuDriftMatrix$ should therefore enforce stability without ruling out asymmetric, coupled, or damped-oscillatory dynamics, and should expose enough structure to make likelihood and gradient calculations computationally efficient.

\paragraph{Existing parametrizations} Existing drift parametrizations trade off \emph{expressiveness}, \emph{stability enforcement}, and \emph{computational tractability}. 
Diagonal drift matrices are stable and computationally simple, but remove cross-dimensional mean-reverting coupling.
Symmetric negative-definite drift matrices allow coupling, but rule out rotational drift behavior and damped oscillations from complex eigenvalues \citep{horn_matrix_2012}.
Real Schur decompositions and Lyapunov-based constructions provide more expressive stable representations \citep{schur_ueber_1909, golub_matrix_2013, lancaster_algebraic_1995}, but they retain dense matrix-function and linear-solve operations inside likelihood and gradient calculations \citep{moler_nineteen_1978, higham_scaling_2005, higham_scaling_2009, al-mohy_computing_2011}.
As a parametrization, a fixed Schur block layout also fixes the number and placement of real and complex spectral blocks in advance, while Lyapunov-based constructions introduce additional matrix inversions or dense linear solves that can worsen the likelihood geometry and introduce numerical fragility.

\paragraph{Proposed parametrization: H-SSBP} We propose the \emph{Hurwitz smooth spectral block parametrization} (H-SSBP), which represents the drift as a change of basis applied to a block-diagonal matrix with scalar and two-dimensional stable blocks,
$\effOuDriftMatrix=\effOuRotMat \effOuBlkDiag \effOuRotMat^{-1}$ (Def.~\ref{effOu.def:SSBP} and \ref{effOu.def:H-SSBP}). 
The $2\times2$ blocks are smooth functions of three parameters $(\effOuCartBlockRate_{\effOuIdxB}, \effOuCartNormalizingRate_{\effOuIdxB}, \effOuBlkT{\effOuIdxB})$:
\begin{align}\label{effOu.eq:block_def_intro}
    \effOuBlk{\effOuIdxB}(\effOuCartBlockRate_{\effOuIdxB},\effOuCartNormalizingRate_{\effOuIdxB},\effOuBlkT{\effOuIdxB})
    :=
    \begin{pmatrix}
      \effOuCartBlockRate_{\effOuIdxB}
        &
      \effOuCartBlockRate_{\effOuIdxB}\effOuCartNormalizingRate_{\effOuIdxB}
      +
      \effOuBlkT{\effOuIdxB}
      \\
      \effOuCartBlockRate_{\effOuIdxB}\effOuCartNormalizingRate_{\effOuIdxB}
      -
      \effOuBlkT{\effOuIdxB}
        &
      \effOuCartBlockRate_{\effOuIdxB}
    \end{pmatrix}.
\end{align}
Hurwitz stability is enforced by the independent constraints $\effOuCartBlockRate_{\effOuIdxB}<0$ and $|\effOuCartNormalizingRate_{\effOuIdxB}|<1$ (Prop.~\ref{effOu.prop:stability}), while varying the unrestricted skew-symmetric parameter $\effOuBlkT{\effOuIdxB}\in\mathbb{R}$ allows smooth transitions between real and complex spectral regimes.
The block structure in $\effOuBlkDiag$ also organizes computation and is designed to improve numerical robustness over alternative dense parametrizations.
In particular, the expensive matrix-function operations reduce to independent constant-size block and block-pair computations, leaving change-of-basis multiplications as the dominant dense operations.

When the change-of-basis matrix is allowed to range over all invertible matrices, the H-SSBP contains every real Hurwitz matrix that is diagonalizable over $\mathbb{C}$, as well as certain defective matrices. 
Consequently, its image is dense in and has full Lebesgue measure within the real Hurwitz cone (Prop.~\ref{prop:smbp_support}).
We also discuss an \emph{orthogonal variant}
$(\effOuRotMat^{\effOuTranspose}\effOuRotMat=\effOuIdentity)$
that improves conditioning and interpretability but restricts the class of matrices represented (Cor.~\ref{cor:orthogonal_smbp_support}).

\paragraph{Contributions} This paper makes three main contributions.
\begin{itemize}
\item[1] We introduce the H-SSBP as a stable drift parametrization for likelihood-based inference in multivariate OU models (Sec.~\ref{effOu.sec:SSBP}) and characterize its properties relative to competing stable-matrix parametrizations (Secs.~\ref{effOu.sec:landscape} and~\ref{effOu.sec:SSBP}).
\item[2] We leverage the H-SSBP block structure to improve the scalability of exact Gaussian message passing likelihood evaluations for OU processes on rooted directed trees.
We derive reverse-mode gradients in which the matrix-exponential and Lyapunov pullbacks reduce, after a change of basis, to constant-size block or block-pair kernels (Sec.~\ref{effOu.sec:adjoints}).
\item[3] We show that, when the drift is homogeneous across edges, working directly in the block basis substantially reduces repeated dense matrix operations (Sec.~\ref{effOu.sec:homogeneous_block_basis}).
When $\effOuDriftMatrix$ and $\effOuDiffMat$ additionally share a common orthogonal block basis, the process fully decouples in the block basis into independent one- and two-dimensional components (Sec.~\ref{effOu.sec:homogeneous_block_basis_simultaneous}).
\end{itemize}

\noindent We then support the theoretical results with numerical benchmarks (Sec.~\ref{effOu.sec:results}), simulations (Sec.~\ref{effOu.sec:simulations}), and data examples in finance (Sec.~\ref{effOu.sec:bitmex_trade_data}) and phylogenetics (Sec.~\ref{effOu.sec:phylogeneticExample}).

\section{OU Processes on Graphs}
\label{effOu.sec:model}

This section formalizes the OU process model on rooted directed trees, specifies the transition model, and identifies the two computational targets---the matrix exponential and the Lyapunov equation---that motivate the development in Sections~\ref{effOu.sec:SSBP}--\ref{effOu.sec:adjoints}.

\paragraph{Graph structure}
Let $\effOuDag=(\effOuSetNodes,\effOuEdgeSet)$ be a rooted directed tree, with all edges oriented away from the root.
Thus, every non-root node $\effOuIdx$ has a unique parent $\effOuParentNodeOperator{\effOuIdx}$, connected to it by an edge of length $\effOuEdgeLength{\effOuIdx}>0$.
This class of graphs includes, for example, chains and branching phylogenetic trees.
We focus on this structure because unique parentage associates each non-root state with a single ancestral state and avoids the need to specify how information from multiple parents is combined.
However, many of the results developed below can be extended to more general directed acyclic graphs.

\paragraph{Transition model} Here and throughout, a real square matrix $\effOuDriftMatrix{}$ is called \emph{Hurwitz stable} if every eigenvalue of $\effOuDriftMatrix{}$ has strictly negative real part.
For an edge ending at node $\effOuIdx$, consider
the $\effOuDim$-dimensional OU process in Equation~\eqref{effOu.eq:sde}, with equilibrium mean $\effOuEquilMeanEdge{\effOuIdx} \in\mathbb{R}^{\effOuDim}$, Hurwitz stable drift matrix $\effOuDriftMatrixEdge{\effOuIdx} \in\mathbb{R}^{\effOuDim \times \effOuDim}$, and symmetric positive-definite diffusion covariance $\effOuDiffMatEdge{\effOuIdx} \in \mathbb{R}^{\effOuDim \times \effOuDim}$.
Let $\effOuStatVarEdge{\effOuIdx}$ denote the edge-level stationary covariance, namely the unique symmetric positive-definite solution of the continuous Lyapunov equation \citep{lyapunov_general_1992, hammarling_numerical_1982, lancaster_algebraic_1995}
\begin{align}\label{effOu.eq:lyapunovEquation}
\effOuDriftMatrixEdge{\effOuIdx} \effOuStatVarEdge{\effOuIdx}
+
\effOuStatVarEdge{\effOuIdx} \effOuDriftMatrixEdge{\effOuIdx}^{\effOuTranspose}
+
\effOuDiffMatEdge{\effOuIdx}
=
0.
\end{align}
The OU process evolves along the graph $\effOuDag$ according to the graph's conditional independence structure.
Writing $\effOuDataRv{\effOuIdx}$ for the process value at vertex $\effOuIdx$, the transition kernel along the edge ending at $\effOuIdx$ is

\vspace{-1.2cm}

\begin{align}\label{effOu.eq:transition}
  \effOuDataRv{\effOuIdx}
  \mid
  \effOuDataRv{\effOuParentNodeOperator{\effOuIdx}}
  \sim
  \effOuNorm \Bigl(
    \effOuActual{\effOuIdx}\,
    \effOuDataRv{\effOuParentNodeOperator{\effOuIdx}}
    +
    (\effOuIdentity-\effOuActual{\effOuIdx})
    \effOuEquilMeanEdge{\effOuIdx},\;
    \effOuEdgeCov{\effOuIdx}
  \Bigr),
\end{align}
where $\effOuActual{\effOuIdx}
=
\effOuExpMatrix{\effOuDriftMatrixEdge{\effOuIdx} \effOuEdgeLength{\effOuIdx}}$
is the edge-specific actualization matrix and
$\effOuEdgeCov{\effOuIdx}
=
\effOuStatVarEdge{\effOuIdx}
-
\effOuActual{\effOuIdx}
\effOuStatVarEdge{\effOuIdx}
\effOuActual{\effOuIdx}^{\effOuTranspose}$
is the edge innovation covariance.
Thus $\effOuDriftMatrixEdge{\effOuIdx}$ governs both deterministic propagation through $\effOuActual{\effOuIdx}$ and stochastic innovation through the stationary covariance $\effOuStatVarEdge{\effOuIdx}$ solving Equation~\eqref{effOu.eq:lyapunovEquation}.
To complete the model, we also define a root distribution $\effOurootDistribution$.

\paragraph{Observation model}
For the main development, we consider \emph{direct} observations at a subset of vertices.
Gaussian measurement error can be incorporated without altering any of the structural computations presented below.
Missing and partially missing data can also be analytically integrated within this framework \citep{bastide_efficient_2021, hassler_inferring_2022}.

\paragraph{Primary parametrization}
Suppressing the edge indices, the matrices $\effOuDriftMatrix$, $\effOuDiffMat$, and $\effOuStatVar$ are linked through the Lyapunov equation~\eqref{effOu.eq:lyapunovEquation} and therefore cannot be specified independently.
This work takes $\effOuDriftMatrix$ and $\effOuDiffMat$ as primary parameters and obtains $\effOuStatVar$ as the solution of~\eqref{effOu.eq:lyapunovEquation}.
Alternatively, one may specify $\effOuDiffMat$ and $\effOuStatVar$ and parametrize the drift through the Lyapunov identity as 
$\effOuDriftMatrix=\left(-\tfrac{1}{2}\effOuDiffMat+\effOuSkewSymmatrixMat\right)\effOuStatVar^{-1}$ where $\effOuSkewSymmatrixMat$ is a skew-symmetric matrix.
We discuss this case in more detail in Section~\ref{effOu.sec:landscape}.
A third possibility is to take $\effOuDriftMatrix$ and $\effOuStatVar$ as primary parameters and define
$\effOuDiffMat=-(\effOuDriftMatrix\effOuStatVar+\effOuStatVar\effOuDriftMatrix^{\effOuTranspose})$.
Since $\effOuDiffMat$ must be positive definite, this choice introduces a coupled feasibility condition on $\effOuDriftMatrix$ and $\effOuStatVar$, which is difficult to enforce directly.

\paragraph{Inference and computation}
Exact likelihood-based inference requires repeated evaluation of $\effOuActual{}$, $\effOuEdgeCov{}$, and their derivatives with respect to $\effOuDriftMatrixEdge{}$ and $\effOuDiffMatEdge{}$.
For an unstructured drift, the matrix exponentials and Lyapunov solves, as well as their gradients, require expensive general dense algorithms; the parametrization proposed in this work (H-SSBP; Def.~\ref{effOu.def:H-SSBP}) replaces these primitives with cheap structured block computations.

\section{The Landscape of Stable Matrix Parametrizations}
\label{effOu.sec:landscape}

We frame the parametrization problem in terms of three axes: \emph{expressiveness}, \emph{stability enforcement}, and \emph{computational complexity}.
Expressiveness measures the size of the set of stable matrices reachable by the parametrization; stability enforcement refers to how easily the Hurwitz constraint---that all eigenvalues have negative real parts---is maintained during optimization; computational complexity concerns the per-iteration cost of computing likelihood and gradients through the matrix exponential and the Lyapunov equation.
No existing approach dominates on all three axes simultaneously. 
Table~\ref{tab:landscape} provides a compact summary of the approaches presented below and shows how H-SSBP is designed to occupy a favorable region of this landscape.
\definecolor{proposedrow}{RGB}{244,250,235}

\begin{table}[ht]
\centering
\small
\setlength{\tabcolsep}{5pt}
\renewcommand{\arraystretch}{1.35}
\begin{tabular}{
  L{3.5cm}
  L{3.7cm}
  L{3.0cm}
  L{4.5cm}
}
\toprule
\textbf{Parametrization}
  & \textbf{Expressiveness}
  & \textbf{Stability}
  & \textbf{Dominant computations} \\
\midrule
\rowcolor{rowalt}
Diagonal
  & \bad{Real $\lambda$; no coupling}
  & \good{$s_i < 0$}
  & 
  \good{scalar arithmetic}  \\

Symmetric
  & \med{Real $\lambda$ coupling} \med{via rotation}
  & \good{$d_i < 0$}
  &  \med{rotations only} \\


\rowcolor{rowalt}
Real Schur $\effOuRotMat \effOuUpperQuasiTriangular \effOuRotMat^{\effOuTranspose}$ ($\effOuUpperQuasiTriangular$ upper quasi-tri.)
  & \med{Fixed number} 
  \med{of complex pairs}
  &  \med{Per-block}
  \med{conditions}
  & 
  \bad{Schur–Parlett expm}
  \bad{Bartels--Stewart Lyap.} \\

\rowcolor{proposedrow}
H-SSBP 
$\effOuRotMat \,\effOuBlkDiag\,\effOuRotMat^{-1}$
(this work)
  & \good{Diagonaliz. over $\mathbb{C}$}$^*$ {\footnotesize (and defective with $2\times 2$ Jordan blocks)}
  & 
    \good{$\effOuCartBlockRate_{\effOuIdxB}{<}0$, $|\effOuCartNormalizingRate_{\effOuIdxB}|{<}1$}
  & 
  \med{change-of-basis only \hspace{0.3cm}}
  \\

\rowcolor{rowalt}
Lyapunov-based $\left(-\tfrac{1}{2}\effOuDiffMat + \effOuSkewSymmatrixMat\right) \effOuStatVar^{-1}$
  & \good{All stable matrices}
  &  \good{By construction}
  & 
  \bad{Dense Schur/Pad\'e expm}
   \\
   Generic SVD/QR
$\mathbf U\mathbf D\mathbf V^{\effOuTranspose}$ or $\mathbf Q\mathbf R$
  & \good{All stable matrices}
  & \bad{Not possible}
  \bad{directly}
  & \bad{Dense Schur/Pad\'e expm}
  \bad{Bartels--Stewart Lyap.} \\
\bottomrule
\noalign{\vskip 2pt}
\multicolumn{4}{@{}p{14.8cm}@{}}{%
\scriptsize
$^{*}$
In the Hurwitz cone, matrices diagonalizable over $\mathbb{C}$ are dense and have full Lebesgue measure (Prop.~\ref{prop:smbp_support}); nondiagonalizable cases have measure $0$.
} \\
\end{tabular}
\caption{Comparison of stable matrix parametrizations on three axes.
\textit{Expressiveness}: set of Hurwitz matrices reachable.
\textit{Stability}: ease of enforcing $\mathrm{Re}(\lambda_i)<0$ for drift eigenvalues.
\textit{Dominant computations}: operations contributing to the leading computational cost.
} \label{tab:landscape}
\end{table}

\paragraph{Diagonal parametrization} The simplest choice writes the drift as a diagonal matrix
$\effOuDriftMatrix = \mathrm{diag}(s_1,\ldots,s_p)$ with negative diagonal elements $s_i < 0$.
This reduces matrix exponentials to scalar exponentials and Lyapunov equations to independent entrywise calculations.
The price is the loss of off-diagonal coupling in the mean-reverting dynamics.

\paragraph{Symmetric negative-definite parametrization}
Restricting the drift to a symmetric negative definite matrix introduces inter-component coupling while keeping all drift eigenvalues real and negative.
Under a spectral decomposition representation $\effOuDriftMatrix = \effOuRotMat \effOuDiagonalMatrix \effOuRotMat^{\effOuTranspose}$ with $\effOuRotMat$ orthogonal and $\effOuDiagonalMatrix = \mathrm{diag}(\effOuDiagonalMatrixElement{i})$, $\effOuDiagonalMatrixElement{i}<0$,
the matrix exponential is $\effOuRotMat e^{\effOuDiagonalMatrix \effOuEdgeLength{}}\effOuRotMat^{\effOuTranspose}$ and the Lyapunov equation decouples in the eigenbasis.
The dominant cost becomes rotating into and out of the eigenbasis.
The symmetry restriction is itself a substantive model assumption: it rules out oscillatory dynamics in which the equilibrating force has a rotational component.

\paragraph{Real Schur decomposition}
The real Schur decomposition $\effOuDriftMatrix=\effOuRotMat\effOuUpperQuasiTriangular\effOuRotMat^{\effOuTranspose}$, with $\effOuRotMat$ orthogonal and $\effOuUpperQuasiTriangular$ upper quasi-triangular, is a standard real representation of matrices with possibly complex eigenvalues \citep{schur_ueber_1909,francis_qr_1961}.
As a parametrization, a fixed Schur layout determines the number of complex-conjugate eigenvalue pairs in advance, introducing a discrete choice at the real--complex boundary.
Computationally, standard Schur-based matrix-exponential and Lyapunov algorithms use Parlett- and Bartels--Stewart-type recursions, respectively, which propagate through the upper quasi-triangular couplings in $\effOuUpperQuasiTriangular$ and incur cubic costs with substantial constants \citep{bartels_algorithm_1972,higham_functions_2008}.
Restricting to real normal matrices eliminates these couplings by reducing $\effOuUpperQuasiTriangular$ to independent $1\times1$ and $2\times2$ spectral blocks, but the block partition must still change when a real eigenvalue pair becomes complex.
Stability conditions for the Schur blocks are discussed in \suppref{suppeffOu.sec:alternativeParametrizationOutsideHssbpFamily}.


\paragraph{Generic dense, SVD, and QR parametrizations}
General nonsymmetric drift matrices can be parametrized directly through their entries or represented using generic decompositions such as SVD or QR factorizations. 
These parametrizations can represent every real matrix, and hence every real Hurwitz matrix.
However, Hurwitz stability cannot be enforced directly,
and neither the entrywise nor the factorized representations generally simplify matrix exponentials or Lyapunov solves.

\paragraph{Lyapunov-based parametrization}
An alternative way to enforce stability is to parametrize the drift through the Lyapunov identity.
Writing $\effOuLyapProd=\effOuDriftMatrix\effOuStatVar$, the relation $\effOuDriftMatrix\effOuStatVar+\effOuStatVar\effOuDriftMatrix^{\effOuTranspose}=-\effOuDiffMat$ becomes $\effOuLyapProd+\effOuLyapProd^{\effOuTranspose}=-\effOuDiffMat$.
Hence, introducing a skew-symmetric matrix $\effOuSkewSymmatrixMat^{\effOuTranspose}=-\effOuSkewSymmatrixMat$, we can express $\effOuLyapProd=-\tfrac{1}{2}\effOuDiffMat+\effOuSkewSymmatrixMat$ and then
$\effOuDriftMatrix=\left(-\tfrac{1}{2}\effOuDiffMat+\effOuSkewSymmatrixMat\right)\effOuStatVar^{-1}$.
This spans all OU processes with stationary covariance $\effOuStatVar$ and diffusion $\effOuDiffMat$, with $\effOuSkewSymmatrixMat$ encoding the non-reversible rotational component of the dynamics \citep{lelievre_optimal_2013,duncan_variance_2016}.
Although this construction enforces stability and satisfies the Lyapunov identity by construction, the resulting $\effOuDriftMatrix$ is generally nonnormal, so its matrix exponential and adjoint require costly dense matrix-function algorithms.
The representation also requires solves involving $\effOuStatVar$, which can become poorly conditioned when the prescribed stationary covariance is nearly singular.

\paragraph{Machine learning approaches} Related structured state-space parametrizations in machine learning, such as HiPPO and S4, also exploit spectral, diagonal, or low-rank structure for fast linear dynamics \citep{gu_hippo_2020, gu_efficiently_2022}. 
These methods, however, are primarily designed for efficient long-sequence convolution or recurrence rather than likelihood-based inference for continuous-time stochastic processes.

\section{The Hurwitz Smooth Spectral Block Parametrization}
\label{effOu.sec:SSBP}
Section~\ref{effOu.sec:landscape} highlights a central tension: computationally tractable stability-preserving parametrizations often restrict the class of admissible drift matrices, whereas more expressive representations retain expensive dense matrix operations.
We now introduce a new parametrization of the drift matrix that combines broad spectral flexibility with tractable stability constraints and matrix-function computations.
We first define a general representation based on a global change of basis and structured scalar or two-dimensional blocks, and then derive conditions ensuring Hurwitz stability.
We subsequently characterize its support and show how the block structure simplifies matrix exponentials and Lyapunov equations.

\subsection{Construction and definition}
\label{effOu.sec:smbp_def}
The construction rests on two observations.
First, $2\times2$ real blocks are the smallest components that can accommodate either two real eigenvalues or a complex-conjugate pair (\suppref{suppeffOu.sec:alternativeParametrizationsForTwodimensionalBlocks}).
Second, every real matrix diagonalizable over $\mathbb{C}$ admits a real invariant-subspace decomposition with blocks of size at most two, and such matrices form a dense, full-measure subset of the square matrices \citep{horn_matrix_2012,golub_matrix_2013}.
These observations motivate representing the drift as a global change of basis applied to a block-diagonal matrix.
Since we allow each block to have dimension at most two, its matrix exponential and the associated block-pair Lyapunov equations reduce to constant-size computations \citep{lancaster_algebraic_1995,higham_functions_2008}.

\begin{definition}[Smooth Spectral Block Parametrization]\label{effOu.def:SSBP}
A matrix $\effOuDriftMatrix \in \mathbb{R}^{\effOuDim \times \effOuDim}$ is represented under the \emph{smooth spectral block parametrization} (SSBP) if it can be written as

\vspace{-0.5in}

\begin{align}\label{effOu.eq:SSBP}
  \effOuDriftMatrix
  =
  \effOuRotMat\,
  \effOuBlkDiag\,
  \effOuRotMat^{-1},
\end{align}

\vspace{-0.1in}

\noindent where $\effOuRotMat \in GL_{\effOuDim}(\mathbb{R})$ is an invertible change-of-basis matrix and $\effOuBlkDiag=\operatorname{bdiag}(\effOuBlk{1},\ldots,\effOuBlk{\effNBlock})$ is block diagonal.
If $\effOuDim$ is even, all blocks $\effOuBlk{\effOuIdxB}$ are $2\times2$ matrices of the form
%
%
\begin{align}\label{effOu.eq:block_def}
    \effOuBlk{\effOuIdxB}
    =
    \effOuBlkDiag(\effOuCartBlockRate_{\effOuIdxB},\effOuCartNormalizingRate_{\effOuIdxB},\effOuBlkT{\effOuIdxB})
    :=
    \begin{pmatrix}
      \effOuCartBlockRate_{\effOuIdxB}
        &
      \effOuCartBlockRate_{\effOuIdxB}\effOuCartNormalizingRate_{\effOuIdxB}
      +
      \effOuBlkT{\effOuIdxB}
      \\
      \effOuCartBlockRate_{\effOuIdxB}\effOuCartNormalizingRate_{\effOuIdxB}
      -
      \effOuBlkT{\effOuIdxB}
        &
      \effOuCartBlockRate_{\effOuIdxB}
    \end{pmatrix},
\end{align}
for real $(\effOuCartBlockRate_{\effOuIdxB},\effOuCartNormalizingRate_{\effOuIdxB}, \effOuBlkT{\effOuIdxB})$.
If $\effOuDim$ is odd, one block is scalar, $\effOuBlk{\effOuIdxB}=\effOuCartBlockRate_{\effOuIdxB}$, while all others are $2\times2$ blocks of the form~\eqref{effOu.eq:block_def}.
\end{definition}

\paragraph{Smoothness at the real--complex boundary} For a $2\times2$ block $\effOuBlk{\effOuIdxB}$ parametrized as in Equation~\eqref{effOu.eq:block_def}, denote the product of the off-diagonal elements by
\[
  \effOuBlkDelta{\effOuIdxB}
  :=
  \effOuCartBlockRate_{\effOuIdxB}^{2}\effOuCartNormalizingRate_{\effOuIdxB}^{2}
  -
  \effOuBlkT{\effOuIdxB}^{2}.
\]
Then, the eigenvalues of $\effOuBlk{\effOuIdxB}$ are
\begin{align}\label{effOu.eq:eigenvalues}
  \lambda_{\effOuIdxB}^{\pm}
  =
  \effOuCartBlockRate_{\effOuIdxB}
  \pm
  \begin{cases}
    \sqrt{\effOuBlkDelta{\effOuIdxB}},
    & \effOuBlkDelta{\effOuIdxB}>0,
    \\
    0,
    & \effOuBlkDelta{\effOuIdxB}=0,
    \\
    \mathrm{i}\sqrt{-\effOuBlkDelta{\effOuIdxB}},
    & \effOuBlkDelta{\effOuIdxB}<0.
  \end{cases}
\end{align}
The sign of $\effOuBlkDelta{\effOuIdxB}$ determines whether $\effOuBlk{\effOuIdxB}$ has two real eigenvalues, a repeated real eigenvalue, or a complex-conjugate pair.
Although the eigenvalues themselves, viewed as functions of $(\effOuCartBlockRate_{\effOuIdxB},\effOuCartNormalizingRate_{\effOuIdxB},\effOuBlkT{\effOuIdxB})$, are not differentiable at $\effOuBlkDelta{\effOuIdxB}=0$, the entries of $\effOuBlk{\effOuIdxB}$ and analytic matrix functions of the block remain smooth in $(\effOuCartBlockRate_{\effOuIdxB},\effOuCartNormalizingRate_{\effOuIdxB},\effOuBlkT{\effOuIdxB})$.
The parametrization therefore \emph{crosses the real--complex boundary} without a discrete change of coordinates.

\subsection{Stability and support}
We now impose sufficient conditions on the block parameters to ensure Hurwitz stability and characterize the resulting matrix-space support.

\begin{proposition}
  \label{effOu.prop:stability}
A matrix $\effOuDriftMatrix$ represented under the SSBP is \textbf{Hurwitz stable} if

\vspace{-1.1cm}

\begin{align} \label{effOu.eq:stability}
  \effOuCartBlockRate_{\effOuIdxB}<0 \quad \text{and} \quad \effOuCartNormalizingRate_{\effOuIdxB}\in(-1,1),
\end{align}

\vspace{-0.4cm}

\noindent for each $2\times 2$ block $\effOuIdxB$.
When $\effOuDim$ is odd, we also require the scalar block to be negative.
\end{proposition}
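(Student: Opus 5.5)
Since $\effOuDriftMatrix=\effOuRotMat\effOuBlkDiag\effOuRotMat^{-1}$ with $\effOuRotMat$ invertible, $\effOuDriftMatrix$ and $\effOuBlkDiag$ are similar and have the same spectrum. Because $\effOuBlkDiag$ is block diagonal, its spectrum is the union of the spectra of the blocks $\effOuBlk{\effOuIdxB}$. The proof therefore reduces to showing that every block is Hurwitz stable under the stated constraints.

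For the scalar block (odd $\effOuDim$) the eigenvalue is the block entry itself, which is negative by assumption.

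For a $2\times2$ block we use the eigenvalue formula~\eqref{effOu.eq:eigenvalues} and split on the sign of $\effOuBlkDelta{\effOuIdxB}=\effOuCartBlockRate_{\effOuIdxB}^2\effOuCartNormalizingRate_{\effOuIdxB}^2-\effOuBlkT{\effOuIdxB}^2$.
\begin{itemize}
\item If $\effOuBlkDelta{\effOuIdxB}\le 0$, the real part of both eigenvalues is $\effOuCartBlockRate_{\effOuIdxB}<0$.
\item If $\effOuBlkDelta{\effOuIdxB}>0$, both eigenvalues are real. The larger one is $\effOuCartBlockRate_{\effOuIdxB}+\sqrt{\effOuBlkDelta{\effOuIdxB}}$.
\end{itemize}
This second case is the only step needing care. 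Since $\effOuBlkT{\effOuIdxB}^2\ge 0$, we have $\sqrt{\effOuBlkDelta{\effOuIdxB}}\le|\effOuCartBlockRate_{\effOuIdxB}|\,|\effOuCartNormalizingRate_{\effOuIdxB}|$. With $|\effOuCartNormalizingRate_{\effOuIdxB}|<1$ and $\effOuCartBlockRate_{\effOuIdxB}\neq0$, this gives the strict bound $\sqrt{\effOuBlkDelta{\effOuIdxB}}<|\effOuCartBlockRate_{\effOuIdxB}|=-\effOuCartBlockRate_{\effOuIdxB}$. Hence $\effOuCartBlockRate_{\effOuIdxB}+\sqrt{\effOuBlkDelta{\effOuIdxB}}<0$, and the smaller eigenvalue $\effOuCartBlockRate_{\effOuIdxB}-\sqrt{\effOuBlkDelta{\effOuIdxB}}$ is also negative.

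As a consistency check, one can confirm~\eqref{effOu.eq:eigenvalues} directly:
\[
\operatorname{tr}\effOuBlk{\effOuIdxB}=2\effOuCartBlockRate_{\effOuIdxB},\qquad
\det\effOuBlk{\effOuIdxB}=\effOuCartBlockRate_{\effOuIdxB}^2-\effOuBlkDelta{\effOuIdxB}=\effOuCartBlockRate_{\effOuIdxB}^2(1-\effOuCartNormalizingRate_{\effOuIdxB}^2)+\effOuBlkT{\effOuIdxB}^2.
\]
Under the constraints, the trace is negative and the determinant is strictly positive. By the Routh--Hurwitz criterion for $2\times2$ real matrices, this is equivalent to Hurwitz stability, giving a one-line alternative proof.

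Combining the per-block conclusions, every eigenvalue of $\effOuBlkDiag$, and hence of $\effOuDriftMatrix$, has strictly negative real part.
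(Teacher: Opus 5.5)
Your proof is correct, but it takes a different route from the paper's.

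\textbf{Your route.} You work directly with the block spectrum, $\lambda=\effOuCartBlockRate_{\effOuIdxB}\pm\sqrt{\effOuBlkDelta{\effOuIdxB}}$. In the real case you bound $\sqrt{\effOuBlkDelta{\effOuIdxB}}\le|\effOuCartBlockRate_{\effOuIdxB}\effOuCartNormalizingRate_{\effOuIdxB}|<|\effOuCartBlockRate_{\effOuIdxB}|$, or equivalently you check that the trace is negative and the determinant positive.

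\textbf{The paper's route.} The paper never computes the eigenvalues of the block. It notes that $(\effOuBlk{\effOuIdxB}+\effOuBlk{\effOuIdxB}^{\effOuTranspose})/2$ equals $\effOuCartBlockRate_{\effOuIdxB}$ times the matrix with unit diagonal and off-diagonal entries $\effOuCartNormalizingRate_{\effOuIdxB}$. That matrix's eigenvalues $\effOuCartBlockRate_{\effOuIdxB}(1\pm\effOuCartNormalizingRate_{\effOuIdxB})$ are negative. It then uses the standard fact that a real matrix with negative-definite symmetric part has its spectrum in the open left half-plane: for an eigenpair $(\lambda,x)$, $\mathrm{Re}\,\lambda\,\|x\|^2=x^{*}\tfrac12(\effOuBlk{\effOuIdxB}+\effOuBlk{\effOuIdxB}^{\effOuTranspose})x<0$.

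\textbf{What yours buys.} Your trace--determinant version is more elementary and is actually sharper as a stability criterion. It shows a block is Hurwitz if and only if $\effOuCartBlockRate_{\effOuIdxB}<0$ and $\effOuCartBlockRate_{\effOuIdxB}^2(1-\effOuCartNormalizingRate_{\effOuIdxB}^2)+\effOuBlkT{\effOuIdxB}^2>0$. This makes plain that $|\effOuCartNormalizingRate_{\effOuIdxB}|<1$ is sufficient but not necessary when $\effOuBlkT{\effOuIdxB}\neq0$.

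\textbf{What the paper's buys.} It identifies exactly what $|\effOuCartNormalizingRate_{\effOuIdxB}|<1$ encodes: strict dissipativity of the block, i.e.\ the condition $|b+c|<-2a$ defining the family $\mathcal H$ in the supplement. That stronger property is the one reused later. Because it is preserved under orthogonal similarity, it gives the corollary that every orthogonal H-SSBP matrix has negative-definite symmetric part. That corollary in turn underpins the certified RMSE lower bound in the shear simulation. Your spectral argument establishes stability but not dissipativity, so it would not by itself deliver those downstream results.
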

\noindent
\begin{proof}
For a two-dimensional block,
\begin{align}
\frac{\effOuBlk{\effOuIdxB}+\effOuBlk{\effOuIdxB}^{\effOuTranspose}}{2}
=
\effOuCartBlockRate_{\effOuIdxB}
\begin{pmatrix}
1 & \effOuCartNormalizingRate_{\effOuIdxB}\\
\effOuCartNormalizingRate_{\effOuIdxB} & 1
\end{pmatrix}.
\end{align}
Its eigenvalues are $\effOuCartBlockRate_{\effOuIdxB}(1+\effOuCartNormalizingRate_{\effOuIdxB})$ and $\effOuCartBlockRate_{\effOuIdxB}(1-\effOuCartNormalizingRate_{\effOuIdxB})$.
Under $\effOuCartBlockRate_{\effOuIdxB}<0$ and $|\effOuCartNormalizingRate_{\effOuIdxB}|<1$, both are negative, so the symmetric part of $\effOuBlk{\effOuIdxB}$ is negative definite.
Every eigenvalue of $\effOuBlk{\effOuIdxB}$ therefore has negative real part.
The scalar block is Hurwitz when it is negative, block-diagonal assembly preserves stability, and similarity by $\effOuRotMat$ preserves eigenvalues.
\end{proof}
%
\begin{remark}[Controlling oscillations]
  The unconstrained parameter $\effOuBlkT{\effOuIdxB}$ controls transitions between real and complex regimes without approaching the stability boundary.
\end{remark}

\noindent We can now define the new stable matrix parametrization and study its support over the set of stable matrices.

\begin{definition}[Hurwitz Smooth Spectral Block Parametrization]\label{effOu.def:H-SSBP}
A matrix $\effOuDriftMatrix \in \mathbb{R}^{\effOuDim \times \effOuDim}$ is represented under the \textbf{Hurwitz smooth spectral block parametrization} (\textbf{H-SSBP}) if it belongs to the SSBP family and its block parameters satisfy the conditions from Proposition~\ref{effOu.prop:stability}.
\end{definition}

\begin{proposition}\label{prop:smbp_support}
The following hold for H-SSBP matrices.
\begin{itemize}
        \item[i.] (Full spectral support) Scalar blocks can realize arbitrary negative real eigenvalues; $2\times2$ blocks can realize any stable real or complex-conjugate eigenvalue pair.
    \item[ii.] (Dense matrix-space support) Allowing $\effOuRotMat$ to range over all invertible matrices, the H-SSBP contains every real Hurwitz matrix diagonalizable over $\mathbb C$.
    It also contains defective structures whose nontrivial Jordan blocks have length two and are associated with real eigenvalues.
    Consequently, its image is \emph{dense} in the real Hurwitz cone in the Euclidean topology inherited from $\mathbb{R}^{\effOuDim\times\effOuDim}$ and has full Lebesgue measure within that cone.
\end{itemize}
\end{proposition}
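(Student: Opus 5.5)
We prove (i) by explicit parameter choices and (ii) by reducing to real block-diagonal forms of Hurwitz matrices that are diagonalizable over $\mathbb C$; density and full measure then follow from standard facts about matrices with distinct eigenvalues.

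For (i), the scalar case is immediate: any $\lambda<0$ is realized by setting the scalar block to $\lambda$. For a $2\times2$ block, Equation~\eqref{effOu.eq:eigenvalues} gives $\lambda^\pm=\effOuCartBlockRate_{\effOuIdxB}\pm\sqrt{\effOuBlkDelta{\effOuIdxB}}$ with $\effOuBlkDelta{\effOuIdxB}=\effOuCartBlockRate_{\effOuIdxB}^2\effOuCartNormalizingRate_{\effOuIdxB}^2-\effOuBlkT{\effOuIdxB}^2$. Treat the two spectral regimes separately.
\begin{itemize}
\item \emph{Complex pair.} For a target $a\pm\mathrm{i}b$ with $a<0$, take $\effOuCartBlockRate_{\effOuIdxB}=a$, $\effOuCartNormalizingRate_{\effOuIdxB}=0$, $\effOuBlkT{\effOuIdxB}=b$.
\item \emph{Real pair.} For a target $\mu_1\le\mu_2<0$, take $\effOuCartBlockRate_{\effOuIdxB}=(\mu_1+\mu_2)/2$ and require $\sqrt{\effOuBlkDelta{\effOuIdxB}}=(\mu_2-\mu_1)/2$. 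This is solvable with $\effOuBlkT{\effOuIdxB}=0$ and $\effOuCartNormalizingRate_{\effOuIdxB}=(\mu_2-\mu_1)/(2|\effOuCartBlockRate_{\effOuIdxB}|)$. This value satisfies $|\effOuCartNormalizingRate_{\effOuIdxB}|<1$ exactly because $\mu_2<0$, i.e.\ $\mu_2-\mu_1<-(\mu_1+\mu_2)$.
\end{itemize}
This verification, showing the constraint $|\effOuCartNormalizingRate_{\effOuIdxB}|<1$ is not restrictive, is the main point to check. Alternatively, for real pairs one can keep any $\effOuCartNormalizingRate_{\effOuIdxB}$ and use $\effOuBlkT{\effOuIdxB}$ only to lower $\effOuBlkDelta{\effOuIdxB}$, but the choice above is cleanest.

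For (ii), let $\effOuDriftMatrix$ be Hurwitz and diagonalizable over $\mathbb C$. Its eigenvalues are real negative numbers $\mu_j$ and conjugate pairs $a_k\pm\mathrm{i}b_k$ with $a_k<0$.
\begin{itemize}
\item Pair each complex eigenvector $v=x+\mathrm{i}y$ with its conjugate; the real span of $\{x,y\}$ is invariant, with restriction $\begin{pmatrix}a&b\\-b&a\end{pmatrix}$. By (i), this is exactly an H-SSBP block with $\effOuCartNormalizingRate=0$, $\effOuBlkT=b$.
\item Group the real eigenvalues into pairs, leaving one unpaired if $\effOuDim$ is odd (a parity count shows this works). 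Each pair $(\mu_1,\mu_2)$ spans an invariant plane on which $\effOuDriftMatrix$ acts as $\operatorname{diag}(\mu_1,\mu_2)$. By (i) there is an H-SSBP block $\effOuBlk{}$ with the same eigenvalues. When $\mu_1\neq\mu_2$ it is diagonalizable, so it is similar over $\mathbb R$ to $\operatorname{diag}(\mu_1,\mu_2)$. When $\mu_1=\mu_2$, the real-pair construction gives $\effOuCartNormalizingRate=0,\effOuBlkT=0$, so the block is already $\mu I$.
\end{itemize}
Assembling the real eigenbases and the local $2\times2$ similarities yields an invertible $\effOuRotMat$ with $\effOuDriftMatrix=\effOuRotMat\effOuBlkDiag\effOuRotMat^{-1}$.

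For the defective claim, take $\effOuCartNormalizingRate\neq0$ and $\effOuBlkT=\pm\effOuCartBlockRate\effOuCartNormalizingRate$. Then $\effOuBlkDelta{}=0$ but the block is not scalar, so it is a single $2\times2$ Jordan block at $\effOuCartBlockRate<0$. Placing such blocks alongside diagonalizable ones and conjugating realizes every Hurwitz matrix whose nontrivial Jordan blocks are real and of length two.

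Finally, density and full measure follow from the diagonalizable case.
\begin{itemize}
\item Matrices with distinct eigenvalues are diagonalizable over $\mathbb C$. Their complement in $\mathbb R^{\effOuDim\times\effOuDim}$ is the zero set of the discriminant of the characteristic polynomial, a nonzero polynomial in the entries, hence a closed Lebesgue-null set with empty interior.
\item The Hurwitz cone is open, since eigenvalues depend continuously on the entries. Intersecting it with the complement of this null set therefore gives a dense subset of full measure in the cone.
\end{itemize}
The H-SSBP image contains this set, which gives both claims.
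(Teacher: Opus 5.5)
Your proof is correct and follows the paper's own route. It uses the same explicit block parameters for part (i) (your $\effOuCartNormalizingRate$ differs from the paper's only in sign), a real invariant-subspace decomposition for the diagonalizable case, a zero-discriminant block with nonzero hollow residual for length-two real Jordan blocks, and the null zero set of the characteristic-polynomial discriminant for density and full measure.

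You also make explicit two points the paper leaves implicit: real eigenvalues must be paired into $2\times2$ blocks by parity, since at most one scalar block is allowed, and a semisimple repeated real eigenvalue needs the block $\mu I_2$. The only slip is your parenthetical alternative, where $\effOuCartNormalizingRate$ cannot be arbitrary but must satisfy $|\effOuCartBlockRate\effOuCartNormalizingRate|\ge(\mu_2-\mu_1)/2$.
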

\begin{corollary}[Orthogonal H-SSBP support]\label{cor:orthogonal_smbp_support}
If $\effOuRotMat$ is orthogonal, the H-SSBP represents the stable matrices that are orthogonally block-decomposable into H-SSBP scalar and $2\times2$ components. 
Every such matrix has negative-definite symmetric part.
\end{corollary}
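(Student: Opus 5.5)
The plan is to argue directly from Definition~\ref{effOu.def:H-SSBP}, using the fact that for orthogonal $\effOuRotMat$ the similarity $\effOuRotMat\,\cdot\,\effOuRotMat^{-1}$ is also a congruence. Restricted to orthogonal $\effOuRotMat$, the representation reads
\[
\effOuDriftMatrix=\effOuRotMat\effOuBlkDiag\effOuRotMat^{\effOuTranspose},
\qquad
\effOuRotMat^{-1}=\effOuRotMat^{\effOuTranspose}.
\]

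\textbf{Step 1: characterizing the image.} This step is essentially a restatement and I will prove it by two inclusions.
\begin{itemize}
\item[(a)] If $\effOuDriftMatrix=\effOuRotMat\effOuBlkDiag\effOuRotMat^{\effOuTranspose}$ with $\effOuRotMat$ orthogonal and the blocks of $\effOuBlkDiag$ obeying \eqref{effOu.eq:stability}, then $\effOuDriftMatrix$ is by definition orthogonally block-decomposable into H-SSBP components. It is Hurwitz stable by Proposition~\ref{effOu.prop:stability}.
\item[(b)] Conversely, suppose a stable matrix admits an orthogonal $\effOuRotMat$ with $\effOuRotMat^{\effOuTranspose}\effOuDriftMatrix\effOuRotMat$ block diagonal, and each block is a negative scalar or a $2\times2$ block of the form \eqref{effOu.eq:block_def} with $\effOuCartBlockRate_{\effOuIdxB}<0$ and $|\effOuCartNormalizingRate_{\effOuIdxB}|<1$. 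Then this $\effOuRotMat$ exhibits it as an orthogonal H-SSBP matrix.
\end{itemize}
The only care needed is about block ordering when $\effOuDim$ is odd. Permutation matrices are orthogonal, so they can be absorbed into $\effOuRotMat$, and the scalar block can be placed wherever Definition~\ref{effOu.def:SSBP} requires.

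\textbf{Step 2: negative-definite symmetric part.} Since $\effOuRotMat^{\effOuTranspose}=\effOuRotMat^{-1}$,
\[
\tfrac12\bigl(\effOuDriftMatrix+\effOuDriftMatrix^{\effOuTranspose}\bigr)
=\effOuRotMat\,\tfrac12\bigl(\effOuBlkDiag+\effOuBlkDiag^{\effOuTranspose}\bigr)\,\effOuRotMat^{\effOuTranspose}.
\]
This is a congruence by an invertible matrix, so by Sylvester's law of inertia it preserves definiteness. The matrix $\tfrac12(\effOuBlkDiag+\effOuBlkDiag^{\effOuTranspose})$ is block diagonal, and its blocks are of two kinds:
\begin{itemize}
\item the $2\times2$ blocks $\effOuCartBlockRate_{\effOuIdxB}\begin{pmatrix}1&\effOuCartNormalizingRate_{\effOuIdxB}\\ \effOuCartNormalizingRate_{\effOuIdxB}&1\end{pmatrix}$, which are negative definite, as already shown in the proof of Proposition~\ref{effOu.prop:stability};
\item the negative scalar block, if $\effOuDim$ is odd.
\end{itemize}
A block-diagonal matrix with negative-definite blocks is negative definite, which gives the claim.

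\textbf{Main point of care: the restriction is genuine.} I expect the only subtle aspect to be explaining why orthogonality matters, i.e.\ why this is a real loss relative to Proposition~\ref{prop:smbp_support}. For general invertible $\effOuRotMat$ the identity above fails, because $\effOuDriftMatrix^{\effOuTranspose}=\effOuRotMat^{-\effOuTranspose}\effOuBlkDiag^{\effOuTranspose}\effOuRotMat^{\effOuTranspose}$ is not conjugated by the same matrix as $\effOuDriftMatrix$. Consequently, Hurwitz matrices with indefinite symmetric part, such as
\[
\begin{pmatrix}-1&10\\0&-1\end{pmatrix},
\]
lie outside the orthogonal H-SSBP image even though the full H-SSBP covers them. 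I would add this example as a remark to show that the inclusion in the corollary is strict.
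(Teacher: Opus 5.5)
Your proof is correct and is essentially the argument the paper relies on; the paper states the corollary without a separate proof. The first claim is definitional, and the negative-definite symmetric part follows because, for orthogonal $\effOuRotMat$, the symmetric part of $\effOuRotMat\effOuBlkDiag\effOuRotMat^{\effOuTranspose}$ is the congruence $\effOuRotMat\,\tfrac12(\effOuBlkDiag+\effOuBlkDiag^{\effOuTranspose})\,\effOuRotMat^{\effOuTranspose}$ of the blockwise negative-definite matrix already shown in the proof of Proposition~\ref{effOu.prop:stability}. Your Jordan-block example correctly shows the restriction is genuine: the invertible-basis H-SSBP reaches it through a block with zero discriminant, while its indefinite symmetric part excludes it from the orthogonal image.
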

The proposition separates block-level spectral support from matrix-space support.
At the block level, the H-SSBP can realize any stable real or complex eigenvalue configuration.
At the matrix level, the unrestricted invertible-basis version has dense, full-measure support in the Hurwitz cone, while excluding longer real Jordan chains and defective complex eigenvalue structures requiring real blocks larger than two.
The proof is presented in \suppref{suppeffOu.ssbp_support}.
The orthogonal H-SSBP trades expressiveness for improved conditioning since $\effOuRotMat^{-1}=\effOuRotMat^{\effOuTranspose}$ \citep{golub_matrix_2013}.
It contains all real normal Hurwitz matrices, obtained when each two-dimensional block satisfies either $\effOuCartNormalizingRate_{\effOuIdxB}=0$ or $\effOuBlkT{\effOuIdxB}=0$.
More generally, allowing both parameters to be nonzero yields nonnormal dynamics within the corresponding invariant subspace.
See the \suppref{suppeffOu.orthogonalSpecialization} for further details on the orthogonal H-SSBP.

\subsection{Block exponential and Lyapunov primitives}
\label{effOu.sec:smbp_computationalAdvantages}

\paragraph{Matrix exponentials}
The H-SSBP factorization reduces computation of the edge actualization matrix to

\vspace{-1.3cm}

\begin{align}\label{effOu.eq:act_factored}
  \effOuActual{}
  = e^{\effOuDriftMatrix\effOuEdgeLength{}} = 
  \effOuRotMat
  e^{\effOuBlkDiag\effOuEdgeLength{}}
  \effOuRotMat^{-1}.
\end{align}
Since $\effOuBlkDiag$ is block diagonal, the exponential decomposes into independent scalar and $2\times2$ block exponentials.
The equal-diagonal form of each $2\times2$ block can be further exploited to derive more efficient closed-form expressions (see \suppref{suppeffOu.sec:forward_exp}).
Evaluating all scalar and $2\times2$ blocks of $\effOuExpMatrix{\effOuBlkDiag\effOuEdgeLength{}}$ requires $\mathcal{O}(\effOuDim)$ work, so the dominant cost, when required, is the change of basis rather than a general dense matrix-exponential computation, such as one based on a Schur decomposition or Pad\'e approximation.

\paragraph{Lyapunov equation}
The same block structure reduces the stationary covariance computation.
Define block-basis quantities for the diffusion covariance and stationary covariance: $\effOuDiffMatRot=\effOuRotMat^{-1}\effOuDiffMat\effOuRotMat^{-\effOuTranspose}$ and $\effOuStatVarRot=\effOuRotMat^{-1}\effOuStatVar\effOuRotMat^{-\effOuTranspose}$.
Substituting $\effOuDriftMatrix=\effOuRotMat\effOuBlkDiag\effOuRotMat^{-1}$ into Equation~\eqref{effOu.eq:lyapunovEquation} yields

\vspace{-1cm}

\begin{align}\label{effOu.eq:lyap_D_basis_main}
\effOuBlkDiag\effOuStatVarRot
+
\effOuStatVarRot\effOuBlkDiag^{\effOuTranspose}
=
-\effOuDiffMatRot.
\end{align}
Because $\effOuBlkDiag$ is block diagonal, this equation decouples over block pairs.
For each pair $(\effOuIdxB,\effOuIdxBTwo)$,
\begin{align}\label{effOu.eq:block_lyap_pair_main}
\effOuBlk{\effOuIdxB}
\effOuStatVarRot_{\effOuIdxB\effOuIdxBTwo}
+
\effOuStatVarRot_{\effOuIdxB\effOuIdxBTwo}
\effOuBlk{\effOuIdxBTwo}^{\effOuTranspose}
=
-\effOuDiffMatRot_{\effOuIdxB\effOuIdxBTwo},
\end{align}
where $\effOuStatVarRot_{\effOuIdxB\effOuIdxBTwo}$ and $\effOuDiffMatRot_{\effOuIdxB\effOuIdxBTwo}$ denote the submatrices with row block $\effOuIdxB$ and column block $\effOuIdxBTwo$.
This is a Sylvester equation with at most four scalar unknowns.
There are $\mathcal{O}(\effOuDim^{2})$ constant-size block pairs, so solving for $\effOuStatVarRot$ costs $\mathcal{O}(\effOuDim^{2})$ and is embarrassingly parallel.
When the stationary covariance is required in the original basis, the dominant cost is the cubic but low-constant change of basis.
This replaces high-constant cubic general-purpose Lyapunov solves with parallel constant-size block computations.
In \suppref{suppeffOu.sec:forward_lyap}, we derive an algorithm to solve these decoupled Lyapunov equations by exploiting the structure of the block parametrization \eqref{effOu.eq:block_def}.

\section{Exact Likelihoods and Gradient-Based Inference}
\label{effOu.sec:adjoints}
Gradient-based maximum-likelihood and Bayesian methods require derivatives of the exact marginal log-likelihood $\effLogLik$ with respect to a potentially large collection of coupled parameters \citep{duane_hybrid_1987,nocedal_numerical_2006,neal_mcmc_2011}.
Reverse-mode differentiation, often called the adjoint method, is particularly appropriate when a scalar objective depends on many parameters \citep{griewank_evaluating_2008,baydin_automatic_2018,margossian_review_2019}.
In this framework, a forward pass first evaluates $\effLogLik$ and the intermediate quantities on which it depends.
A reverse pass then propagates an adjoint for each intermediate quantity.
For a vector-valued intermediate $\mathbf z$ and a matrix-valued intermediate $\mathbf Z$, we define their adjoints by the first-order differential

\vspace{-1.3cm}

\begin{align}
\dd\effLogLik
&=
\overline{\mathbf z}^{\effOuTranspose}\dd\mathbf z
+
\left\langle
\overline{\mathbf Z},
\dd\mathbf Z
\right\rangle
+
\cdots,
&
\left\langle\mathbf U,\mathbf V\right\rangle
&=
\tr\!\left(\mathbf U^{\effOuTranspose}\mathbf V\right).
\end{align}
Thus, an adjoint records the first-order sensitivity of the final scalar objective to a perturbation of the corresponding intermediate quantity.
More generally, suppose that an intermediate quantity is obtained from a differentiable map $\effOuAdjointIntroOutput= f(\effOuAdjointIntroInput)$ whose differential is $\dd\effOuAdjointIntroOutput=\mathrm{D} f(\effOuAdjointIntroInput)[\dd\effOuAdjointIntroInput]$.
Reverse-mode differentiation does not form the full derivative operator.
Instead, given the upstream adjoint $\overline{\effOuAdjointIntroOutput}$, it applies the adjoint of the linearized map under the corresponding Euclidean or Frobenius inner products

\vspace{-1.1cm}

\begin{align}
\left\langle
\overline{\effOuAdjointIntroOutput},
\mathrm D f(\effOuAdjointIntroInput)[\dd\effOuAdjointIntroInput]
\right\rangle
&=
\left\langle
[\mathrm{D} f(\effOuAdjointIntroInput)]^{*}
[\overline{\effOuAdjointIntroOutput}],
\dd\effOuAdjointIntroInput
\right\rangle.
\end{align}
The quantity $[\mathrm{D} f(\effOuAdjointIntroInput)]^{*}[\overline{\effOuAdjointIntroOutput}]$ is the pulled-back contribution to the adjoint of $\effOuAdjointIntroInput$.
We refer to the upstream adjoint $\overline{\effOuAdjointIntroOutput}$ supplied to a reverse-mode primitive as its \textit{adjoint seed}.
Because the objective is scalar, one reverse sweep supplies derivatives with respect to all input parameters, rather than propagating a separate forward sensitivity for each parameter.

In the OU model, the two drift-dependent numerical primitives are the matrix exponential and the Lyapunov equation.
Their reverse-mode adjoints are a major bottleneck in generic automatic-differentiation implementations \citep{baydin_automatic_2018}, because they are cubic operations with heavy constant factors \citep{najfeld_derivatives_1995,golub_matrix_2013}.
Under the H-SSBP, these reverse-mode adjoints decompose after a change of basis into cheap and parallelizable independent constant-size block or block-pair calculations.

\subsection{Gaussian message passing and predictive moments}
\label{effOu.sec:lik_rep}
For fixed OU dimension $\effOuDim$, the marginal likelihood is evaluated in time linear in the number of nodes by Gaussian upward message passing
\citep{lauritzen_local_1988,lauritzen_graphical_1996,sarkka_bayesian_2013,ho_lineartime_2014,bastide_efficient_2021}.
Let $\effOudataVector=(\effOudataBelow{\effOuchildNode},\effOudataAbove{\effOuchildNode})$ partition the observations into those within and outside the subtree rooted at $\effOuchildNode$, respectively. 
Denote by $f_{\effOuEdgeLength{\effOuchildNode}}(\effOustate_{\effOuchildNode} \mid \effOustate_{\effOuparentNode})$ the transition kernel along the edge $\effOuparentNode \to \effOuchildNode$ defined in Equation~\eqref{effOu.eq:transition}.
Let $\effOupostOrder{\effOuchildNode}(\effOustate)$ be the conditional likelihood of the observations in the subtree rooted at $\effOuchildNode$ given $\effOuDataRv{\effOuchildNode}=\effOustate$; it combines the observation factor $\effOuObsFactor{\effOuchildNode}(\effOudata{\effOuchildNode}\mid\effOustate)$ at $\effOuchildNode$ with the messages arriving from its children:
\begin{align}
\effOupostOrder{\effOuchildNode}(\effOustate) = \condprob{\effOudataBelow{\effOuchildNode}}{\effOuDataRv{\effOuchildNode} = \effOustate}
&=
\effOuObsFactor{\effOuchildNode}(\effOudata{\effOuchildNode}\mid\effOustate)
\prod_{\effOuchildNodeOne \in \text{Child}[\effOuchildNode]}
\underbrace{
\Biggl[
\int_{\mathbb{R}^{\effOuDim}}
f_{\effOuEdgeLength{\effOuchildNodeOne}}(\effOustate' \mid \effOustate)
\effOupostOrder{\effOuchildNodeOne}(\effOustate')
\, \dd \effOustate'
\Biggr].
}_{\text{Message (up) } \effOuchildNodeOne \to \effOuchildNode }
\end{align}
Each message in the product accumulates evidence from one child subtree; contributions from distinct children multiply by conditional independence.

The upward pass accounts only for observations within each subtree.
To combine this information with observations outside the subtree rooted at a non-root node $\effOuchildNode$, we need to perform a second, downward pass.
Denote the factor carrying information to the state at $\effOuparentNode$ from the portion of the tree above it as
\begin{align}
\effOuParentSide{\effOuparentNode}(\effOustate)
&=
\begin{cases}
\effOurootDistribution(\effOustate),
& \effOuparentNode=\effRoot,
\\[4pt]
\displaystyle
\smash[b]{
\underbrace{
\int_{\mathbb{R}^{\effOuDim}}
\effOupreOrder{\effOuparentNode}(\effOustate')\,
f_{\effOuEdgeLength{\effOuparentNode}}
(\effOustate\mid\effOustate')\,
\dd \effOustate'
}_{\text{Message (down) to parent} 
}
},
& \effOuparentNode\neq\effRoot,
\end{cases}
\vphantom{
\underbrace{
\int_{\mathbb{R}^{\effOuDim}}
\effOupreOrder{\effOuparentNode}(\effOustate')\,
f_{\effOuEdgeLength{\effOuparentNode}}
(\effOustate\mid\effOustate')\,
\dd \effOustate'
}_{\text{Message (down) to parent} 
}
}
\end{align}

\vspace{0.6cm}

\noindent where $\effOurootDistribution$ is the root distribution. 
The downward-message function $\effOupreOrder{\effOuchildNode}(\effOustate) =
\prob{\effOudataAbove{\effOuchildNode},\,\effOuDataRv{\effOuparentNode} = \effOustate}$ aggregates information from the parent side of the edge ending at node $\effOuchildNode$ with its observation factor and the upward messages from the siblings of $\effOuchildNode$:
\begin{align}
\effOupreOrder{\effOuchildNode}(\effOustate)
&=
\effOuObsFactor{\effOuparentNode}
(\effOudata{\effOuparentNode}\mid\effOustate)\,\, 
\effOuParentSide{\effOuparentNode}(\effOustate)
\prod_{\effOuchildNodeOne\in
\text{Child}[\effOuparentNode]\setminus\{\effOuchildNode\}}
\underbrace{
\left[
\int_{\mathbb{R}^{\effOuDim}}
f_{\effOuEdgeLength{\effOuchildNodeOne}}
(\effOustate'\mid\effOustate)
\effOupostOrder{\effOuchildNodeOne}(\effOustate')
\,\dd\effOustate'
\right]
}_{\text{Message (up) }\effOuchildNodeOne\to\effOuparentNode}.
\label{effOu.eq:preorder_pruning}
\end{align}
For any edge ending at node $\effOuchildNode$, the marginal likelihood can be recovered by the edgewise contraction between the upward and downward messages
\begin{align}
   \effOumyFunc 
&=
\int_{\mathbb{R}^{\effOuDim}}
\effOupostOrder{\effOuchildNode}
\left(
\effOustate_{\effOuchildNode}
\right)
\int_{\mathbb{R}^{\effOuDim}}
f_{\effOuEdgeLength{\effOuchildNode}}
\left(
\effOustate_{\effOuchildNode}
\mid
\effOustate_{\effOuparentNode}
\right)
\effOupreOrder{\effOuchildNode}
\left(
\effOustate_{\effOuparentNode}
\right)
\,\dd\effOustate_{\effOuparentNode}
\,\dd\effOustate_{\effOuchildNode}.
\label{effOu.eq:suppedgeLikelihoodIdentity}
\end{align}
Now consider an edge-specific parameter $\effOuedgeParameter{\effOuchildNode}$ that affects only the transition density on the edge $\effOuparentNode\to\effOuchildNode$.
Under the usual regularity conditions allowing differentiation under the integral sign, differentiating the log-likelihood $\effLogLik = \log \effOumyFunc$ with respect to \(\effOuedgeParameter{\effOuchildNode}\) gives
\begin{align}
\myPartial{\effLogLik}{\effOuedgeParameter{\effOuchildNode}}
&=
\frac{1}{\effOumyFunc}
\int_{\mathbb{R}^{\effOuDim}}
\effOupostOrder{\effOuchildNode}
\left(
\effOustate_{\effOuchildNode}
\right)
\int_{\mathbb{R}^{\effOuDim}}
\myPartial{}{\effOuedgeParameter{\effOuchildNode}}
\Big[
f_{\effOuEdgeLength{\effOuchildNode}}
\left(
\effOustate_{\effOuchildNode}
\mid
\effOustate_{\effOuparentNode}
\right)
\Big]
\effOupreOrder{\effOuchildNode}
\left(
\effOustate_{\effOuparentNode}
\right)
\,\dd\effOustate_{\effOuparentNode}
\,\dd\effOustate_{\effOuchildNode}.
\label{effOu.eq:suppedgeLikelihoodDerivative}
\end{align}
Because the likelihood is linear in each child-to-parent message, the corresponding upstream adjoint of the log-likelihood is the associated downward-message function, divided by $\effOumyFunc$.
Normalizing $\effOupreOrder{\effOuIdx}$ over the parent state gives the edge-specific parent-side distribution
\begin{align}
\effOuDataRv{\effOuParentNodeOperator{\effOuIdx}}
\mid
\effOudataAbove{\effOuIdx}
\sim
\effOuNorm
\left(
\effOuParMean{\effOuIdx},
\effOuParVar{\effOuIdx}
\right).
\end{align}
Propagating this distribution through the OU transition yields the formulas for the child-side predictive moments
\begin{align}
\effOuAlgMean{\effOuIdx}
&=
\effOuActual{\effOuIdx}
\effOuParMean{\effOuIdx}
+
\left(
\effOuIdentity-\effOuActual{\effOuIdx}
\right)
\effOuEquilMeanEdge{\effOuIdx},
\label{effOu.eq:preorder_mean}
\\
\effOuAlgPrecision{\effOuIdx}^{-1}
&=
\effOuStatVarEdge{\effOuIdx}
+
\effOuActual{\effOuIdx}
\left(
\effOuParVar{\effOuIdx}
-
\effOuStatVarEdge{\effOuIdx}
\right)
\effOuActual{\effOuIdx}^{\effOuTranspose}.
\label{effOu.eq:preorder_prop}
\end{align}
These moments provide the finite-dimensional Gaussian representation of the transition integral and the adjoint interface used below.

\subsection{Edge-local adjoint seeds for OU primitives}
\label{effOu.sec:grad_framework}
For an edge ending at node $\effOuIdx$, the Gaussian message-passing reverse pass supplies the upstream adjoints with respect to the child-side predictive mean and covariance.
Denote these adjoints by
$\effOuGradVec{\effOuIdx}
=
\partial\effLogLik/\partial\effOuAlgMean{\effOuIdx}$
and
$\effOuUpAdj{\effOuIdx}
=
\partial\effLogLik/\partial(\effOuAlgPrecision{\effOuIdx}^{-1})$,
respectively.
For an edge-local parameter $\effOuedgeParameter{\effOuIdx}$ entering the OU transition, the chain rule gives
\begin{align}
\myPartial{\effLogLik}{\effOuedgeParameter{\effOuIdx}}
&=
\effOuGradVec{\effOuIdx}^{\effOuTranspose}
\myPartial{\effOuAlgMean{\effOuIdx}}
{\effOuedgeParameter{\effOuIdx}}
+
\left\langle
\effOuUpAdj{\effOuIdx},
\myPartial{}
{\effOuedgeParameter{\effOuIdx}}
\left(
\effOuAlgPrecision{\effOuIdx}^{-1}
\right)
\right\rangle.
\label{effOu.eq:chain_rule_direct}
\end{align}
Within the edge-local pullback, the parent-side moments $\effOuParMean{\effOuIdx}$ and $\effOuParVar{\effOuIdx}$
are held fixed, with their parameter dependence handled by the surrounding Gaussian message-passing reverse pass.
The drift enters Equations~\eqref{effOu.eq:preorder_mean}--\eqref{effOu.eq:preorder_prop} through the actualization matrix
$\effOuActual{\effOuIdx}
=
\effOuExpMatrix{
\effOuDriftMatrixEdge{\effOuIdx}
\effOuEdgeLength{\effOuIdx}
}$
and through the stationary covariance $\effOuStatVarEdge{\effOuIdx}$, which is defined implicitly by the Lyapunov equation.
Holding the parent-side moments fixed within the edge-local differential, the predictive mean and covariance satisfy
\begin{align}
\dd\effOuAlgMean{\effOuIdx}
&=
\left(
\dd\effOuActual{\effOuIdx}
\right)
\left(
\effOuParMean{\effOuIdx}
-
\effOuEquilMeanEdge{\effOuIdx}
\right),
\label{effOu.eq:mean_differential_main}
\\
\dd\left(
\effOuAlgPrecision{\effOuIdx}^{-1}
\right)
&=
\left(
\dd\effOuActual{\effOuIdx}
\right)
\left(
\effOuParVar{\effOuIdx}
-
\effOuStatVarEdge{\effOuIdx}
\right)
\effOuActual{\effOuIdx}^{\effOuTranspose}
\notag\\
&\qquad\qquad+
\effOuActual{\effOuIdx}
\left(
\effOuParVar{\effOuIdx}
-
\effOuStatVarEdge{\effOuIdx}
\right)
\left(
\dd\effOuActual{\effOuIdx}
\right)^{\effOuTranspose}
+
\dd\effOuStatVarEdge{\effOuIdx}
-
\effOuActual{\effOuIdx}
\left(
\dd\effOuStatVarEdge{\effOuIdx}
\right)
\effOuActual{\effOuIdx}^{\effOuTranspose}.
\label{effOu.eq:covariance_differential_main}
\end{align}
Since $\effOuUpAdj{\effOuIdx}$ and
$(\effOuParVar{\effOuIdx}-\effOuStatVarEdge{\effOuIdx})$
are symmetric, contracting Equations~\eqref{effOu.eq:mean_differential_main}--\eqref{effOu.eq:covariance_differential_main} with the upstream adjoints gives the actualization and stationary-covariance seeds,
\begin{align}
\mathbf{G}_{A,\effOuIdx}
&=
\effOuGradVec{\effOuIdx}
\left(
\effOuParMean{\effOuIdx}
-
\effOuEquilMeanEdge{\effOuIdx}
\right)^{\effOuTranspose}
+
2 \,
\effOuUpAdj{\effOuIdx}
\effOuActual{\effOuIdx}
\left(
\effOuParVar{\effOuIdx}
-
\effOuStatVarEdge{\effOuIdx}
\right),
\label{effOu.eq:actualization_seed_main}
\\
\effOuGradVIdx{\effOuIdx}
&=
\effOuUpAdj{\effOuIdx}
-
\effOuActual{\effOuIdx}^{\effOuTranspose}
\effOuUpAdj{\effOuIdx}
\effOuActual{\effOuIdx}.
\label{effOu.eq:edge_seeds_main}
\end{align}
These edge-local seeds provide the inputs to the matrix-exponential and Lyapunov pullbacks developed in the next subsection.

\subsection{Structured reverse-mode primitives}
\label{effOu.sec:grad_frechet_and_lyapunov}
The same block structure used for the forward quantities in Section~\ref{effOu.sec:smbp_computationalAdvantages} applies to reverse-mode differentiation through the drift parameters.
We start from the H-SSBP factorization and propagate the edge-local seeds from Equations~\eqref{effOu.eq:actualization_seed_main}--\eqref{effOu.eq:edge_seeds_main} through the matrix exponential and Lyapunov equation.

\paragraph{Fr\'echet adjoint of the block-diagonal exponential}
For edge $\effOuIdx$,
\begin{align}
\effOuActual{\effOuIdx}
=
\effOuRotMat
\effOuActualRot{\effOuIdx}
\effOuRotMat^{-1},
\qquad
\effOuActualRot{\effOuIdx}
=
\effOuExpMatrix{\effOuBlkDiag\effOuEdgeLength{\effOuIdx}}.
\end{align}
Holding $\effOuRotMat$ fixed, the relevant primal perturbation is
$\dd\effOuActual{\effOuIdx}=\effOuRotMat(\dd\effOuActualRot{\effOuIdx})\effOuRotMat^{-1}$,
so the Frobenius-dual actualization seed is transformed into block coordinates by
\begin{align}
\effOuAdjointVarRotated_{\effOuIdx}
=
\effOuRotMat^{\effOuTranspose}
\mathbf G_{A,\effOuIdx}
\effOuRotMat^{-\effOuTranspose}.
\label{effOu.eq:rotated_actualization_seed_main}
\end{align}
Define the edge-length-specific Fr\'echet operator
\begin{align}\label{effOu.eq:frechet_forward}
\effOuFrechetOp{\effOuBlkDiag}^{(\effOuEdgeLength{\effOuIdx})}
[\effOuFrechetDirection]
&=
\left.
\frac{\dd}{\dd\epsilon}
\effOuExpMatrix{\effOuEdgeLength{\effOuIdx}(\effOuBlkDiag+\epsilon\effOuFrechetDirection)}
\right|_{\epsilon=0}.
\end{align}
The integral representation of its adjoint under the Frobenius inner product is \citep{najfeld_derivatives_1995,monti_nonparametric_2025}
\begin{align}\label{effOu.eq:rotatedIntegralFrechetAdjoint}
\left(\effOuFrechetOpAdj{\effOuBlkDiag}\right)^{(\effOuEdgeLength{\effOuIdx})}
[\effOuAdjointVarRotated_{\effOuIdx}]
&=
\effOuEdgeLength{\effOuIdx}
\int_0^1
\effOuExpMatrix{s\effOuEdgeLength{\effOuIdx}\effOuBlkDiag^{\effOuTranspose}}
\effOuAdjointVarRotated_{\effOuIdx}
\effOuExpMatrix{(1-s)\effOuEdgeLength{\effOuIdx}\effOuBlkDiag^{\effOuTranspose}}
\,\dd s.
\end{align}
When $\effOuBlkDiag$ is shared across edges, these edge-specific Fr\'echet pullbacks are summed before continuing the reverse pass.
Because $\effOuBlkDiag$ is block diagonal, every output block $(\effOuIdxB,\effOuIdxBTwo)$ depends only on $\effOuAdjointVarRotated_{\effOuIdx,\effOuIdxB\effOuIdxBTwo}$ and is obtained from a constant-size kernel \citep{monti_nonparametric_2025}:

\vspace{-0.4in}

\begin{align}\label{effOu.eq:block_frechet_adjoint_pair}
  \left[
    \left(\effOuFrechetOpAdj{\effOuBlkDiag}\right)^{(\effOuEdgeLength{\effOuIdx})}
    [\effOuAdjointVarRotated_{\effOuIdx}]
  \right]_{\effOuIdxB\effOuIdxBTwo}
  &=
  \effOuEdgeLength{\effOuIdx}
  \int_0^1
  e^{s\effOuEdgeLength{\effOuIdx}\effOuBlk{\effOuIdxB}^{\effOuTranspose}}\,
  \effOuAdjointVarRotated_{\effOuIdx,\effOuIdxB\effOuIdxBTwo}\,
  e^{(1-s)\effOuEdgeLength{\effOuIdx}\effOuBlk{\effOuIdxBTwo}^{\effOuTranspose}}\,\dd s,
\end{align}
each involving matrices of size at most $2\times2$.
Since admissible perturbations of $\effOuBlkDiag$ are themselves block diagonal, only the $\mathcal{O}(\effOuDim)$ diagonal kernels are required for the native block gradient.
The off-diagonal output blocks of the full Fr\'echet adjoint would correspond to unrestricted perturbations outside the block-diagonal parameter space and are \emph{not} necessary.
The change-of-basis gradient is instead obtained directly from the standard similarity-transform pullback applied to the transformed actualization seed.
The more general block-pair formulas and their adjoint specialization, including stable computation near repeated eigenvalues, are given in \supprefsrange{suppeffOu.forward_frechet}{suppeffOu.adjoint_exp}.

\paragraph{Adjoint Lyapunov solve}
\label{effOu.sec:grad_lyap_adj}
The stationary covariance enters the likelihood through the rotated Lyapunov solution $\effOuStatVarRot$, defined by
\begin{align}
\effOuBlkDiag\effOuStatVarRot
+
\effOuStatVarRot\effOuBlkDiag^{\effOuTranspose}
=
-\effOuDiffMatRot.
\end{align}
Let $\effOuLyapSeed$ denote the total seed with respect to $\effOuStatVar$ in the original coordinates.
When $\effOuStatVar$ is shared across edges, this seed includes the sum of the corresponding edge-local seeds $\effOuGradVIdx{\effOuIdx}$, together with any direct contribution from a stationary root distribution or observation model.
After changing basis, the corresponding seed is $\effOuLyapSeedRot=\effOuRotMat^{\effOuTranspose}\effOuLyapSeed\effOuRotMat$.
Implicit differentiation of the Lyapunov equation shows that the reverse-mode adjoint is obtained by solving the transposed Lyapunov equation

\vspace{-0.5in}

\begin{align}
\effOuBlkDiag^{\effOuTranspose} \effOuLyapAdjRot
+
\effOuLyapAdjRot\,\effOuBlkDiag
=
-\effOuLyapSeedRot.
\label{effOu.eq:lyap_adjoint_D_basis_main}
\end{align}
Because $\effOuBlkDiag$ is block diagonal, this solve decomposes into independent block-pair Sylvester equations,

\vspace{-0.5in}

\begin{align}
\effOuBlk{\effOuIdxB}^{\effOuTranspose}
\effOuLyapAdjRot_{\effOuIdxB\effOuIdxBTwo}
+
\effOuLyapAdjRot_{\effOuIdxB\effOuIdxBTwo}
\effOuBlk{\effOuIdxBTwo}
=
-\effOuLyapSeedRotBlock.
\end{align}
Thus the reverse Lyapunov step uses the same constant-size block-pair kernels as the forward stationary-covariance computation, rather than a global Bartels--Stewart back-substitution.
The corresponding supplementary derivation is given in \suppref{suppeffOu.adjoint_lyap}.
Once $\effOuLyapAdjRot$ has been computed, the contribution of the stationary-covariance path to the gradient with respect to the admissible block drift is obtained by retaining the diagonal blocks:

\vspace{-0.4in}

\begin{align}
\left[
\frac{\partial \effLogLik_{\mathrm{Lyap}}}{\partial \effOuBlkDiag}
\right]_{\effOuIdxB\effOuIdxB}
&=
\left[
\effOuLyapAdjRot\,\effOuStatVarRot
+
\effOuLyapAdjRot^{\effOuTranspose}\effOuStatVarRot
\right]_{\effOuIdxB\effOuIdxB}.
\label{effOu.eq:lyap_grad_D_main}
\end{align}
The projection onto diagonal blocks is essential because off-diagonal blocks would correspond to perturbations outside the block-diagonal parameter space of $\effOuBlkDiag$.

After adding the Lyapunov drift contribution above to the matrix-exponential contribution, the diagonal blocks are contracted with the derivatives of the H-SSBP block map to obtain gradients with respect to
$(\effOuCartBlockRate_{\effOuIdxB},\effOuCartNormalizingRate_{\effOuIdxB},\effOuBlkT{\effOuIdxB})$.
Separately, the same adjoint variable also gives the seed with respect to the rotated diffusion matrix.
Gradients with respect to the diffusion and change-of-basis parameters follow from standard congruence and similarity pullbacks.
No additional Lyapunov solve is required.

\subsection{Identifiability and unconstrained parametrization} \label{effOu.sec:identifiability_and_unconstraint_parametrization}
We now address the identifiability challenges in inferring the drift matrix and underlying parameters, and present the unconstrained parametrizations used in computation.
To keep notation light, in the rest of this section we omit the edge indices $\effOuIdx$.

\paragraph{Identifiability and gauge conventions}
Spectral parametrizations of the form $\effOuDriftMatrix=\effOuRotMat\effOuBlkDiag\effOuRotMat^{-1}$ are generically nonidentifiable, regardless of the specific model that produces them: block permutations, sign changes, and admissible changes of basis within invariant subspaces can leave $\effOuDriftMatrix$ unchanged, and repeated blocks introduce additional freedom \citep{anderson_introduction_2003,horn_matrix_2012,jolliffe_principal_2016}.
The H-SSBP coordinates $(\effOuRotMat,\effOuBlkDiag)$ inherit this issue, and may therefore remain nonidentifiable even when the reconstructed drift $\effOuDriftMatrix$ itself is identifiable.
We therefore treat $(\effOuRotMat,\effOuBlkDiag)$ as computational coordinates and report inference primarily through $\effOuDriftMatrix$.

\paragraph{Unconstrained parametrization and improved identifiability}
To improve gradient-based routines, we transform constrained parameters to unconstrained Euclidean coordinates using differentiable maps and reduce specific redundancies through ordering and sign conventions as described below.
Starting with the drift block parameters $(\effOuCartBlockRate_{\effOuIdxB},\effOuCartNormalizingRate_{\effOuIdxB},\effOuBlkT{\effOuIdxB})$, the strict stability constraint $|\effOuCartNormalizingRate_{\effOuIdxB}|<1$ on the coupling parameters is guaranteed by setting
\begin{align}
\effOuCartNormalizingRate_{\effOuIdxB}
:=
\tanh\{\tilde{\effOuCartNormalizingRate}_{\effOuIdxB}\}, \qquad \tilde{\effOuCartNormalizingRate}_{\effOuIdxB}\in\mathbb{R}.
\end{align} 
For the contraction rates, we reduce cross-block nonidentifiability and enforce stability ($\effOuCartBlockRate_{\effOuIdxB}<0$) by ordering them $\effOuCartBlockRate_1<\effOuCartBlockRate_2<\cdots<\effOuCartBlockRate_{\effNBlock}<0$ using positive-spacing recursive transformations,

\vspace{-1cm}

\begin{align}
\effOuCartBlockRate_{\effNBlock}
&:=
-\exp\{\tilde{\effOuCartBlockRate}_{\effNBlock}\},
&
\effOuCartBlockRate_{\effOuIdxB}
&:=
\effOuCartBlockRate_{\effOuIdxB+1}
-
\exp\{\tilde{\effOuCartBlockRate}_{\effOuIdxB}\},
&
\effOuIdxB
&=
\effNBlock - 1,\ldots,1,
\end{align}
where $\tilde{\effOuCartBlockRate}_{\cdot} \in \mathbb{R}$.
To prevent block-sign ambiguity in inference, we impose the gauge convention $\effOuBlkT{\effOuIdxB}>0$, and use the unconstrained coordinate $\tilde{\effOuBlkT{}}_{\effOuIdxB}\in\mathbb{R}$ with $\effOuBlkT{\effOuIdxB}=\exp\{\tilde{\effOuBlkT{}}_{\effOuIdxB}\}$.
Alternative reparametrizations are discussed in \suppref{suppeffOu.sec:alternativeParametrizationWithinHssbpFamily}.

For the generic H-SSBP, we parametrize the change-of-basis matrix $\effOuRotMat$ directly by its raw entries and regularize using the penalty 
\begin{align}
  \lambda \, \log\left|\det\effOuRotMat\right| -  \frac{\lambda \effOuDim}{2}\|\effOuRotMat\|_F^2, \quad \lambda>0,
\end{align}
where $\effOuDim$ is the process dimension and $\|\cdot\|_F$ denotes the Frobenius norm. 
This regularization improves conditioning while breaking the global scale invariance of the similarity representation.
For the orthogonal specialization, $\effOuRotMat$ is instead parametrized as a fixed ordered product of Givens rotations, each governed by a single unconstrained angle; discontinuous sign conventions are applied only to final estimates, keeping the quantities driving optimization or sampling continuous throughout.
Finally, the components of the equilibrium mean $\effOuEquilMean$ are already unconstrained, while positive-definite covariance matrices are represented in unconstrained real coordinates via standard SPD parametrizations, such as log-Cholesky factors.

\section{Speed-ups Under Homogeneity} \label{effOu.sec:speedupshomogeneity}

We now consider a homogeneous OU model where every edge shares $\effOuDriftMatrix$, $\effOuDiffMat$, and $\effOuEquilMean$, while $\effOuEdgeLength{\effOuIdx}$ may vary across edges.
In Section~\ref{effOu.sec:homogeneous_block_basis} we show that, after transforming the observation vectors and parameters to the block basis, likelihood and gradient evaluations do not require any further change-of-basis multiplication.
Furthermore, in Section~\ref{effOu.sec:homogeneous_block_basis_simultaneous} we show that the transformed OU process reduces to one- and two-dimensional independent modules if $\effOuDriftMatrix$ and $\effOuDiffMat$ also share a common orthogonal basis.
Table~\ref{effOu.table:homogeneous_block_basis_costs} summarizes the computational complexity results.

\begin{table}[t]
\centering
\small
\setlength{\tabcolsep}{5pt}
\renewcommand{\arraystretch}{1.35}
\begin{tabular}{@{}lccc@{}}
\toprule
\textbf{Operation}
&
\textbf{Standard}
&
\textbf{Change of basis}
&
\textbf{Shared basis}
\\
\midrule
Change of basis for parameters
&
--
&
\(\mathcal{O}(\effOuDim^{3})\)
&
\(\mathcal{O}(\effOuDim^{3})\)
\\
Observation-factor change of basis
&
--
&
\(\mathcal{O}(\effOuNumberOfObs\effOuDim^{2})\)
&
\(\mathcal{O}(\effOuNumberOfObs\effOuDim^{2})\)
\\
Stationary covariance
&
\(\mathcal{O}(\effOuDim^{3})\)
&
\(\mathcal{O}(\effOuDim^{2})\)
&
\(\mathcal{O}(\effOuDim)\)
\\
Edge actualization
&
\(\mathcal{O}(\effOuNumberOfEdges\effOuDim^{3})\)
&
\(\mathcal{O}(\effOuNumberOfEdges\effOuDim)\)
&
\(\mathcal{O}(\effOuNumberOfEdges\effOuDim)\)
\\
Innovation covariance
&
\(\mathcal{O}(\effOuNumberOfEdges\effOuDim^{3})\)
&
\(\mathcal{O}(\effOuNumberOfEdges\effOuDim^{2})\)
&
\(\mathcal{O}(\effOuNumberOfEdges\effOuDim)\)
\\
Actualization adjoint
&
\(\mathcal{O}(\effOuNumberOfEdges\effOuDim^{3})\)
&
\(\mathcal{O}(\effOuNumberOfEdges\effOuDim^{2})\)
&
\(\mathcal{O}(\effOuNumberOfEdges\effOuDim)\)
\\
Innovation-covariance adjoint
&
\(\mathcal{O}(\effOuNumberOfEdges\effOuDim^{3})\)
&
\(\mathcal{O}(\effOuNumberOfEdges\effOuDim^{2})\)
&
\(\mathcal{O}(\effOuNumberOfEdges\effOuDim)\)
\\
Node-level conditioning / message combination
&
\(\mathcal{O}(\effOuNumberOfEdges \effOuDim^{3})\)
&
\(\mathcal{O}(\effOuNumberOfEdges \effOuDim^{3})\)
&
\(\mathcal{O}(\effOuNumberOfEdges \effOuDim)\)
\\
\bottomrule
\noalign{\vskip -3pt}
\multicolumn{4}{@{}r@{}}{
\scriptsize
\(\effOuNumberOfEdges = \lvert\effOuEdgeSet\rvert\) is the number of graph edges, \(\effOuNumberOfObs\) is the number of observed nodes, and \(\effOuDim\) is the state dimension.
}
\end{tabular}
\caption{
Leading cost comparisons for homogeneous Gaussian message passing.
The standard column works in the original coordinates.
The change-of-basis column uses the drift basis, with dense transformed diffusion and stationary covariance.
The shared-basis column assumes that drift and diffusion share the same orthogonal basis, so the OU transition decomposes into $\left\lceil \effOuDim/2 \right\rceil$ independent one- or two-dimensional block transitions in the rotated coordinates.
The counts assume distinct edge lengths. The observation model is linear--Gaussian (Section~\ref{effOu.sec:model}), possibly partial or noisy.
For the shared-basis column, the observation operators and observation-error covariances are additionally assumed to be (block) diagonal in the same rotated basis.
} \label{effOu.table:homogeneous_block_basis_costs}
\end{table}

\subsection{Homogeneous block-basis implementation}
\label{effOu.sec:homogeneous_block_basis}

In a homogeneous OU model, the same drift factorization
$\effOuDriftMatrix
=
\effOuRotMat\effOuBlkDiag\effOuRotMat^{-1}$
is shared by every edge of the graph.
This makes the block basis useful beyond a single matrix-function evaluation: the Gaussian message-passing recursions can be expressed in one common coordinate system, so that repeated edge-level OU operations do not require forming dense transition matrices in the original basis.
Define the centered block-basis state by
\begin{align}
\effOuDataRvRotated{\effOuIdx}
=
\effOuRotMat^{-1}
\left(
\effOuDataRv{\effOuIdx}
-
\effOuEquilMean
\right),
\qquad
\effOuDataRv{\effOuIdx}
=
\effOuEquilMean
+
\effOuRotMat
\effOuDataRvRotated{\effOuIdx}.
\end{align}
We denote by
$\effOuDiffMatRot
=
\effOuRotMat^{-1}
\effOuDiffMat
\effOuRotMat^{-\effOuTranspose}$
and 
$\effOuStatVarRot
=
\effOuRotMat^{-1}
\effOuStatVar
\effOuRotMat^{-\effOuTranspose}$ the diffusion and stationary covariance in this basis.
For an edge of length \(\effOuEdgeLength{\effOuIdx}\), the transition becomes
\begin{align}
\effOuDataRvRotated{\effOuIdx}
\mid
\effOuDataRvRotated{\effOuParentNodeOperator{\effOuIdx}}
\sim
\effOuNorm
\left(
\effOuActualRot{\effOuIdx}
\effOuDataRvRotated{\effOuParentNodeOperator{\effOuIdx}},
\effOuEdgeCovRot{\effOuIdx}
\right),
\end{align}
where
$\effOuActualRot{\effOuIdx}
=
\effOuExpMatrix{\effOuBlkDiag\effOuEdgeLength{\effOuIdx}}$,
and 
$\effOuEdgeCovRot{\effOuIdx}
=
\effOuStatVarRot
-
\effOuActualRot{\effOuIdx}
\effOuStatVarRot
\effOuActualRot{\effOuIdx}^{\effOuTranspose}$.
Since \(\effOuBlkDiag\) is block diagonal, \(\effOuActualRot{\effOuIdx}\) is block diagonal for every edge.
Thus products of the form
\(\effOuActualRot{\effOuIdx}\effOuStatVarRot\effOuActualRot{\effOuIdx}^{\effOuTranspose}\) cost only \(\mathcal{O}(\effOuDim^{2})\) when \(\effOuStatVarRot\) is dense, rather than \(\mathcal{O}(\effOuDim^{3})\).
The same reduction applies in reverse mode, where the adjoints associated with edge actualization and innovation covariance are propagated by block-diagonal left and right multiplications before being accumulated across edges.

The gain from homogeneity is graph-wide.
A natural-basis implementation repeatedly forms dense edge transitions and dense innovation covariances, whereas the block-basis implementation keeps propagation in the shared transformed coordinates.
Dense dependence is not removed in general, since it remains in \(\effOuStatVarRot\), in Gaussian message covariances, and in local evidence factors.

\subsection{Fully decoupled shared-basis case}
\label{effOu.sec:homogeneous_block_basis_simultaneous}

A particularly simple special case arises when the drift and diffusion share a common orthogonal basis, that is,

\vspace{-0.5in}

\begin{align} \label{effOu.eq:sharedOrthogonalBasis}
\effOuDriftMatrix
=
\effOuRotMat\effOuBlkDiag\effOuRotMat^{\effOuTranspose},
\qquad
\effOuDiffMat
=
\effOuRotMat\mathbf{D}\effOuRotMat^{\effOuTranspose},
\qquad
\effOuRotMat^{\effOuTranspose}\effOuRotMat
=
\effOuIdentity,
\end{align}
$\mathbf{D}=\operatorname{diag}(d_{1},\ldots,d_{\effOuDim})$ with $d_{j}> 0$.
This framework improves both computational efficiency and interpretability.
Computationally, the block-basis stationary covariance is obtained from
$\effOuBlkDiag\effOuStatVarRot
+
\effOuStatVarRot\effOuBlkDiag^{\effOuTranspose}
+
\mathbf{D}
=
0$.
Since $\mathbf{D}$ is diagonal and $\effOuBlkDiag$ is block diagonal, $\effOuStatVarRot$ is also block diagonal.
Consequently, the edge innovation covariance and the actualization matrices are also block diagonal for every interval.

Under the shared-basis assumption, the OU process decomposes in the rotated coordinates into independent one- and two-dimensional blocks.
Actualization matrices, stationary and innovation covariances, and their adjoints are therefore computed blockwise, so interval-level OU operations scale linearly in $\effOuDim$, apart from the initial transformations through $\effOuRotMat$.
The columns of $\effOuRotMat$ define orthogonal combinations of the original coordinates, while $\mathbf{D}$ and $\effOuBlkDiag$ govern, respectively, stochastic variation and mean-reversion dynamics within each block.
Equation~\eqref{effOu.eq:sharedOrthogonalBasis} can thus be viewed as a dynamic analogue of common principal components \citep{flury_algorithm_1986,flury_common_1988,phillips_hierarchical_1999}, in which the same latent blocks govern both deterministic relaxation and diffusion-driven variation.

\section{Computational and Numerical Performance}\label{effOu.sec:results}
%
%
We benchmark the computational primitives that determine the state-dimensional cost of OU likelihood and gradient evaluation.
We compare the H-SSBP with Schur- and Lyapunov-based parametrizations in terms of computational efficiency (below) and numerical robustness (\suppref{suppeffOu.sec:numerical_robustness}).
The experiments isolate the drift-dependent computations for which the H-SSBP changes the dominant constant factors: matrix exponentials and their adjoints, Lyapunov solves, and edge-local messages and gradients.
All benchmarks are implemented in \texttt{Julia}~1.10.3 using optimized linear algebra and matrix-function routines, and timings were collected with ILP64 OpenBLAS/LAPACK on a MacBook Pro with an Apple M1 using one BLAS thread.


The efficiency benchmarks isolate the drift-dependent kernels that are repeated across edges in likelihood and gradient evaluation.
We compare three implementations: H-SSBP block-basis kernels, known-Schur kernels in a cached quasi-triangular basis, and dense kernels.
The dense kernels represent the Lyapunov-based parametrization, assuming that the dense drift has already been formed from its primitive parameters; hence the benchmarks focus on the downstream matrix-function costs rather than on parameter construction.
For the Schur and H-SSBP implementations, basis changes are included only when they scale with the number of edge or observation contributions in a homogeneous setting.
In particular, exponential adjoints include the change of basis for the upstream seed, but exclude the final pullback applied after edgewise adjoints have been accumulated.

Figure~\ref{fig:ou_exp_adjoint_cached_benchmarks} reports median per-call timings for dimensions $\effOuDim\in\{4,8,16,32,64\}$.
The block exponential is already tens of times faster than the dense and known-Schur alternatives, and the advantage is even larger for the adjoint Fr\'echet exponential, where H-SSBP avoids dense matrix-function differentiation.
The Lyapunov panel shows a smaller but still substantial gain because both H-SSBP and Schur implementations require basis changes, whereas the H-SSBP solve itself decomposes into block-pair operations.
When these kernels are embedded in edge-message construction, dense Gaussian algebra and covariance multiplications common to all methods reduce the visible speed-up.
Nevertheless, H-SSBP remains substantially faster for both forward edge messages and reverse-mode covariance adjoints.

\begin{figure}[!ht]
\centering
\includegraphics[width=0.98\textwidth]{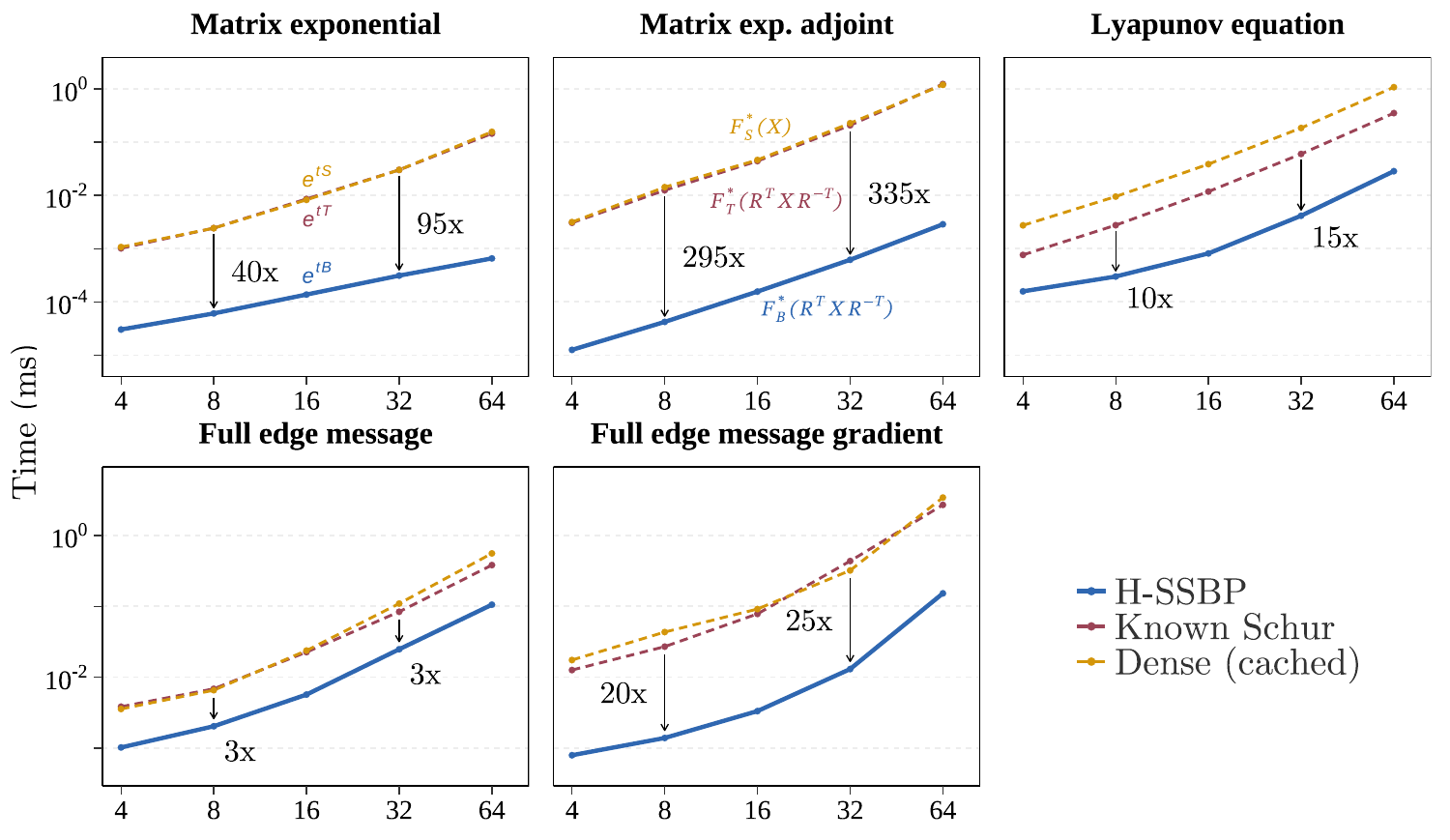}
\caption{
Computational benchmarks for drift-dependent OU kernels and edge-level message operations.
The top row compares the matrix exponential, the adjoint Fr\'echet exponential, and the Lyapunov solve under H-SSBP, known-Schur, and dense cached implementations.
The bottom row embeds these kernels in forward edge-message construction and in the corresponding reverse-mode covariance adjoint.
H-SSBP uses block-basis formulas; known-Schur uses cached quasi-triangular-basis operations; dense cached applies dense routines to a precomputed drift matrix.
These benchmarks are developed with a homogeneous OU process as the primary reference setting.
For a conservative comparison, however, we do \emph{not} assume that the observations and parameters have already been transformed into the block basis as detailed in Section~\ref{effOu.sec:homogeneous_block_basis}.
We include basis changes when they scale with the repeated edge-level computation, while final adjoint pullbacks performed once after edge accumulation are excluded.
Annotations report speed-ups of H-SSBP relative to the fastest competing implementation at the marked dimensions.
}
\label{fig:ou_exp_adjoint_cached_benchmarks}
\end{figure}

\section{Applications}
\label{effOu.sec:applications}

We implemented the OU message-passing algorithms and H-SSBP gradient calculations in the open-source software \textsc{Beast X} \citep{baele_beast_2025} and performed joint Bayesian inference either by maximizing the log posterior to obtain maximum a posteriori (MAP) estimates or by Hamiltonian Monte Carlo (HMC) \citep{neal_mcmc_2011} with adaptive step size.
Section~\ref{effOu.sec:kalmanSmoother} applies the proposed framework to simulated time series and an asynchronous financial data example, while Section~\ref{effOu.sec:phylogeneticExample} presents a phylogenetic comparative analysis problem.

\paragraph{Common priors}
When unknown, the diffusion covariance was inferred through a Cholesky correlation representation, with independent \(\mathrm{Gamma}(0.5,0.5)\) shape--scale priors on the diagonal diffusion parameters and an LKJ\((1)\) prior on the correlation matrix \citep{lewandowski_generating_2009}.
When inferred, the entries of the equilibrium mean \(\effOuEquilMean\) were assigned independent \(N(0,1)\) priors.
For the H-SSBP block parameters, the unconstrained coordinates corresponding to the scalar rate, block-rate increments \(\tilde{\effOuCartBlockRate}_{\effOuIdxB}\), and oscillatory parameters \(\tilde{\effOuBlkT{}}_{\effOuIdxB}\) were assigned independent \(N(0,1)\) priors.
The coupling parameters \(\effOuCartNormalizingRate_{\effOuIdxB}\) were assigned independent uniform priors on \((-1,1)\).
The orthogonal H-SSBP used independent \(N(0,0.25^2)\) priors on the Givens angles.
For the dense-basis H-SSBP, we used the regularizing prior described in Section~\ref{effOu.sec:identifiability_and_unconstraint_parametrization}, with $\lambda=0.1$.

\subsection{Time-series state-space analyses}
\label{effOu.sec:kalmanSmoother}
The time-series setting is the simplest graphical specialization of the framework presented in Section~\ref{effOu.sec:model}.
The graph is a chain of latent states at observation times \(t_1<\cdots<t_n\), each connected to the data through a Gaussian observation factor.
The general Gaussian message-passing recursions then reduce to Kalman filtering for the marginal likelihood and Rauch--Tung--Striebel smoothing for latent-state moments \citep{kalman_new_1960,rauch_maximum_1965,sarkka_bayesian_2013}.
MAP estimates were obtained by limited-memory BFGS (L-BFGS) optimization \citep{liu_limited_1989} of the log posterior in unconstrained coordinates.

\subsubsection{Simulations} \label{effOu.sec:simulations}
We evaluated drift recovery under the orthogonal and dense-basis H-SSBP parametrizations in three five-dimensional ($\effOuDim=5$) simulation experiments designed to probe behavior across the real--complex spectral boundary, under departures from orthogonal block decomposability, and under model misspecification.
Further details on the simulation design and inference procedure are provided in \suppref{suppeffOu.sec:simulation_material}.
For each value of the deformation parameter in each experiment, we generated $25$ independent stationary OU trajectories with $\effNObs=800$ irregularly spaced observations with inter-observation gaps drawn independently from a uniform distribution on $[0,0.1]$ and all five coordinates observed with Gaussian measurement error.
MAP estimates were obtained under both parametrizations by analytic-gradient L-BFGS from five independent random initializations, retaining the run with the largest final log posterior; all $950$ selected fits satisfied the prespecified optimization-stabilization criterion.

The first experiment varied the rotational parameter $\effOuBlkT{1}$ of one two-dimensional block across the real--complex eigenvalue boundary while remaining within the correctly specified orthogonal H-SSBP family.
The second introduced increasing shear, producing diagonalizable but nonnormal drifts representable by the dense-basis H-SSBP but not, in general, by the orthogonal specialization.
The third introduced a Jordan block of size four lying outside both fitted H-SSBP families.
Figure~\ref{effOu.fig:sim_combined_rmse} shows essentially no loss of recovery accuracy at the real--complex boundary, with median root mean square error (RMSE) ranging from $0.152$ to $0.156$ for the orthogonal fit and from $0.260$ to $0.262$ for the dense-basis fit.
Under increasing shear, median RMSE for the orthogonal fit increased from $0.648$ to $1.803$, whereas the dense-basis fit remained between $0.340$ and $0.358$, consistent with the difference in matrix-space support.
Under the defective Jordan-block truth, orthogonal-fit RMSE ranged from $0.285$ to $0.444$, whereas dense-basis RMSE remained between $0.310$ and $0.329$, indicating greater robustness of the dense basis as the Jordan coupling increased.

\begin{figure}[t]
\centering
\includegraphics[width=0.98\textwidth]{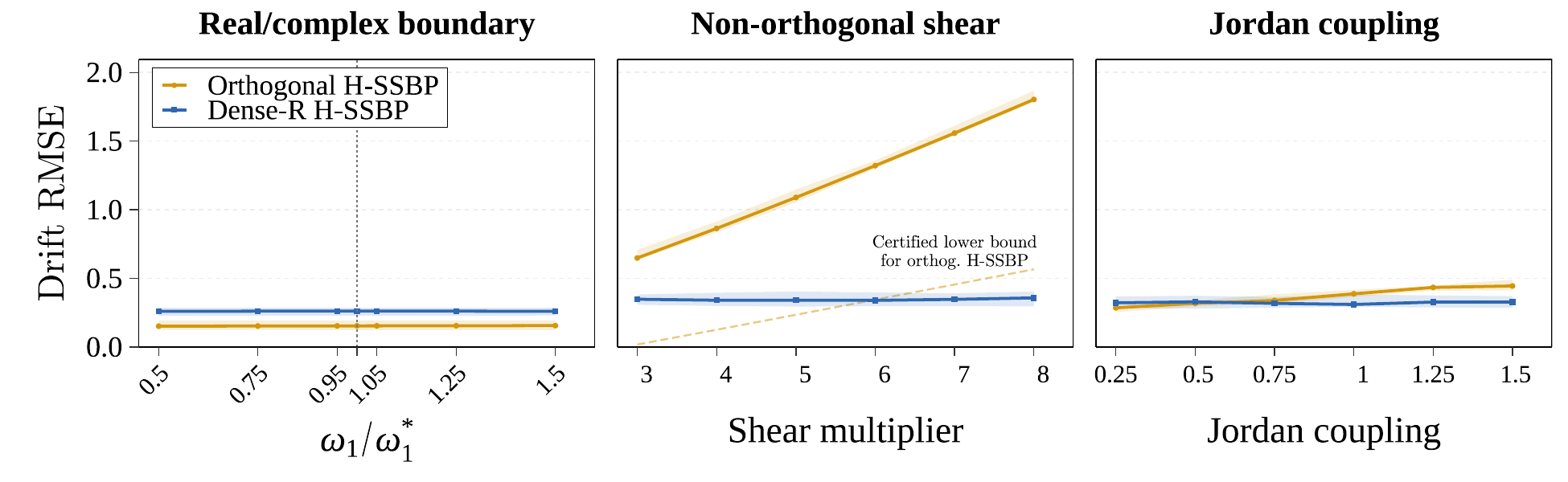}
\caption{Drift reconstruction RMSE under three deformations.
Points and ribbons show the median and interquartile range across $25$ simulated trajectories.
The panels vary, respectively, the position relative to the real--complex eigenvalue boundary, non-orthogonal shear, and the coupling of a $4$-dimensional Jordan-block truth.
In the first panel, $\effOuBlkT{1}$ changes while $\effOuBlkT{1}^{*}$ is the fixed boundary value separating real and complex block eigenvalues: 
$\effOuBlkT{1}^{*}=|\effOuCartBlockRate_1\effOuCartNormalizingRate_1|$.
The dashed curve in the middle panel gives the lower bound on RMSE attainable by any orthogonal H-SSBP fit.
Further details are provided in \suppref{suppeffOu.sec:simulation_material}.}
\label{effOu.fig:sim_combined_rmse}
\end{figure}

\subsubsection{BitMEX trade data example}
\label{effOu.sec:bitmex_trade_data}
We next applied the H-SSBP model to asynchronous high-frequency BitMEX trade data retrieved through the public BitMEX REST API \citep{bitmex_api_2026}.
The records consisted of timestamped trades for actively listed cryptocurrency instruments, giving an irregularly observed multivariate time series with different coordinates observed at different times.
We treated each trade time as a partial observation of the shared latent OU state and analytically integrated the unobserved coordinates following \citet{bastide_efficient_2021} and \citet{hassler_inferring_2022}.
We analyzed five instruments: \texttt{XBTUSDT}, \texttt{ETHUSDT}, \texttt{LTCUSDT}, \texttt{BCHUSDT}, and \texttt{DOGEUSDT}.
For these instruments, we analyzed log prices after size-weighted aggregation and within-window centering over the post-shock recovery window from 13:00 to 20:00 UTC on January~24, 2022, beginning at the one-minute \texttt{XBTUSDT} interval containing the local trough.
We fitted the general invertible-basis H-SSBP with one scalar block and two ordered two-dimensional blocks.
We also inferred the equilibrium mean $\effOuEquilMean$ and diffusion matrix $\effOuDiffMat$, while fixing the observation-error variance to $10^{-5}$.
The MAP search from ten random starts produced a reproducible optimum, with negative diagonal self-reversion coefficients and a pronounced positive \texttt{DOGEUSDT}-to-\texttt{BCHUSDT} conditional coupling, while most other cross-instrument terms remained close to zero.
The reconstructed drift was asymmetric but had an entirely real, negative spectrum, illustrating that the H-SSBP can capture directional coupling without requiring oscillatory modes.
Time-series trajectories, the reconstructed drift, and additional diagnostics are provided in \suppref{suppeffOu.sec:empirical_details}.

\subsection{Diffusion over phylogenetic trees} \label{effOu.sec:phylogeneticExample}
The OU process is a standard model for quantitative-trait evolution on phylogenies, extending Brownian-motion comparative models by allowing stabilizing selection toward adaptive optima \citep{felsenstein_phylogenies_1985,hansen_stabilizing_1997,butler_phylogenetic_2004}.
Multivariate OU models extend this framework to the joint evolution of several traits, allowing correlated trait dynamics and interactions among traits in their responses to selection \citep{bartoszek_phylogenetic_2012,bastide_efficient_2021}.
In this setting, a general nonsymmetric drift matrix can represent asymmetric cross-trait effects in the dynamics of adaptation.
As a real-data illustration, we analyzed the \emph{Anolis} lizard comparative data distributed with \texttt{phytools} \citep{revell_phytools_2012}. 
The analysis used a fixed 100-tip phylogeny (left panel of Figure~\ref{fig:anole_drift_matrix}) and included six centered, unit-variance morphological traits: snout--vent length, head length, hindlimb length, forelimb length, lamella number, and tail length.
We fitted a multivariate phylogenetic OU model using an H-SSBP drift matrix with an orthogonal basis.
In addition to the drift, the equilibrium mean $\effOuEquilMean$ and the diffusion covariance $\effOuDiffMat$ were inferred.
The root distribution used a conjugate Gaussian prior with mean $\effOuEquilMean$ and prior sample size $1$.
We ran four HMC chains from different starting values, each for $2{,}000{,}000$ iterations, logged every $100$ iterations, and discarded the first $20\%$ of each chain as burn-in.
Posterior summaries were computed from the resulting $64{,}000$ retained draws pooled across the four chains.
Convergence diagnostics indicated good mixing and between-chain agreement: the maximum rank-normalized \(\widehat R\) across reconstructed drift-matrix entries was \(1.002\); the minimum (median) bulk ESS was approximately \(2{,}300\) (\(7{,}900\)), and the minimum (median) tail ESS was approximately \(2{,}100\) (\(10{,}400\)).

\begin{figure}[H]
\centering
\includegraphics[width=\textwidth]{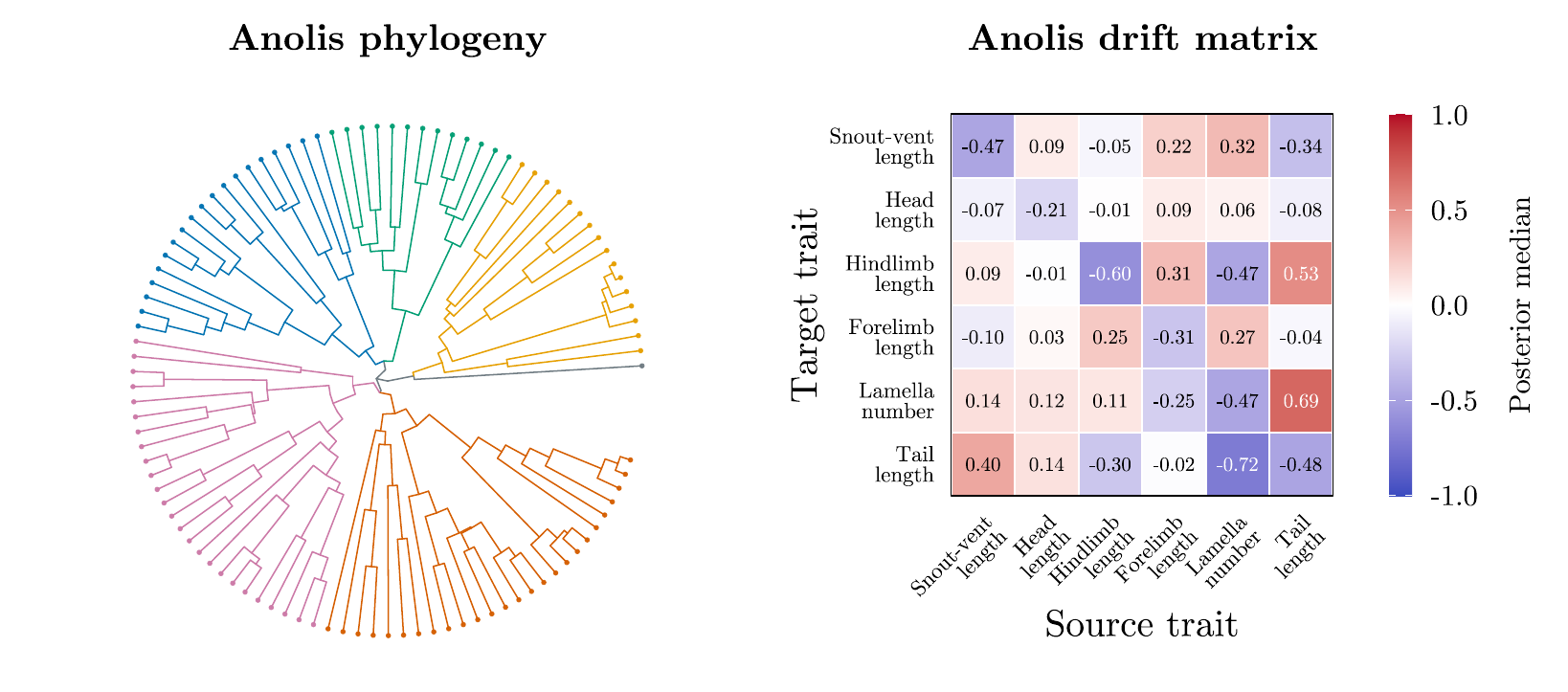}
\caption{
Phylogeny and posterior median \emph{Anolis} drift matrix.
Left: fixed 100-tip phylogeny used for the comparative analysis.
Edge colors distinguish major topological clades for visualization and do not encode trait or ecomorph states.
Right: posterior median drift matrix after $20\%$ burn-in.
Rows are target traits and columns are source traits, so the off-diagonal entry in row $i$, column $j$, quantifies the effect of the deviation of trait $j$ on the drift of trait $i$.
}
\label{fig:anole_drift_matrix}
\end{figure}
The posterior median drift matrix has negative diagonal entries, as implied by the orthogonal H-SSBP parametrization.
Most off-diagonal entries have $95\%$ highest posterior density (HPD) intervals that include zero, while four have intervals excluding zero.
The tail-length-to-snout--vent-length and lamella-number-to-hindlimb-length entries are negative, with posterior medians $-0.34$ and $-0.47$ and $95\%$ HPD intervals $[-0.69,-0.02]$ and $[-0.97,-0.07]$, respectively.
The tail-length-to-hindlimb-length and tail-length-to-lamella-number entries are positive, with posterior medians $0.53$ and $0.69$ and $95\%$ HPD intervals $[0.12,0.99]$ and $[0.18,1.40]$, respectively.
Overall, the posterior provides evidence for directional cross-trait drift coupling, with several off-diagonal effects supported away from zero.

The H-SSBP block representation allowed the posterior to move across real and complex eigenvalue regimes.
Using the criterion $\effOuBlkT{\effOuIdxB}^{2} > \effOuCartBlockRate_{\effOuIdxB}^{2}\effOuCartNormalizingRate_{\effOuIdxB}^{2}$ for each block, the three ordered blocks were in the complex-eigenvalue regime in $60\%$, $47\%$, and $80\%$ of the pooled post-burn-in draws, respectively.
Across the four chains, the corresponding numbers of transitions between real and complex regimes ranged from $4{,}500$ to $5{,}300$, $5{,}500$ to $6{,}200$, and $1{,}400$ to $1{,}700$, respectively.
Although these counts depend on the logging interval, they show that every chain repeatedly crossed the real--complex boundary rather than remaining confined to a single spectral regime.

\section{Discussion}\label{effOu.sec:discussion}

This work develops the H-SSBP, a stable drift parametrization for multivariate OU models designed around likelihood-based inference.
By representing Hurwitz drift matrices through low-dimensional stable blocks and a global change of basis, the H-SSBP combines broad spectral flexibility with structured computation of the matrix exponential, stationary Lyapunov equation, and their reverse-mode derivatives.
For OU processes on rooted directed trees, we exploit this structure within exact Gaussian message passing to substantially reduce the cost of likelihood and gradient evaluation.

The computational gains arise from replacing general dense matrix-function and Lyapunov calculations with independent constant-size block or block-pair operations together with change-of-basis multiplications.
This avoids much of the computational overhead of standard dense algorithms, including Schur-based matrix-function methods and Bartels–Stewart-type Lyapunov solvers, while retaining exact likelihoods and gradients.
Under homogeneous dynamics, a common block basis can be used throughout the graph, eliminating repeated edge-level changes of basis and further reducing dense computation.

Some limitations remain.
The method does not remove all cubic costs from Gaussian inference: dense rotations, Gaussian conditioning, and message algebra may still dominate in some regimes.
The generic H-SSBP is not an identifiable coordinate system without gauge conventions, because equivalent changes of basis can represent the same drift.
The orthogonal specialization improves conditioning and removes some gauge freedom, but restricts matrix-space support to orthogonally block-decomposable dynamics.

As a future direction, the same block structure could also be exploited for other matrix functions that arise
in stable linear dynamics \citep{higham_functions_2008,trefethen_spectra_2005}, such as resolvents, time-integrated semigroups, and trace integrals:

\vspace{-0.5in}

\begin{align}
(s\effOuIdentity-\effOuDriftMatrix)^{-1}
& \effOuEqualityWithHSSBP
\effOuRotMat
(s\effOuIdentity-\effOuBlkDiag)^{-1}
\effOuRotMat^{-1} \qquad s\notin\operatorname{spec}(\effOuDriftMatrix).\\
\int_{0}^{\effOuTime}
\effOuExpMatrix{\effOuDriftMatrix u}
\,\dd u
=
\effOuDriftMatrix^{-1}
\left(
\effOuExpMatrix{\effOuDriftMatrix\effOuTime}
-
\effOuIdentity
\right) 
&\effOuEqualityWithHSSBP
\effOuRotMat
\effOuBlkDiag^{-1}
\left(
\effOuExpMatrix{\effOuBlkDiag\effOuTime}
-
\effOuIdentity
\right)
\effOuRotMat^{-1}
\; \overset{\effOuTime \to \infty}{\longrightarrow}\;
-
\effOuRotMat
\effOuBlkDiag^{-1}
\effOuRotMat^{-1}, \\
\int_{0}^{\infty}
\tr\!\left(
\effOuExpMatrix{\effOuDriftMatrix u}
\right)
\,\dd u
=
-
\tr\!\left(
\effOuDriftMatrix^{-1}
\right)
&\effOuEqualityWithHSSBP
-
\tr\!\left(
\effOuBlkDiag^{-1}
\right).
\end{align}
Moreover, this work has focused on chains and rooted trees. Extending the structured likelihood and adjoint calculations to more general graphical structures is a natural direction, with potential applications to phylogenetic networks \citep{bastide_phylogenetic_2018, teo_leveraging_2025} and dynamic biological networks \citep{zou_new_2005,sanchez-castillo_bayesian_2018}.

\noindent Overall, the H-SSBP offers a way to retain expressive stable linear dynamics while making exact gradient-based inference more transparent, computationally structured, and numerically stable.

\if\depIndAnonymous1
\section*{Data availability statement} 
The data and code required to reproduce the numerical experiments, empirical analyses, and figures are available in the \href{https://github.com/suchard-group/stableMatrixParametrizationAndStructuredAdjoints}{project repository}.
\else
\section*{Data availability statement} 
The data and code required to reproduce the numerical experiments, empirical analyses, and figures are available in the public GitHub repository at \url{https://github.com/suchard-group/stableMatrixParametrizationAndStructuredAdjoints}.

\section*{Disclosure statement}
The authors report no conflicts of interest.

\section*{Acknowledgments}
We gratefully acknowledge Advanced Micro Devices, Inc.\ for donating parallel computing resources used for this research.

\section*{Funding}
MAS was supported by National Institutes of Health grants U19 AI135995, R01 AI153044, and R01 AI162611.
AJH was supported by NSF DMS 2236854.
NEGH was supported by NSF DMS 2510856.
\fi
\onehalfspacing

\bibliographystyle{apalike}
\bibliography{references}


\end{document}


\doublespacing

\if\depIndAnonymous1
\else
\begin{flushright}
Version dated: August 17, 2026\\
\end{flushright}

\bigskip
\fi

\begin{center}
\noindent{\LARGE \bf
\begin{singlespace}
Supplementary Material\\[4pt]
for ``Stable Matrix Parametrizations \\
and Structured Adjoints\\
for Ornstein--Uhlenbeck Processes''
\end{singlespace}
}

\bigskip
\if\depIndAnonymous1

\else
\noindent{\normalsize \sc
Filippo Monti, Andrew Holbrook, Nathan E. Glatt-Holtz, and Marc A.~Suchard}
\fi
\end{center}
\bigskip

\section{SSBP Block Structure and Support}
\label{suppeffOu.overview}

This supplementary section collects the block-level algebra used by the smooth spectral
block parametrization (SSBP) and its Hurwitz specialization (H-SSBP) presented in the main text.
Throughout, let $\effOuDriftMatrix$ be a $\effOuDim$-dimensional square matrix which can be written as
\begin{align}
  \effOuDriftMatrix = \effOuRotMat \effOuBlkDiag \effOuRotMat^{-1},
  \qquad
  \effOuBlkDiag = \operatorname{bdiag}
  (\effOuBlk{1},\ldots,\effOuBlk{\effNBlock}),
\end{align}
where each block has size one or two.
We also discuss the orthogonal specialization when
\(\effOuRotMat^{-1}=\effOuRotMat^{\effOuTranspose}\).

\subsection{Notation and block identities}
\label{suppeffOu.notation}
For any matrix \(\mathbf{X}\), \(\mathbf{X}_{\effOuIdxB \effOuIdxBTwo}\) denotes its block with rows in block \(\effOuIdxB\) and columns in block \(\effOuIdxBTwo\).
When $\effOuIdxB = \effOuIdxBTwo$, we contract the notation to \(\mathbf{X}_{\effOuIdxB}\).

\begin{definition}[SSBP blocks] The block-diagonal matrix in the SSBP has two-dimensional blocks of the form
\begin{align} \label{suppeffOu.eq:block_definition}
  \effOuBlk{\effOuIdxB}
  =
  \effOuCartBlockRate_{\effOuIdxB}\effOuIdentity_2+\effOuBlkN{\effOuIdxB},
\end{align}
with hollow residual
\begin{align}
  \effOuBlkN{\effOuIdxB}
  =
  \begin{pmatrix}
    0 &
    \effOuCartBlockRate_{\effOuIdxB}\effOuCartNormalizingRate_{\effOuIdxB}
    +\effOuBlkT{\effOuIdxB}
    \\
    \effOuCartBlockRate_{\effOuIdxB}\effOuCartNormalizingRate_{\effOuIdxB}
    -\effOuBlkT{\effOuIdxB}
    & 0
  \end{pmatrix}.
\end{align}
If the matrix dimension is odd, then one scalar block $\effOuBlk{\effOuIdxB} = \effOuCartBlockRate_{\effOuIdxB}$ is also included.
\end{definition}
\begin{remark}[Hurwitz-SSBP] To restrict the SSBP to be Hurwitz stable, we impose the following two constraints on each two-dimensional block:
  \begin{align}
  \effOuCartBlockRate_{\effOuIdxB}<0, \quad | \effOuCartNormalizingRate_{\effOuIdxB} | < 1.
  \end{align}
When the dimension is odd, the scalar block is additionally required to be negative.
\end{remark}

\begin{remark}[Equal-diagonal structure and algebraic closure]
  \label{suppeffOu.rem:equal_diag}
Because the two diagonal entries of \(\effOuBlk{\effOuIdxB}\) are both
\(\effOuCartBlockRate_{\effOuIdxB}\), the centered residual
\(\effOuBlkN{\effOuIdxB}
=
\effOuBlk{\effOuIdxB}
-
\effOuCartBlockRate_{\effOuIdxB}\effOuIdentity_2\)
is hollow and satisfies
\begin{align}
  \effOuBlkN{\effOuIdxB}^{2}
  =
  \effOuBlkDelta{\effOuIdxB}\effOuIdentity_2,
  \qquad
  \effOuBlkDelta{\effOuIdxB}
  =
  \left(
  \effOuCartBlockRate_{\effOuIdxB}\effOuCartNormalizingRate_{\effOuIdxB}
  +
  \effOuBlkT{\effOuIdxB}
  \right)
  \left(
  \effOuCartBlockRate_{\effOuIdxB}\effOuCartNormalizingRate_{\effOuIdxB}
  -
  \effOuBlkT{\effOuIdxB}
  \right)
  =
  \effOuCartBlockRate_{\effOuIdxB}^{2}
  \effOuCartNormalizingRate_{\effOuIdxB}^{2}
  -
  \effOuBlkT{\effOuIdxB}^{2}.
\end{align}
Thus even powers of \(\effOuBlkN{\effOuIdxB}\) are scalar multiples of
\(\effOuIdentity_2\), while odd powers are scalar multiples of
\(\effOuBlkN{\effOuIdxB}\).
\end{remark}

\subsection{Support}
\label{suppeffOu.ssbp_support}

\begin{proposition}[Full spectral support]\label{suppeffOu.prop:full_support}
  Let $\effOuBlk{\effOuIdxB} = \effOuBlk{\effOuIdxB}(\effOuCartBlockRate, \effOuCartNormalizingRate, \effOuBlkT{})$ be a $2\times 2$ block as in Equation~\eqref{suppeffOu.eq:block_definition} subject to the stability constraints $\effOuCartBlockRate < 0$ and $|\effOuCartNormalizingRate| < 1$.
  Then $\effOuBlk{\effOuIdxB}$
  achieves full spectral support over all Hurwitz eigenvalue configurations, that is:
\begin{enumerate}[label=\roman*)]
  \item For any ordered real pair $\effOuEigenvalue^- \le \effOuEigenvalue^+ < 0$, there exists a $\effOuBlk{\effOuIdxB}(\effOuCartBlockRate, \effOuCartNormalizingRate, \effOuBlkT{})$ with eigenvalues $\effOuEigenvalue^\pm$.
  \item For any complex-conjugate pair $\effOuEigenvalueRealPart \pm \mathrm{i}\effOuEigenvalueImaginaryPart$ with $\effOuEigenvalueRealPart < 0$, $\effOuEigenvalueImaginaryPart \in \mathbb{R}$, there exists a $\effOuBlk{\effOuIdxB}(\effOuCartBlockRate, \effOuCartNormalizingRate, \effOuBlkT{})$ with eigenvalues $\effOuEigenvalueRealPart \pm \mathrm{i}\effOuEigenvalueImaginaryPart$.
\end{enumerate}
In particular, the imaginary part of the eigenvalues is unbounded and the parametrization covers the entire open left half-plane.
\end{proposition}
%
\begin{proof}
\textit{(i) Real pair.}
Set
\[
\effOuCartBlockRate = \frac{\effOuEigenvalue^- + \effOuEigenvalue^+}{2},
\qquad
\effOuCartNormalizingRate =
\frac{\effOuEigenvalue^+ - \effOuEigenvalue^-}{\effOuEigenvalue^- + \effOuEigenvalue^+},
\qquad
\effOuBlkT{}=0.
\]
Since $\effOuEigenvalue^- \le \effOuEigenvalue^+ < 0$, we have $\effOuCartBlockRate<0$.
If $\effOuEigenvalue^- = \effOuEigenvalue^+$ then $\effOuCartNormalizingRate=0$.
Otherwise \(0 < \effOuEigenvalue^+ - \effOuEigenvalue^- < |\effOuEigenvalue^- + \effOuEigenvalue^+|\), so $|\effOuCartNormalizingRate|<1$.
The discriminant is $\effOuBlkDelta{}=\effOuCartBlockRate^2\effOuCartNormalizingRate^2\ge0$ with $\sqrt{\effOuBlkDelta{}}=(\effOuEigenvalue^+ - \effOuEigenvalue^-)/2$, giving eigenvalues $\effOuCartBlockRate\pm\sqrt{\effOuBlkDelta{}}=\effOuEigenvalue^\pm$.
%
\textit{(ii) Complex conjugate pair.}
Set $\effOuCartBlockRate = \effOuEigenvalueRealPart < 0$, $\effOuCartNormalizingRate = 0$, and $\effOuBlkT{} = \effOuEigenvalueImaginaryPart$.
Then $\effOuBlkDelta{}=-\effOuEigenvalueImaginaryPart^2\le0$, with eigenvalues $\effOuEigenvalueRealPart\pm\mathrm{i}\effOuEigenvalueImaginaryPart$.
\end{proof}
%
\noindent The proposition above concerns the spectral support of individual blocks.
Matrix-space support follows by allowing an arbitrary real similarity basis.
\begin{proposition} \label{suppeffOu.prop:supportSSBP}
With the generic form $\effOuDriftMatrix=\effOuRotMat\effOuBlkDiag\effOuRotMat^{-1}$, $\effOuRotMat\in GL_p(\mathbb{R})$,
\begin{itemize}
\item[1] the H-SSBP represents every real Hurwitz matrix diagonalizable over $\mathbb{C}$, and more generally every real Hurwitz matrix
whose real Jordan decomposition can be organized into scalar real blocks, two-dimensional real blocks for complex-conjugate eigenvalue pairs, and length-two Jordan blocks for defective repeated real eigenvalues.
\item[2] the image of the H-SSBP is dense in the set of real Hurwitz matrices endowed with the Euclidean topology on
\(\mathbb{R}^{\effOuDim \times \effOuDim}\).
It also has full Lebesgue measure within the Hurwitz cone.
\end{itemize} 
\end{proposition}

\noindent This result follows from Proposition~\ref{suppeffOu.prop:full_support}.
For the diagonalizable part, a real matrix can be decomposed over real invariant subspaces into scalar blocks for real eigenvalues and two-dimensional blocks for complex-conjugate eigenvalue pairs \citep{gantmakher_theory_2000, horn_matrix_2012, golub_matrix_2013}.
Proposition~\ref{suppeffOu.prop:full_support} shows that the H-SSBP realizes all Hurwitz configurations of these scalar and two-dimensional spectral blocks.
The remaining size-two defective real case is represented by an H-SSBP block with \(\effOuBlkDelta{\effOuIdxB}=0\) and nonzero nilpotent residual \(\effOuBlkN{\effOuIdxB}\).
Jordan chains of length greater than two, and defective complex eigenvalue pairs requiring real blocks of dimension larger than two, are not represented by the strict block-diagonal form.
Since matrices with distinct eigenvalues are diagonalizable over
\(\mathbb{C}\) and dense in the ambient matrix space
\citep{horn_matrix_2012}, and since the Hurwitz set is open, every real Hurwitz
matrix can be approximated arbitrarily well by a real Hurwitz matrix with
distinct eigenvalues.
Such an approximating matrix is diagonalizable over \(\mathbb{C}\), and hence is
represented by the H-SSBP by part~1 of the proposition.
Therefore, the H-SSBP image is dense in the set of real Hurwitz matrices.
Moreover, matrices with repeated eigenvalues lie in the zero set of the characteristic-polynomial discriminant, a proper algebraic subset of \(\mathbb{R}^{\effOuDim\times\effOuDim}\), and hence form a Lebesgue-null set.
Thus simple-spectrum matrices form a full-measure subset of the Hurwitz cone, and all such Hurwitz matrices are represented by the H-SSBP.

\subsection{Orthogonal specialization} \label{suppeffOu.orthogonalSpecialization}
To improve conditioning \citep{stewart_error_1973, golub_matrix_2013}, we can constrain the change-of-basis matrix to be orthogonal, \(\effOuRotMat^{\effOuTranspose} \effOuRotMat = \effOuIdentity\).
\begin{proposition}[Support under the orthogonal specialization]
\label{suppeffOu.prop:orthogonal_support}
If \(\effOuRotMat\) is restricted to be orthogonal, the H-SSBP represents the stable matrices that are orthogonally block-decomposable into H-SSBP scalar and \(2\times2\) components.
In particular, every represented orthogonal H-SSBP matrix is strictly dissipative, with negative-definite symmetric part.
\end{proposition}

\noindent This restriction is related to, but weaker than, normality.
Real normal matrices are orthogonally block diagonalizable into scalar blocks
and \(2\times2\) rotation-dilation blocks, and hence form a subclass of the
orthogonal H-SSBP support.
The converse need not hold, because a two-dimensional H-SSBP block need not be
normal.

For inference, several smooth parametrizations of orthogonal matrices are
available, including exponential maps of skew-symmetric matrices, Cayley
transforms, Householder products, and products of Givens rotations \citep{golub_matrix_2013}.
In our analyses, we use a fixed ordered product of Givens rotations
to parametrize \(\effOuRotMat\).
For each coordinate pair $1\leq i<j\leq \effOuDim$, let $\mathbf{G}_{ij}(\effOuGivensAngles_{ij})$ be the identity matrix except on the $(i,j)$ coordinate plane, where it has the $2\times2$ rotation
\begin{align}
  \begin{pmatrix}
  \cos \effOuGivensAngles_{ij} & -\sin \effOuGivensAngles_{ij}\\
  \sin \effOuGivensAngles_{ij} & \cos \effOuGivensAngles_{ij}
  \end{pmatrix}.
\end{align}
We then set
\begin{align}
  \effOuRotMat(\boldsymbol{\effOuGivensAngles})
  =
  \prod_{1\leq i<j\leq \effOuDim}
  \mathbf{G}_{ij}(\effOuGivensAngles_{ij}),
\end{align}
using a fixed lexicographic order for the product.
The $\effOuDim(\effOuDim-1)/2$ angles $\effOuGivensAngles_{ij}\in\mathbb{R}$ are unconstrained, so the parametrization is smooth and automatically orthogonal, with $\effOuRotMat^{-1}=\effOuRotMat^{\effOuTranspose}$.
Each Givens rotation has determinant \(+1\), so this product parametrizes the determinant-one component of the orthogonal group.
The other component could be obtained by prepending one fixed reflection, but the sign of the determinant is not identifiable in $\effOuRotMat \effOuBlkDiag \effOuRotMat^{\effOuTranspose}$, so we consider only scenarios with determinant equal to \(+1\).

\section{Alternative Parametrizations for $2\times 2$ Blocks} \label{suppeffOu.sec:alternativeParametrizationsForTwodimensionalBlocks}

The key observation that motivated the structure of the H-SSBP is that $2\times 2$ blocks are the smallest components that allow us to represent any Hurwitz real eigenvalue pair and any Hurwitz complex-conjugate pair.
The main value of the proposed parametrization lies in balancing three objectives:
\begin{itemize}
  \item[(i)] \emph{parsimony}, reducing nonidentifiability;
  \item[(ii)] \emph{stability constraints} that factor across parameters and can be enforced through unconstrained transformations;
  \item[(iii)] \emph{smooth matrix-valued transitions} between real- and complex-eigenvalue regimes.
\end{itemize}

\noindent We define the H-SSBP two-dimensional block family as
\begin{align}
\mathcal{H}
&=
\left\{
\begin{pmatrix}
\,a\, & \,b\, \\
\,c\, & \,a\,
\end{pmatrix}
:
a<0,\;
|b+c|<-2a
\right\}.
\end{align}
The coordinate system introduced in Equation~\eqref{suppeffOu.eq:block_definition} and used in the main manuscript parametrizes exactly this family.
For the remainder of this section, we refer to these coordinates as the \emph{relative Cartesian coordinates}.
In Section~\ref{suppeffOu.sec:alternativeParametrizationWithinHssbpFamily}, we compare them with two alternative coordinate systems that parametrize the same H-SSBP block family: a fully Cartesian coordinate system and a polar coordinate system.
In Section~\ref{suppeffOu.sec:alternativeParametrizationOutsideHssbpFamily}, we then discuss three alternative parametrizations for $2\times2$ blocks that are not equivalent to the H-SSBP block family and evaluate them under the three objectives above.

\subsection{Equivalent coordinate systems for H-SSBP blocks}
\label{suppeffOu.sec:alternativeParametrizationWithinHssbpFamily}

\paragraph{Blocks in fully Cartesian coordinates}
An alternative coordinate system for the same H-SSBP $2\times2$ block family is obtained by replacing the relative coordinate $\effOuCartNormalizingRate$ with the Cartesian coordinate $\sigma=\effOuCartBlockRate\effOuCartNormalizingRate\in\mathbb{R}$, while keeping $\effOuCartBlockRate<0$, so that the $2\times2$ block becomes
\begin{align}
\effOuBlkDiag(\effOuCartBlockRate, \sigma, \effOuBlkT{})
&=
\effOuCartBlockRate\,\effOuIdentity_2
+
\begin{pmatrix}
0 & \sigma + \effOuBlkT{} \\[4pt]
\sigma - \effOuBlkT{} & 0
\end{pmatrix}.
\end{align}
Its discriminant is
\(\effOuBlkDelta{}=\sigma^2-\effOuBlkT{}^2\).
The stability constraint $|\effOuCartNormalizingRate|<1$ becomes $|\sigma|<|\effOuCartBlockRate|$.
The relative Cartesian coordinates are recovered via $\effOuCartNormalizingRate=\sigma/\effOuCartBlockRate$, confirming that the two coordinate systems parametrize exactly the same H-SSBP block family.
Compared with the relative Cartesian coordinates, however, this fully Cartesian form introduces a coupled stability constraint.

\paragraph{Blocks in polar coordinates}
Another equivalent coordinate system for the same H-SSBP $2\times2$ block family is obtained by replacing $(\effOuCartBlockRate,\effOuCartNormalizingRate)$ with polar coordinates $\effOuBlkRho{}<0$ and $\effOuBlkTheta{}\in(-\pi/4,\pi/4)$ satisfying
\begin{align}
\effOuCartBlockRate
&=
\effOuBlkRho{}\cos\effOuBlkTheta{},
&
\effOuCartNormalizingRate
&=
\tan\effOuBlkTheta{}.
\end{align}
Substituting these expressions into the H-SSBP block gives
\begin{align}
  \effOuBlkDiag(\effOuBlkRho{}, \effOuBlkTheta{}, \effOuBlkT{})
  =
  \begin{pmatrix}
    \effOuBlkRho{}\cos\effOuBlkTheta{} &
    \effOuBlkRho{}\sin\effOuBlkTheta{}+\effOuBlkT{} \\[4pt]
    \effOuBlkRho{}\sin\effOuBlkTheta{}-\effOuBlkT{} &
    \effOuBlkRho{}\cos\effOuBlkTheta{}
  \end{pmatrix},
\end{align}
with discriminant $\effOuBlkDelta{}=\effOuBlkRho{}^2\sin^2\effOuBlkTheta{}-\effOuBlkT{}^2$.
The relative Cartesian coordinates are recovered via $\effOuCartBlockRate=\effOuBlkRho{}\cos\effOuBlkTheta{}$ and $\effOuCartNormalizingRate=\tan\effOuBlkTheta{}$, while the polar coordinates are recovered via $\effOuBlkTheta{}=\arctan\effOuCartNormalizingRate$ and $\effOuBlkRho{}=\effOuCartBlockRate\sqrt{1+\effOuCartNormalizingRate^2}$.
Thus the two coordinate systems parametrize exactly the same H-SSBP block family.
Compared with the relative Cartesian coordinates, however, the polar form makes the contraction parameter less directly interpretable: the common diagonal entry and the real part of complex block eigenvalues are $\effOuBlkRho{}\cos\effOuBlkTheta{}$, not $\effOuBlkRho{}$ itself.

\subsection{Alternative parametrizations beyond the H-SSBP block family} \label{suppeffOu.sec:alternativeParametrizationOutsideHssbpFamily}

\paragraph{Unconstrained entrywise parametrization} The most general parametrization assigns an independent parameter to each matrix entry:
\begingroup
\setlength{\arraycolsep}{9pt}
\begin{align}
\begin{bmatrix}
a & b \\
c & d
\end{bmatrix}.
\end{align}
\endgroup
Its eigenvalues are
\begin{align}
\lambda_{\pm}
=
\frac{a+d\pm\sqrt{(a-d)^2+4bc}}{2}.
\end{align}
Both eigenvalues have negative real part if and only if
\begin{align} \label{suppeffOu.eq:unconstrained2x2StabilityConstraints}
a+d<0,
\qquad
ad-bc>0.
\end{align}
Thus, stability requires coupled constraints on the matrix entries.
Moreover, when this block is embedded in a similarity-based parametrization, its unrestricted form can introduce redundant and nonidentifiable parameters.

\paragraph{Equal-diagonal parametrization} The H-SSBP blocks belong to the simpler class of equal-diagonal matrices with independently parametrized off-diagonal entries:
\begingroup
\setlength{\arraycolsep}{9pt}
\begin{align} \label{suppeffOu.eq:equalDiagonalParametrization}
\begin{bmatrix}
a & b \\
c & a
\end{bmatrix}.
\end{align}
\endgroup
Its eigenvalues are
\begin{align}
\lambda_{\pm}
=
a\pm\sqrt{bc}.
\end{align}
Both eigenvalues have negative real part if and only if
\begin{align}
a<0,
\qquad
a^2-bc>0.
\end{align}
Although these constraints are simpler than those in \eqref{suppeffOu.eq:unconstrained2x2StabilityConstraints}, they remain coupled.
The Hurwitz-stable portion of this equal-diagonal family strictly contains the H-SSBP block family: every H-SSBP block satisfies the two inequalities above, but the equal-diagonal family also includes stable blocks that violate the H-SSBP dissipativity constraint \(|b+c|<-2a\).

The tridiagonal parametrization of \citet{mckelvey_state_1996} is built from these equal-diagonal $2\times2$ substructures with independently parametrized off-diagonal entries.
Unlike H-SSBP, however, it retains superdiagonal couplings between adjacent substructures and parametrizes arbitrary, rather than necessarily stable, state matrices.
It therefore neither provides separable stability constraints nor generally permits matrix functions and Lyapunov equations to decompose into independent blockwise computations.

\paragraph{Relationship with the real Schur decomposition}
For $bc<0$, the equal-diagonal matrix in~\eqref{suppeffOu.eq:equalDiagonalParametrization} coincides with the standardized $2\times2$ block used by LAPACK to represent a complex-conjugate eigenvalue pair in the real Schur form \citep{anderson_lapack_1999}.
More generally, the real Schur decomposition writes a real matrix as $\mathbf{R}\mathbf{T}\mathbf{R}^{\top}$, where $\mathbf{R}$ is orthogonal and $\mathbf{T}$ is upper quasi-triangular, with $1\times1$ diagonal blocks for real eigenvalues and $2\times2$ diagonal blocks for complex-conjugate pairs \citep{schur_ueber_1909,golub_matrix_2013}.
Under the canonical Schur convention, two distinct real eigenvalues are represented by two scalar blocks rather than by a single $2\times2$ block.
Consequently, the canonical block partition changes when two real eigenvalues coalesce and become a complex-conjugate pair.

The notable restriction to $c=-b$ gives the normal rotation--dilation block
\begingroup
\setlength{\arraycolsep}{9pt}
\begin{align} \label{suppeffOu.eq:equalDiagonalParametrizationSkewSymmetric}
\begin{bmatrix}
a & b \\
-b & a
\end{bmatrix}.
\end{align}
\endgroup
With an orthogonal change of basis and no upper-block couplings in $\mathbf{T}$, Hurwitz-stable blocks of this form (\(a<0\)) represent only real normal Hurwitz matrices.
This Hurwitz-stable class is a proper subset of the matrices representable by the orthogonal H-SSBP (Section~\ref{suppeffOu.orthogonalSpecialization}).
Allowing upper-block couplings in $\mathbf{T}$ also permits nonnormality between distinct invariant subspaces.

\paragraph{Second-order companion/damping parametrization}
A classical two-parameter alternative arises from the homogeneous damped second-order equation
\begin{align}
\ddot{x}(t)
+
2\zeta\effOuCartOscillatoryRate\dot{x}(t)
+
\effOuCartOscillatoryRate^2x(t)
&=
0,
\end{align}
where $\effOuCartOscillatoryRate>0$ is the undamped natural frequency and $\zeta>0$ is the damping ratio \citep{inman_engineering_2022}.
Writing this equation as a first-order system for $(x(t),\dot{x}(t))^{\top}$ gives the companion block
\begin{align}
\begin{bmatrix}
0 & 1 \\
-\effOuCartOscillatoryRate^2 & -2\zeta\effOuCartOscillatoryRate
\end{bmatrix},
\end{align}
with eigenvalues
\begin{align}
\lambda_{\pm}
&=
-\zeta\effOuCartOscillatoryRate
\pm
\effOuCartOscillatoryRate\sqrt{\zeta^2-1}.
\end{align}
The regimes $0<\zeta<1$, $\zeta=1$, and $\zeta>1$ correspond respectively to underdamped complex eigenvalues, a critically damped repeated eigenvalue, and overdamped distinct real eigenvalues.
The independent positivity constraints on $\effOuCartOscillatoryRate$ and $\zeta$ guarantee Hurwitz stability and can be enforced through separate unconstrained transformations.
Moreover, every Hurwitz spectral pair is represented by taking
\begin{align}
\effOuCartOscillatoryRate
&=
\sqrt{\lambda_1\lambda_2},
&
\zeta
&=
-\frac{\lambda_1+\lambda_2}{2\sqrt{\lambda_1\lambda_2}}.
\end{align}

This spectral completeness translates differently into matrix-space support depending on the change of basis.
Under a generic invertible change of basis, every non-scalar $2\times2$ Hurwitz matrix is similar to a companion block with the same characteristic polynomial.
The exception occurs at a semisimple repeated eigenvalue: when $\zeta=1$, the companion block is necessarily defective and therefore cannot represent $-\effOuCartOscillatoryRate\effOuIdentity_2$ exactly.
Moreover, as a family of diagonalizable matrices approaches such a semisimple repeated-eigenvalue limit, its representation in companion form may require an increasingly ill-conditioned change-of-basis matrix.
Thus, the companion parametrization has generic matrix-space support under an unrestricted similarity basis, but can transfer numerical difficulty to the similarity transformation near this boundary.

The restriction is substantially stronger under an orthogonal change of basis.
The symmetric part of the companion block has determinant
\begin{align}
-\frac{\left(1-\effOuCartOscillatoryRate^2\right)^2}{4}
&\leq
0,
\end{align}
and is therefore never negative definite and is generically indefinite.
Since orthogonal similarity preserves the eigenvalues of the symmetric part, no orthogonal change of basis can recover the strictly dissipative stable matrices included in the orthogonal H-SSBP.
Thus, although the companion form is parsimonious, stability-constrained, and spectrally complete, its limitations differ by basis choice: under a generic invertible basis they arise primarily near semisimple repeated eigenvalues, whereas under an orthogonal basis its canonical position--velocity geometry imposes a substantially stronger restriction on matrix-space support.

\section{Forward Kernels for Block Solves}

\subsection{Block matrix exponential} \label{suppeffOu.sec:forward_exp}
For an edge length \(\effOuTime>0\), the OU process actualization matrix is computed as the matrix exponential of the drift matrix:
$  e^{\effOuTime\effOuDriftMatrix}
  =
  \effOuRotMat e^{\effOuTime\effOuBlkDiag}\effOuRotMat^{-1}$.
Since \(\effOuBlkDiag\) is block diagonal, the only nontrivial task is to
evaluate \(e^{\effOuTime\effOuBlk{\effOuIdxB}}\) for each block.
For a scalar block, $e^{\effOuTime\effOuBlk{\effOuIdxB}} = e^{\effOuTime\effOuCartBlockRate_{\effOuIdxB}}$.
For two-dimensional blocks, it is convenient to introduce the scalar functions
\begin{align}
  \cfunc(\effOuBlkDelta{},\effOuTime)
  =
  \begin{cases}
    \cosh(\effOuTime\sqrt{\effOuBlkDelta{}}), & \effOuBlkDelta{}>0,\\
    1, & \effOuBlkDelta{}=0,\\
    \cos(\effOuTime\sqrt{-\effOuBlkDelta{}}), & \effOuBlkDelta{}<0,
  \end{cases}
  \qquad
  \sfunc(\effOuBlkDelta{},\effOuTime)
  =
  \begin{cases}
    \dfrac{\sinh(\effOuTime\sqrt{\effOuBlkDelta{}})}{\sqrt{\effOuBlkDelta{}}}, & \effOuBlkDelta{}>0,\\[6pt]
    \effOuTime, & \effOuBlkDelta{}=0,\\[4pt]
    \dfrac{\sin(\effOuTime\sqrt{-\effOuBlkDelta{}})}{\sqrt{-\effOuBlkDelta{}}}, & \effOuBlkDelta{}<0.
  \end{cases}
\end{align}
These are entire functions of \(\effOuBlkDelta{}\), with the apparent singularity of \(\sfunc\) at \(\effOuBlkDelta{}=0\) being removable.
Using the representation $\effOuBlk{\effOuIdxB}
  =
  \effOuBlkDiagScalar{\effOuIdxB}\effOuIdentity_2+\effOuBlkN{\effOuIdxB}$ \citep{putzer_avoiding_1966, leonard_matrix_1996}, the matrix exponential is
\begin{align}
  e^{\effOuTime\effOuBlk{\effOuIdxB}}
  =
  e^{\effOuTime\effOuBlkDiagScalar{\effOuIdxB}}
  \left[
    \cfunc(\effOuBlkDelta{\effOuIdxB},\effOuTime)\effOuIdentity_2
    +
    \sfunc(\effOuBlkDelta{\effOuIdxB},\effOuTime)\effOuBlkN{\effOuIdxB}
  \right].
\end{align}
To derive this, factor \(e^{\effOuTime\effOuBlk{\effOuIdxB}}
= e^{\effOuTime\effOuBlkDiagScalar{\effOuIdxB}}e^{\effOuTime\effOuBlkN{\effOuIdxB}}\)
and expand the second factor as a power series.
Using \(\effOuBlkN{\effOuIdxB}^{2k}=\effOuBlkDelta{\effOuIdxB}^k\effOuIdentity_2\)
and \(\effOuBlkN{\effOuIdxB}^{2k+1}=\effOuBlkDelta{\effOuIdxB}^k\effOuBlkN{\effOuIdxB}\)
(which follow from the equal-diagonal property, Remark~\ref{suppeffOu.rem:equal_diag}),
the even and odd terms separate:
\begin{align}
  e^{\effOuTime\effOuBlkN{\effOuIdxB}}
  =
  \underbrace{
    \sum_{k=0}^{\infty}\frac{\effOuTime^{2k}\effOuBlkDelta{\effOuIdxB}^k}{(2k)!}
  }_{=\,\cfunc(\effOuBlkDelta{\effOuIdxB},\effOuTime)}\effOuIdentity_2
  +
  \underbrace{
    \sum_{k=0}^{\infty}\frac{\effOuTime^{2k+1}\effOuBlkDelta{\effOuIdxB}^k}{(2k+1)!}
  }_{=\,\sfunc(\effOuBlkDelta{\effOuIdxB},\effOuTime)}\effOuBlkN{\effOuIdxB}.
\end{align}
The three cases of \(\cfunc\) and \(\sfunc\) (hyperbolic, linear, trigonometric) arise directly from the two power series above, which sum to hyperbolic functions when \(\effOuBlkDelta{}>0\), reduce to \(1\) and \(\effOuTime\) when \(\effOuBlkDelta{}=0\), and sum to trigonometric functions when \(\effOuBlkDelta{}<0\).
For a fully general \(2\times2\) block without the equal-diagonal constraint, the same two-dimensional closure is
available only after subtracting half the trace; the H-SSBP equal-diagonal
structure makes this centering automatic and keeps the residual in the hollow
form used throughout these formulas.

\paragraph{Stable evaluation near repeated eigenvalues}
Near \(\effOuBlkDelta{}=0\), direct formulas involving
\(\sqrt{\effOuBlkDelta{}}\) should be evaluated numerically by their series expansions,
\begin{align}
  \cfunc(\effOuBlkDelta{},\effOuTime)
  =
  \sum_{m=0}^{\infty}
  \frac{\effOuTime^{2m}\effOuBlkDelta{}^m}{(2m)!},
  \qquad
  \sfunc(\effOuBlkDelta{},\effOuTime)
  =
  \sum_{m=0}^{\infty}
  \frac{\effOuTime^{2m+1}\effOuBlkDelta{}^m}{(2m+1)!}.
\end{align}
These expansions cover the transition between two real eigenvalues, a repeated
real eigenvalue, and a complex-conjugate pair without changing formulas.

\subsection{Forward Lyapunov block solves}
\label{suppeffOu.sec:forward_lyap}

Let $\tilde{\effOuDiffMat} = \effOuRotMat^{-1}\effOuDiffMat\effOuRotMat^{-\effOuTranspose}$
and $\tilde{\effOuStatVar} = \effOuRotMat^{-1}\effOuStatVar\effOuRotMat^{-\effOuTranspose}$ be the diffusion covariance and stationary covariance in the drift block basis.
The Lyapunov equation can be rewritten as
$  \effOuBlkDiag\tilde{\effOuStatVar}
  +
  \tilde{\effOuStatVar}\effOuBlkDiag^{\effOuTranspose}
  =
  -\tilde{\effOuDiffMat}$.
Since $\effOuBlkDiag$ is block diagonal, this equation reduces to a collection of low-dimensional Sylvester equations for each block pair indexed by $(\effOuIdxB, \effOuIdxBTwo)$:
\begin{align}
  \effOuBlk{\effOuIdxB}
  \tilde{\effOuStatVar}_{\effOuIdxB\effOuIdxBTwo}
  +
  \tilde{\effOuStatVar}_{\effOuIdxB\effOuIdxBTwo}
  \effOuBlk{\effOuIdxBTwo}^{\effOuTranspose}
  =
  -\tilde{\effOuDiffMat}_{\effOuIdxB\effOuIdxBTwo}.
  \label{suppeffOu.eq:block_forward_lyap}
\end{align}
Only the upper-triangular block-pair equations need to be solved, since
\(\tilde{\effOuDiffMat}\) is symmetric.

\subsubsection{A reusable Sylvester inverse}
\label{suppeffOu.sylvester_inverse}

The forward and adjoint Lyapunov computations both reduce to
\begin{align}
  \effOuBlk{\effOuIdxB}\effOuLyapunovSolution+\effOuLyapunovSolution\effOuBlk{\effOuIdxBTwo}=\mathbf{C},
  \label{suppeffOu.eq:sylvester_kernel}
\end{align}
where \(\effOuBlk{\effOuIdxB}\) and \(\effOuBlk{\effOuIdxBTwo}\) are blocks of size at most two.
For scalar \(\effOuBlk{\effOuIdxB}=\effOuBlockScalarOne\) and
\(\effOuBlk{\effOuIdxBTwo}=\effOuBlockScalarTwo\), the solution is 
\(\effOuLyapunovSolution=\mathbf{C}/(\effOuBlockScalarOne+\effOuBlockScalarTwo)\).
For \(\effOuBlk{\effOuIdxB}=\effOuBlockScalarOne\) scalar and
\(\effOuBlk{\effOuIdxBTwo}=\effOuBlockScalarTwo\effOuIdentity+\effOuBlkN{\effOuIdxBTwo}\), with
\(\effOuBlkN{\effOuIdxBTwo}^2=\effOuBlkDelta{\effOuIdxBTwo}\effOuIdentity\),
\begin{align}
  \effOuLyapunovSolution
  =
  \mathbf{C}\{(\effOuBlockScalarOne+\effOuBlockScalarTwo)\effOuIdentity+\effOuBlkN{\effOuIdxBTwo}\}^{-1}
  =
  \frac{(\effOuBlockScalarOne+\effOuBlockScalarTwo)\mathbf{C}-\mathbf{C}\effOuBlkN{\effOuIdxBTwo}}
       {(\effOuBlockScalarOne+\effOuBlockScalarTwo)^2-\effOuBlkDelta{\effOuIdxBTwo}}.
\end{align}
For \(\effOuBlk{\effOuIdxB}=\effOuBlockScalarOne\effOuIdentity+\effOuBlkN{\effOuIdxB}\) and
\(\effOuBlk{\effOuIdxBTwo}=\effOuBlockScalarTwo\) scalar,
\begin{align}
  \effOuLyapunovSolution
  =
  \{(\effOuBlockScalarOne+\effOuBlockScalarTwo)\effOuIdentity+\effOuBlkN{\effOuIdxB}\}^{-1}\mathbf{C}
  =
  \frac{(\effOuBlockScalarOne+\effOuBlockScalarTwo)\mathbf{C}-\effOuBlkN{\effOuIdxB}\mathbf{C}}
       {(\effOuBlockScalarOne+\effOuBlockScalarTwo)^2-\effOuBlkDelta{\effOuIdxB}}.
\end{align}
The next subsection deals with the remaining combination where both blocks are two-dimensional.

\subsubsection{Two-dimensional--two-dimensional block pair}
\label{suppeffOu.twobytwo_sylvester}

Let
\(\effOuBlk{\effOuIdxB}=\effOuBlockScalarOne\effOuIdentity+\effOuBlkN{\effOuIdxB}\),
\(\effOuBlk{\effOuIdxBTwo}=\effOuBlockScalarTwo\effOuIdentity+\effOuBlkN{\effOuIdxBTwo}\), with
\(\effOuBlkN{\effOuIdxB}^2=\effOuBlkDelta{\effOuIdxB}\effOuIdentity\),
\(\effOuBlkN{\effOuIdxBTwo}^2=\effOuBlkDelta{\effOuIdxBTwo}\effOuIdentity\), and
\(\omegaAB=\effOuBlockScalarOne+\effOuBlockScalarTwo\).
Define left and right multiplication operators
\(\mathcal L(\mathbf C)=\effOuBlkN{\effOuIdxB}\mathbf C\) and
\(\mathcal R(\mathbf C)=\mathbf C\effOuBlkN{\effOuIdxBTwo}\).
They commute and satisfy
\(\mathcal L^2=\effOuBlkDelta{\effOuIdxB}\mathcal I\) and
\(\mathcal R^2=\effOuBlkDelta{\effOuIdxBTwo}\mathcal I\), so the algebra they generate is contained in
\(\operatorname{span}\{\mathcal I,\mathcal L,\mathcal R,\mathcal L\mathcal R\}\).
We therefore seek the inverse Sylvester action on \(\mathbf C\) in the form
\begin{align}
  \effOuLyapunovSolution
  =
  \beta_1 \mathbf{C}
  +
  \beta_2 \effOuBlkN{\effOuIdxB}\mathbf{C}
  +
  \beta_3 \mathbf{C}\effOuBlkN{\effOuIdxBTwo}
  +
  \beta_4 \effOuBlkN{\effOuIdxB}\mathbf{C}\effOuBlkN{\effOuIdxBTwo}
  \label{suppeffOu.eq:sylvester_span}
\end{align}
which is equivalent to applying the operator
\(\beta_1\mathcal I+\beta_2\mathcal L+\beta_3\mathcal R+\beta_4\mathcal L\mathcal R\) to \(\mathbf C\).
Substituting into
\(\effOuBlk{\effOuIdxB}\effOuLyapunovSolution+\effOuLyapunovSolution\effOuBlk{\effOuIdxBTwo}=\mathbf{C}\)
and expanding with
\(\effOuBlk{\effOuIdxB}\effOuLyapunovSolution+\effOuLyapunovSolution\effOuBlk{\effOuIdxBTwo}
=\omegaAB\effOuLyapunovSolution
+\effOuBlkN{\effOuIdxB}\effOuLyapunovSolution
+\effOuLyapunovSolution\effOuBlkN{\effOuIdxBTwo}\)
gives
\begin{align*}
  \effOuBlk{\effOuIdxB}\effOuLyapunovSolution+\effOuLyapunovSolution\effOuBlk{\effOuIdxBTwo}
  &= (\omegaAB\beta_1 + \effOuBlkDelta{\effOuIdxB}\beta_2 + \effOuBlkDelta{\effOuIdxBTwo}\beta_3)\,\mathbf{C}
   + (\omegaAB\beta_2 + \beta_1 + \effOuBlkDelta{\effOuIdxBTwo}\beta_4)\,\effOuBlkN{\effOuIdxB}\mathbf{C} \\
  &\quad
   + (\omegaAB\beta_3 + \beta_1 + \effOuBlkDelta{\effOuIdxB}\beta_4)\,\mathbf{C}\effOuBlkN{\effOuIdxBTwo}
   + (\omegaAB\beta_4 + \beta_2 + \beta_3)\,\effOuBlkN{\effOuIdxB}\mathbf{C}\effOuBlkN{\effOuIdxBTwo},
\end{align*}
where \(\effOuBlkN{\effOuIdxB}^2=\effOuBlkDelta{\effOuIdxB}\effOuIdentity\) and
\(\effOuBlkN{\effOuIdxBTwo}^2=\effOuBlkDelta{\effOuIdxBTwo}\effOuIdentity\) reduce all higher-order operator products back into
\(\operatorname{span}\{\mathcal I,\mathcal L,\mathcal R,\mathcal L\mathcal R\}\).
Requiring this operator identity to act as the identity on \(\mathbf C\) gives the scalar system
\begin{align}
  \begin{pmatrix}
    \omegaAB & \effOuBlkDelta{\effOuIdxB} & \effOuBlkDelta{\effOuIdxBTwo} & 0 \\
    1 & \omegaAB & 0 & \effOuBlkDelta{\effOuIdxBTwo} \\
    1 & 0 & \omegaAB & \effOuBlkDelta{\effOuIdxB} \\
    0 & 1 & 1 & \omegaAB
  \end{pmatrix}
  \begin{pmatrix}
    \beta_1\\ \beta_2\\ \beta_3\\ \beta_4
  \end{pmatrix}
  =
  \begin{pmatrix}
    1\\0\\0\\0
  \end{pmatrix}.
  \label{suppeffOu.eq:sylvester_coeff_system}
\end{align}
Equivalently, provided the Sylvester operator is nonsingular,
\begin{align}
  \beta_1
  &= \phantom{-}
  \frac{\omegaAB(\omegaAB^2-\effOuBlkDelta{\effOuIdxB}-\effOuBlkDelta{\effOuIdxBTwo})}
       {(\omegaAB^2-\effOuBlkDelta{\effOuIdxB}-\effOuBlkDelta{\effOuIdxBTwo})^2-4\effOuBlkDelta{\effOuIdxB}\effOuBlkDelta{\effOuIdxBTwo}},
         &\beta_2
  =
  -\frac{\omegaAB^2-\effOuBlkDelta{\effOuIdxB}+\effOuBlkDelta{\effOuIdxBTwo}}
        {(\omegaAB^2-\effOuBlkDelta{\effOuIdxB}-\effOuBlkDelta{\effOuIdxBTwo})^2-4\effOuBlkDelta{\effOuIdxB}\effOuBlkDelta{\effOuIdxBTwo}}, \\
  \beta_3
  &=
  -\frac{\omegaAB^2+\effOuBlkDelta{\effOuIdxB}-\effOuBlkDelta{\effOuIdxBTwo}}
        {(\omegaAB^2-\effOuBlkDelta{\effOuIdxB}-\effOuBlkDelta{\effOuIdxBTwo})^2-4\effOuBlkDelta{\effOuIdxB}\effOuBlkDelta{\effOuIdxBTwo}},
  &\beta_4
  = \phantom{-}
  \frac{2\omegaAB}
       {(\omegaAB^2-\effOuBlkDelta{\effOuIdxB}-\effOuBlkDelta{\effOuIdxBTwo})^2-4\effOuBlkDelta{\effOuIdxB}\effOuBlkDelta{\effOuIdxBTwo}}.
\end{align}
For the stationary covariance block
\eqref{suppeffOu.eq:block_forward_lyap}, use
\(\effOuBlk{\effOuIdxB}\), \(\effOuBlk{\effOuIdxBTwo}^{\effOuTranspose}\), and
\(\mathbf{C}=-\tilde{\effOuDiffMat}_{\effOuIdxB\effOuIdxBTwo}\).

\paragraph{Uniqueness and nonsingularity}
The Hurwitz condition ensures that the global Lyapunov operator is nonsingular,
and hence that the stationary Lyapunov equation has a \emph{unique} solution.
The denominator in the coefficient formulas vanishes exactly when the
Sylvester operator
\(\effOuLyapunovSolution\mapsto \effOuBlk{\effOuIdxB}\effOuLyapunovSolution
+\effOuLyapunovSolution\effOuBlk{\effOuIdxBTwo}\)
is singular.
Equivalently, this occurs when
\(\lambda+\lambda'=0\) for some
\(\lambda\in\operatorname{spec}(\effOuBlk{\effOuIdxB})\) and
\(\lambda'\in\operatorname{spec}(\effOuBlk{\effOuIdxBTwo})\).
In the stationary Lyapunov applications considered here, the blocks are Hurwitz and so all such sums have strictly negative real part, so the block-pair
Sylvester operator is nonsingular.

\paragraph{Cached equal-diagonal form}
For implementation, we specialize the reusable Sylvester kernel to the forward Lyapunov equation, whose right block is transposed.
For a pair of two-dimensional blocks, write
\begin{align}
\effOuBlk{\effOuIdxB}
&=
\begin{pmatrix}
\effOuEqualDiagOne & \effOuUpperRightOne\\
\effOuLowerLeftOne & \effOuEqualDiagOne
\end{pmatrix},
\qquad
\effOuBlk{\effOuIdxBTwo}
=
\begin{pmatrix}
\effOuEqualDiagTwo & \effOuUpperRightTwo\\
\effOuLowerLeftTwo & \effOuEqualDiagTwo
\end{pmatrix}.
\notag
\end{align}
For an off-diagonal block pair \(\effOuIdxB<\effOuIdxBTwo\), let
\(\effOuLyapunovWeight=(\effOuLyapunovWeightEntry{rs})\) and
\(\effOuLyapunovSolution=(\effOuLyapunovSolutionEntry{rs})\), and consider
\begin{align}
\effOuBlk{\effOuIdxB}\effOuLyapunovSolution
+
\effOuLyapunovSolution\effOuBlk{\effOuIdxBTwo}^{\effOuTranspose}
&=
\effOuLyapunovWeight.
\label{suppeffOu.eq:cached_equal_diag_equation}
\end{align}
Using the row-wise vectorizations
\begin{align}
\effOuLyapunovWeightColumn
&=
\begin{pmatrix}
\effOuLyapunovWeightEntry{00}\\
\effOuLyapunovWeightEntry{01}\\
\effOuLyapunovWeightEntry{10}\\
\effOuLyapunovWeightEntry{11}
\end{pmatrix},
\qquad
\effOuLyapunovSolutionColumn
=
\begin{pmatrix}
\effOuLyapunovSolutionEntry{00}\\
\effOuLyapunovSolutionEntry{01}\\
\effOuLyapunovSolutionEntry{10}\\
\effOuLyapunovSolutionEntry{11}
\end{pmatrix},
\end{align}
Equation~\eqref{suppeffOu.eq:cached_equal_diag_equation} becomes
\begin{align}
\begin{pmatrix}
\omegaAB
&
\effOuUpperRightTwo
&
\effOuUpperRightOne
&
0
\\
\effOuLowerLeftTwo
&
\omegaAB
&
0
&
\effOuUpperRightOne
\\
\effOuLowerLeftOne
&
0
&
\omegaAB
&
\effOuUpperRightTwo
\\
0
&
\effOuLowerLeftOne
&
\effOuLowerLeftTwo
&
\omegaAB
\end{pmatrix}
\effOuLyapunovSolutionColumn
&=
\effOuLyapunovWeightColumn,
\qquad
\omegaAB
=
\effOuEqualDiagOne+\; \effOuEqualDiagTwo.
\label{suppeffOu.eq:cached_equal_diag_system}
\end{align}
The placement of \(\effOuUpperRightTwo\) and \(\effOuLowerLeftTwo\) in this system follows from the transpose on the second block.
The inverse of the coefficient matrix in~\eqref{suppeffOu.eq:cached_equal_diag_system} admits a convenient factorization.
Define
\begin{align}
c_1
&=
\omegaAB^2
+
\effOuUpperRightTwo\effOuLowerLeftTwo
-
\effOuUpperRightOne\effOuLowerLeftOne,
\\
c_2
&=
2\omegaAB\effOuUpperRightTwo,
\qquad
c_3
=
2\omegaAB\effOuLowerLeftTwo,
\\
c_0
&=
c_1^2-c_2c_3.
\end{align}
The scalar \(c_0\) is the determinant of the coefficient matrix in~\eqref{suppeffOu.eq:cached_equal_diag_system}.
Equivalently,
\begin{align}
c_0
&=
\left(
\omegaAB^2
-
\effOuBlkDelta{\effOuIdxB}
-
\effOuBlkDelta{\effOuIdxBTwo}
\right)^2
-
4\effOuBlkDelta{\effOuIdxB}
\effOuBlkDelta{\effOuIdxBTwo},
\end{align}
where
\(\effOuBlkDelta{\effOuIdxB}
=
\effOuUpperRightOne\effOuLowerLeftOne\) and
\(\effOuBlkDelta{\effOuIdxBTwo}
=
\effOuUpperRightTwo\effOuLowerLeftTwo\).
Thus \(c_0\neq0\) precisely when the corresponding block-pair Sylvester operator is nonsingular.
The solution can be evaluated by first computing
\begin{align}
\mathbf{b}
&=
\begin{pmatrix}
\omegaAB
&
\effOuUpperRightTwo
&
-\effOuUpperRightOne
&
0
\\
\effOuLowerLeftTwo
&
\omegaAB
&
0
&
-\effOuUpperRightOne
\\
-\effOuLowerLeftOne
&
0
&
\omegaAB
&
\effOuUpperRightTwo
\\
0
&
-\effOuLowerLeftOne
&
\effOuLowerLeftTwo
&
\omegaAB
\end{pmatrix}
\effOuLyapunovWeightColumn,
\end{align}
and then applying
\begin{align}
\effOuLyapunovSolutionColumn
&=
\frac{1}{c_0}
\begin{pmatrix}
c_1 & -c_2 & 0 & 0\\
-c_3 & c_1 & 0 & 0\\
0 & 0 & c_1 & -c_2\\
0 & 0 & -c_3 & c_1
\end{pmatrix}
\mathbf{b}.
\label{suppeffOu.eq:cached_equal_diag_offdiag_lyap}
\end{align}
For fixed drift blocks, both sparse factors in this expression, or equivalently their \(4\times4\) product, depend only on the block parameters and can therefore be cached and reused for different right-hand sides.
For a diagonal block pair with symmetric \(\effOuLyapunovWeight\), uniqueness of the Sylvester solution implies that \(\effOuLyapunovSolution\) is also symmetric.
Consequently, only
\((\effOuLyapunovSolutionEntry{00},
\effOuLyapunovSolutionEntry{01},
\effOuLyapunovSolutionEntry{11})\)
must be computed.
The reduced system is
\begin{align}
\begin{pmatrix}
2\effOuEqualDiagOne
&
2\effOuUpperRightOne
&
0
\\
\effOuLowerLeftOne
&
2\effOuEqualDiagOne
&
\effOuUpperRightOne
\\
0
&
2\effOuLowerLeftOne
&
2\effOuEqualDiagOne
\end{pmatrix}
\begin{pmatrix}
\effOuLyapunovSolutionEntry{00}\\
\effOuLyapunovSolutionEntry{01}\\
\effOuLyapunovSolutionEntry{11}
\end{pmatrix}
&=
\begin{pmatrix}
\effOuLyapunovWeightEntry{00}\\
\effOuLyapunovWeightEntry{01}\\
\effOuLyapunovWeightEntry{11}
\end{pmatrix}.
\label{suppeffOu.eq:cached_equal_diag_diag_lyap}
\end{align}
A fixed block drift therefore defines a small reusable solve plan: one \(3\times3\) inverse map for each diagonal block and one \(4\times4\) inverse map for each off-diagonal block pair.
When the full right-hand side is symmetric, the solve is required only for
\(\effOuIdxB\leq\effOuIdxBTwo\), and the remaining blocks are recovered as
\begin{align}
\effOuLyapunovSolution_{\effOuIdxBTwo\effOuIdxB}
&=
\effOuLyapunovSolution_{\effOuIdxB\effOuIdxBTwo}^{\effOuTranspose}.
\end{align}

\section{Fr\'echet Derivative and Adjoint Block Kernels}
\subsection{Matrix exponential block Fr\'echet derivatives}
\label{suppeffOu.forward_frechet}

For a perturbation \(\effOuFrechetDirection\), the Fr\'echet derivative of
\(\effOuBlkDiag\mapsto e^{\effOuTime\effOuBlkDiag}\) can be written using the following integral representation \citep{najfeld_derivatives_1995}
\begin{align}
  \DexpBlk[\effOuFrechetDirection]
  =
  \effOuTime\int_0^1
  e^{(1-s)\effOuTime\effOuBlkDiag}\effOuFrechetDirection\,e^{s\effOuTime\effOuBlkDiag}\dd s.
\end{align}
Because \(\effOuBlkDiag\) is block diagonal, we can evaluate this integral block by block \citep{monti_nonparametric_2025}:
\begin{align}
  [\DexpBlk[\effOuFrechetDirection]]_{\effOuIdxB\effOuIdxBTwo}
  =
  \effOuTime\int_0^1
  e^{(1-s)\effOuTime\effOuBlk{\effOuIdxB}}
  {\effOuFrechetDirection}_{\effOuIdxB\effOuIdxBTwo}
  e^{s\effOuTime\effOuBlk{\effOuIdxBTwo}}\dd s.
  \label{suppeffOu.eq:block_frechet}
\end{align}
Thus the derivative is a collection of independent block-pair kernels.

As a prerequisite for evaluating these kernels, we first define the \emph{divided difference} function as
\begin{align}
  \varphi(\effOusuppDummyIndexOne,\effOusuppDummyIndexTwo)
  =
  \int_0^1
  e^{(1-s)\effOusuppDummyIndexOne}
  e^{s\effOusuppDummyIndexTwo}\dd s
  =
  \begin{cases}
    \dfrac{e^{\effOusuppDummyIndexOne}-e^{\effOusuppDummyIndexTwo}}
      {\effOusuppDummyIndexOne-\effOusuppDummyIndexTwo},
      & \effOusuppDummyIndexOne\ne \effOusuppDummyIndexTwo,\\[8pt]
    e^{\effOusuppDummyIndexOne},
      & \effOusuppDummyIndexOne=\effOusuppDummyIndexTwo.
  \end{cases}
\end{align}
For numerical evaluation near coincident arguments, we use the equivalent
\(\varphi(x,y)=e^y\operatorname{exprel}(x-y)\), where
\(\operatorname{exprel}(z)=\operatorname{expm1}(z)/z\) with limiting value \(1\) at \(z=0\).
We then provide explicit formulas for Equation~\eqref{suppeffOu.eq:block_frechet} when the two blocks are scalar, when one is scalar and one is two-dimensional, and when both are two-dimensional.
\paragraph{Scalar blocks}
For scalar blocks,
\(\effOuBlk{\effOuIdxB}=\effOuBlockScalarOne\) and
\(\effOuBlk{\effOuIdxBTwo}=\effOuBlockScalarTwo\), we simply have:
\begin{align}
[\DexpBlk[\effOuFrechetDirection]]_{\effOuIdxB\effOuIdxBTwo} = \effOuTime\,\varphi(\effOuTime\effOuBlockScalarOne,\effOuTime\effOuBlockScalarTwo)\,{\effOuFrechetDirection}_{\effOuIdxB\effOuIdxBTwo}.
\end{align}

\paragraph{Mixed block pairs}

If \(\effOuBlk{\effOuIdxB}=\effOuBlockScalarOne\) is scalar and
\(\effOuBlk{\effOuIdxBTwo}\) is two-dimensional, then
\begin{align}
  [\DexpBlk[\effOuFrechetDirection]]_{\effOuIdxB\effOuIdxBTwo}
  =
  \effOuTime\, e^{\effOuTime\effOuBlockScalarOne}
  {\effOuFrechetDirection}_{\effOuIdxB\effOuIdxBTwo}
  \phi_1\!\left(\effOuTime\effOuBlk{\effOuIdxBTwo}-\effOuTime\effOuBlockScalarOne\effOuIdentity\right),
\end{align}
where $\phi_1(\mathbf{M})=\int_0^1 e^{s\mathbf{M}}\dd s$.
Similarly, when the left block is two-dimensional and the right block is
scalar, we have
\begin{align}
  [\DexpBlk[\effOuFrechetDirection]]_{\effOuIdxB\effOuIdxBTwo}
  =
  \effOuTime\, e^{\effOuTime\effOuBlockScalarTwo}
  \phi_1\!\left(\effOuTime\effOuBlk{\effOuIdxB}-\effOuTime\effOuBlockScalarTwo\effOuIdentity\right)
  {\effOuFrechetDirection}_{\effOuIdxB\effOuIdxBTwo}.
\end{align}
The same two-dimensional exponential formula from
Section~\ref{suppeffOu.sec:forward_exp} evaluates \(\phi_1\), either by
\(\mathbf{M}^{-1}(e^{\mathbf{M}}-\effOuIdentity)\) when well conditioned or by its power series.

\paragraph{Two-dimensional--two-dimensional block pairs} 

Let
\(\effOuBlk{\effOuIdxB}=\effOuBlockScalarOne\effOuIdentity+\effOuBlkN{\effOuIdxB}\) and
\(\effOuBlk{\effOuIdxBTwo}=\effOuBlockScalarTwo\effOuIdentity+\effOuBlkN{\effOuIdxBTwo}\), with
\(\effOuBlkN{\effOuIdxB}^2=\effOuBlkDelta{\effOuIdxB}\effOuIdentity\) and
\(\effOuBlkN{\effOuIdxBTwo}^2=\effOuBlkDelta{\effOuIdxBTwo}\effOuIdentity\).
Substituting the single-block exponentials from
Section~\ref{suppeffOu.sec:forward_exp} into the block integral
\eqref{suppeffOu.eq:block_frechet} and using the closure
\(\effOuBlkN{\effOuIdxB}^2=\effOuBlkDelta{\effOuIdxB}\effOuIdentity\) and
\(\effOuBlkN{\effOuIdxBTwo}^2=\effOuBlkDelta{\effOuIdxBTwo}\effOuIdentity\)
shows that no products beyond left and right multiplication by the two hollow
residuals are needed.
With the convention that exponent \(0\) denotes the identity and exponent \(1\)
the corresponding residual, the block Fr\'echet kernel can therefore be written as
\begin{align}
  [\DexpBlk[\effOuFrechetDirection]]_{\effOuIdxB\effOuIdxBTwo}
  =
  \sum_{u,v\in\{0,1\}}
  f_{uv}^{\effOuIdxB\effOuIdxBTwo}(\effOuTime)\,
  \effOuBlkN{\effOuIdxB}^{u}
  \effOuFrechetDirection_{\effOuIdxB\effOuIdxBTwo}
  \effOuBlkN{\effOuIdxBTwo}^{v},
  \label{suppeffOu.eq:frechet_span}
\end{align}
where the stable coefficient formula is
\begin{align}
  f_{uv}^{\effOuIdxB\effOuIdxBTwo}(\effOuTime)
  =
  \effOuTime^{1+u+v}
  \sum_{i,j\ge 0}
  \frac{(\effOuTime^2\effOuBlkDelta{\effOuIdxB})^i}{(2i+u)!}
  \frac{(\effOuTime^2\effOuBlkDelta{\effOuIdxBTwo})^j}{(2j+v)!}
  M_{2i+u,\,2j+v}^{\effOuIdxB\effOuIdxBTwo}(\effOuTime).
  \label{suppeffOu.eq:fuv_series}
\end{align}
The moment in \eqref{suppeffOu.eq:fuv_series} is
\begin{align}
  M_{m,n}^{\effOuIdxB\effOuIdxBTwo}(\effOuTime)
  =
  \int_0^1
  (1-s)^m s^n
  \exp\!\left[
    \effOuTime\{(1-s)\effOuBlockScalarOne+s\effOuBlockScalarTwo\}
  \right]\dd s.
  \label{suppeffOu.eq:mn_moment}
\end{align}
This single formula covers real-pair, repeated, and oscillatory blocks.
For exact repeated blocks, set the corresponding \(\effOuBlkDelta{}\) to zero;
then all higher powers in that block vanish automatically.
If both blocks are repeated, we have
\begin{align}
  f_{uv}^{\effOuIdxB\effOuIdxBTwo}(\effOuTime)
  &=
  \effOuTime^{1+u+v}
  M_{u,v}^{\effOuIdxB\effOuIdxBTwo}(\effOuTime),
  \qquad u,v\in\{0,1\}.
  \notag
\end{align}

Away from repeated blocks, the same coefficient can also be evaluated through a
four-point divided difference when the square-root gaps are well conditioned:
\begin{align}
  f_{uv}^{\effOuIdxB\effOuIdxBTwo}(\effOuTime)
  =
  \frac{\effOuTime}
       {4(\sqrt{\effOuBlkDelta{\effOuIdxB}})^u
          (\sqrt{\effOuBlkDelta{\effOuIdxBTwo}})^v}
  \sum_{\sigma,\tau\in\{\pm1\}}
  \sigma^u\tau^v
  \varphi\!\left(
    \effOuTime(\effOuBlockScalarOne+\sigma\sqrt{\effOuBlkDelta{\effOuIdxB}}),
    \effOuTime(\effOuBlockScalarTwo+\tau\sqrt{\effOuBlkDelta{\effOuIdxBTwo}})
  \right).
  \label{suppeffOu.eq:fuv_divided_difference}
\end{align}
Direct evaluation of \eqref{suppeffOu.eq:fuv_divided_difference} can suffer from
cancellation when one or both blocks are close to repeated.
In implementation, \eqref{suppeffOu.eq:fuv_series} is used whenever
\(|\effOuTime^2\effOuBlkDelta{\effOuIdxB}|\) or
\(|\effOuTime^2\effOuBlkDelta{\effOuIdxBTwo}|\) is small, truncating the double
series once the remaining terms are below the desired numerical tolerance.

\paragraph{Stable moment evaluation}
\label{suppeffOu.stable_divided_difference}

The moments \(M_{m,n}^{\effOuIdxB\effOuIdxBTwo}\) in
\eqref{suppeffOu.eq:fuv_series} are reduced to
one-dimensional exponential moments.
Define
\begin{align}
  m_k(c)=\int_0^1 s^k e^{sc}\dd s.
\end{align}
Let \(a=\effOuTime\effOuBlockScalarOne\) and \(b=\effOuTime\effOuBlockScalarTwo\).
When \(a\ge b\), factor out \(e^a\); since
\((1-s)^m=\sum_{h=0}^{m}(-1)^h\binom{m}{h}s^h\),
\begin{align}
  M_{m,n}^{\effOuIdxB\effOuIdxBTwo}(\effOuTime)
  =
  e^a
  \sum_{h=0}^{m}
  (-1)^h
  \binom{m}{h}
  m_{n+h}(b-a).
  \label{suppeffOu.eq:mn_from_mk}
\end{align}
When \(b>a\), applying the change of variables \(s\mapsto1-s\) instead gives
\begin{align}
  M_{m,n}^{\effOuIdxB\effOuIdxBTwo}(\effOuTime)
  =
  e^b
  \sum_{h=0}^{n}
  (-1)^h
  \binom{n}{h}
  m_{m+h}(a-b).
\end{align}
The residual exponential argument is therefore always nonpositive in Hurwitz H-SSBP applications.
The repeated-eigenvalue limits require only the scalar moments \(m_k\),
not finite differences of nearly equal quantities.
Away from zero, evaluate \(m_0(c)=\mathrm{expm1}(c)/c\) to avoid cancellation
and then use the recurrence
$m_k(c)=e^c/c-(k/c)m_{k-1}(c)$.
When \(|c|\) is below the chosen tolerance, use the convergent series
\begin{align}
  m_k(c)
  =
  \sum_{n=0}^{\infty}
  \frac{c^n}{n!(n+k+1)}.
\end{align}

\subsection{Adjoint block kernels for matrix exponentials}
\label{suppeffOu.adjoint_exp}

From this point onward, adjoints are taken with respect to the Frobenius pairing
\(\langle \mathbf{X},\mathbf{Y}\rangle =
\operatorname{tr}(\mathbf{X}^{\effOuTranspose}\mathbf{Y})\).
Let \(\effOuAdjointVarSupp{}\) be the adjoint seed with respect to \(e^{\effOuTime\effOuBlkDiag}\).
The adjoint of the Fr\'echet derivative is defined by
$\langle \DexpBlk[\effOuFrechetDirection],\effOuAdjointVarSupp{}\rangle = \langle \effOuFrechetDirection,\DexpBlk^*[\effOuAdjointVarSupp{}]\rangle$.
Using cyclicity of the trace in the integral representation gives
\begin{align}
  \DexpBlk^*[\effOuAdjointVarSupp{}]
  =
  \effOuTime\int_0^1
  e^{s\effOuTime\effOuBlkDiag^{\effOuTranspose}}
  \effOuAdjointVarSupp{}
  e^{(1-s)\effOuTime\effOuBlkDiag^{\effOuTranspose}}
  \dd s.
  \label{suppeffOu.eq:adjoint_frechet}
\end{align}
Therefore the adjoint exponential uses the same kernels as in
Section~\ref{suppeffOu.forward_frechet}, with \(\effOuBlkDiag\) replaced by
\(\effOuBlkDiag^{\effOuTranspose}\).
For $\effOuActual{}=\effOuRotMat e^{\effOuTime\effOuBlkDiag}\effOuRotMat^{-1}$,
an incoming adjoint \(\effOuAdjointVarSupp{A}\) with respect to \(\effOuActual{}\) induces the
block-basis seed $\effOuAdjointVarSuppRotated{E} = \effOuRotMat^{\effOuTranspose} \effOuAdjointVarSupp{A} \effOuRotMat^{-\effOuTranspose}$.
The block-basis adjoint with respect to \(\effOuBlkDiag\) is then
$\effOuAdjointVarSupp{B}^{(\exp)} = \DexpBlk^*[\effOuAdjointVarSuppRotated{E}]$,
the quantity whose diagonal blocks are contracted with the native H-SSBP
block parameters.

\subsection{Adjoint Lyapunov block solves}
\label{suppeffOu.adjoint_lyap}

Suppose the log-likelihood contributes a seed \(\effOuAdjointVarSuppRotated{\effOuStatVarIndex}\) with respect to
\(\tilde{\effOuStatVar}\), so that $\dd\effOuLogLikelihood = \langle \effOuAdjointVarSuppRotated{\effOuStatVarIndex},\dd\tilde{\effOuStatVar}\rangle$.
Differentiating the Lyapunov equation in the block basis
$  \effOuBlkDiag\tilde{\effOuStatVar}
  +
  \tilde{\effOuStatVar}\effOuBlkDiag^{\effOuTranspose}
  =
  -\tilde{\effOuDiffMat}$
at fixed \(\tilde{\effOuDiffMat}\) gives
\begin{align}
  \effOuBlkDiag\dd\tilde{\effOuStatVar}
  +
  \dd\tilde{\effOuStatVar}\effOuBlkDiag^{\effOuTranspose}
  =
  -(\dd\effOuBlkDiag)\tilde{\effOuStatVar}
  -
  \tilde{\effOuStatVar}(\dd\effOuBlkDiag)^{\effOuTranspose}.
\end{align}
Define the block-basis Lyapunov adjoint \(\effOuLyapAdjRot\) and the induced block-drift adjoint by
\begin{align}
  \effOuBlkDiag^{\effOuTranspose}\effOuLyapAdjRot
  +\effOuLyapAdjRot\effOuBlkDiag
  &=
  -\effOuAdjointVarSuppRotated{\effOuStatVarIndex},
  \label{suppeffOu.eq:adjoint_lyap}\\
  \effOuAdjointVarSupp{B}^{(\mathrm{Lyap})}
  &=
  \effOuLyapAdjRot\tilde{\effOuStatVar}
  +
  \effOuLyapAdjRot^{\effOuTranspose}\tilde{\effOuStatVar}.
  \label{suppeffOu.eq:lyap_grad_D}
\end{align}
Here the second line uses symmetry of \(\tilde{\effOuStatVar}\) and follows by
cyclically pairing the differential above with \(\effOuLyapAdjRot\).

\paragraph{Block-pair adjoint solve}
Equation~\eqref{suppeffOu.eq:adjoint_lyap} decomposes over block pairs as
\begin{align}
  \effOuBlk{\effOuIdxB}^{\effOuTranspose}
  \effOuLyapAdjRot_{\effOuIdxB\effOuIdxBTwo}
  +
  \effOuLyapAdjRot_{\effOuIdxB\effOuIdxBTwo}
  \effOuBlk{\effOuIdxBTwo}
  &=
  -(\effOuAdjointVarSuppRotated{\effOuStatVarIndex})_{\effOuIdxB\effOuIdxBTwo},
  \notag
\end{align}
which is the same Sylvester kernel as in Section~\ref{suppeffOu.sylvester_inverse}.
In implementation, this is the cached equal-diagonal solve
\eqref{suppeffOu.eq:cached_equal_diag_offdiag_lyap}--\eqref{suppeffOu.eq:cached_equal_diag_diag_lyap}
with the first block's off-diagonal entries swapped, since
\(\effOuBlk{\effOuIdxB}^{\effOuTranspose}\) exchanges
\((\effOuUpperRightOne,\effOuLowerLeftOne)\).
For a symmetric seed, only the upper block-pair equations are solved, and the resulting blocks are then mirrored.
After the adjoint solve, the Lyapunov contribution to the block-drift gradient
is obtained from \eqref{suppeffOu.eq:lyap_grad_D} and then contracted with the native
H-SSBP block parameters.

\paragraph{Diffusion pullback}
If the diffusion matrix is also parametrized, the same adjoint variable \(\effOuLyapAdjRot\)
gives the seed with respect to \(\tilde{\effOuDiffMat}\).
When \(\effOuBlkDiag\) is fixed,
\(\effOuBlkDiag\dd\tilde{\effOuStatVar}
+\dd\tilde{\effOuStatVar}\effOuBlkDiag^{\effOuTranspose}
=-\dd\tilde{\effOuDiffMat}\), so the convention
\eqref{suppeffOu.eq:adjoint_lyap} yields
$\effOuAdjointVarSuppRotated{\effOuDiffMatIndex} = \effOuLyapAdjRot$
and
$\dd\effOuLogLikelihood =
\langle \effOuLyapAdjRot,\dd\tilde{\effOuDiffMat}\rangle$.
This seed is then pulled back through
\(\tilde{\effOuDiffMat}
=\effOuRotMat^{-1}\effOuDiffMat\effOuRotMat^{-\effOuTranspose}\).

\section{Supplementary Material for Numerical Robustness}
\label{suppeffOu.sec:numerical_robustness}

This section reports numerical-accuracy stress tests for the drift-dependent kernels used by the OU likelihood and its adjoints.
The tests compare the H-SSBP with the Lyapunov-based parametrization.
The tests were implemented in \texttt{Julia}~1.10.3 with ILP64 OpenBLAS/LAPACK and one BLAS thread.
Each reported error compares a double-precision \texttt{Float64} computation with a 256-bit \texttt{BigFloat} reference using relative Frobenius-norm error.
In the benchmarks below, we consider a homogeneous OU process evolving along an edge of length $0.2$.
Except for the first real--complex boundary benchmark, the process is set to have dimension \(\effOuDim=8\).

For the H-SSBP, we stress the repeated-root boundary between real and complex block eigenvalues and the conditioning of the change-of-basis matrix $\effOuRotMat$ when $\effOuRotMat$ is not orthogonal.
The repeated-root experiment uses one \(2\times2\) block with diagonal entry \(-2\), upper off-diagonal entry \(1\), and lower off-diagonal entry \(\pm d\), where \(d\in\{10^{-16},10^{-15},\ldots,10^{0}\}\).
Errors in this panel are medians over \(16\) random seeds and over the real and complex sides of the boundary.
The basis-conditioning experiment uses random bases with \(\kappa(\effOuRotMat)\in\{10^{0},10^{2},\ldots,10^{10}\}\).
For the Lyapunov-based parametrization, we stress the conditioning of the prescribed stationary covariance $\effOuStatVar$ and the magnitude of the skew-symmetric component $\effOuSkewSymmatrixMat$ in $\effOuDriftMatrix=\left(-\tfrac{1}{2}\effOuDiffMat+\effOuSkewSymmatrixMat\right)\effOuStatVar^{-1}$.
These panels use \(\kappa(\effOuStatVar)\in\{10^{0},10^{2},\ldots,10^{10}\}\), and skew strengths over \(\{10^{-6},10^{-4},\ldots,10^{6}\}\).

Figure~\ref{suppeffOu.fig:ou_ssbp_vs_lyapunov_param_stability_summary} shows that the H-SSBP block formulas remain accurate at the real--complex transition: exponential, Lyapunov, covariance, and adjoint errors stay close to the high-precision reference as the block discriminant approaches zero.
The main H-SSBP sensitivity is instead the conditioning of $\effOuRotMat$, which affects dense reconstruction and basis-change operations.
The block-basis kernels themselves remain stable.
The Lyapunov-based parametrization shows the corresponding behavior for its own dense construction: errors increase when $\effOuStatVar$ is ill-conditioned or when the skew-symmetric component is large.
Together, these panels show that the numerical loss of accuracy is substantially stronger for the Lyapunov-based parametrization, confirming that the H-SSBP improves both speed and numerical stability.
%
\begin{figure}[!ht]
\centering
\includegraphics[width=1\textwidth]{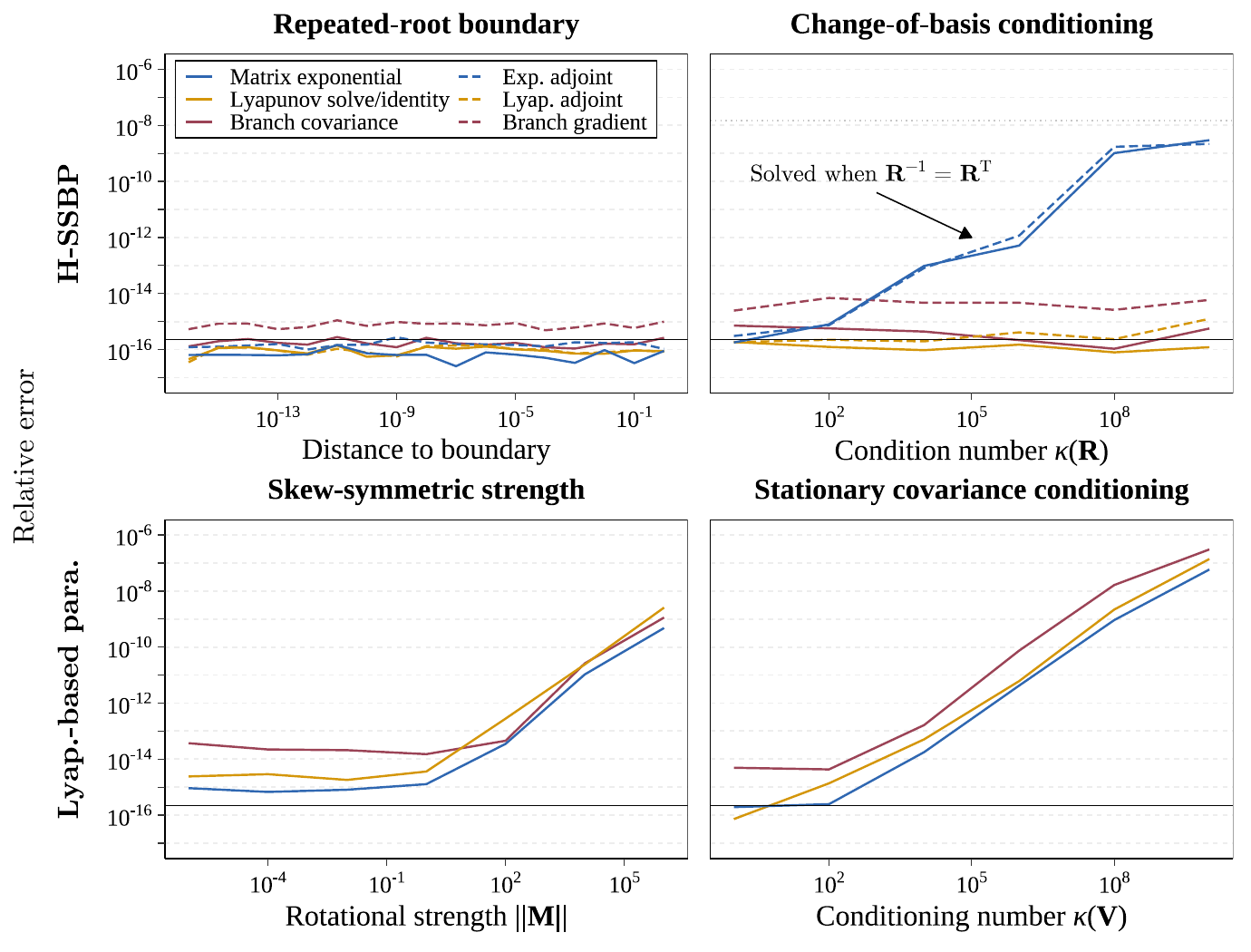}
\caption{
Numerical-stability stress tests for the H-SSBP and the Lyapunov-based parametrization.
For the H-SSBP, the panels vary the distance to the repeated-root boundary and the condition number $\kappa(\effOuRotMat)$ of the change-of-basis matrix.
For the Lyapunov-based parametrization, the panels vary the Frobenius norm of the skew-symmetric component $\effOuSkewSymmatrixMat$ and the condition number $\kappa(\effOuStatVar)$ of the prescribed stationary covariance.
Errors compare \texttt{Float64} computations with 256-bit \texttt{BigFloat} references.
The solid black line marks machine precision and the dotted gray line marks its square root.
}
\label{suppeffOu.fig:ou_ssbp_vs_lyapunov_param_stability_summary}
\end{figure}
\section{Supplementary Material for Simulations}
\label{suppeffOu.sec:simulation_material}

\paragraph{Simulation design}
We simulated data under a \(5\)-dimensional homogeneous Ornstein--Uhlenbeck process:
\begin{align}
\dd\effOuDataRv{}(\effOuTime)
&=
\effOuDriftMatrix
\left(
\effOuDataRv{}(\effOuTime)
-
\effOuEquilMean
\right)
\,\dd \effOuTime
+ \effOuDiffMat^{1/2}\,\dd\effOuWienerProcess{}(\effOuTime).
\end{align}
We used three grids of drift matrices: (1) an orthogonal real--complex boundary grid, (2) a diagonalizable nonnormal shear grid, and (3) a defective Jordan-coupling grid.
For each grid value in each experiment, we generated \(25\) independent trajectories observed at \(\effNObs=800\) time points with inter-observation gaps drawn independently from a uniform distribution on $[0,0.1]$.
All \(\effOuDim\) coordinates were observed at every time point, giving \(\effNObs\effOuDim=4000\) scalar observations per data set.
The three grids are defined below.

\paragraph{Common simulation settings}
The equilibrium mean was fixed at \(\effOuEquilMean=\mathbf 0\), and the initial state was drawn from the stationary distribution implied by \(\effOuDriftMatrix\) and \(\effOuDiffMat\).
The observation model was
\begin{align}
\effOuDataRv{m}^*
&=
\effOuDataRv{}(\effOuTime_m)+\boldsymbol{\epsilon}_m,
\qquad
\boldsymbol{\epsilon}_m
\overset{\mathrm{ind}}{\sim}
\effOuNorm\!\left(\mathbf 0,\mathbf \Sigma_{\mathrm{obs}}\right),
\end{align}
independently over \(m\).
The diffusion matrix and observation-error covariance matrix used to generate the data were
\begin{align}
\effOuDiffMat
&=
\diag(0.16,0.12,0.14,0.11,0.13),
\\
\mathbf \Sigma_{\mathrm{obs}}
&=
\diag(0.012,0.010,0.011,0.009,0.0105),
\end{align}
respectively.
In the MAP fits, the equilibrium mean, initial covariance, and observation-error covariance were fixed at their generating values.
The diffusion covariance was inferred through a Cholesky correlation representation, with independent \(\mathrm{Gamma}(0.5,0.5)\) shape--scale priors on the diagonal diffusion parameters and an LKJ\((1)\) prior on the correlation matrix.

\paragraph{Inference settings} For each simulated data set, we computed MAP estimates under two parametrizations of the drift: the orthogonal H-SSBP, \(\effOuDriftMatrix=\effOuRotMat\effOuBlkDiag\effOuRotMat^{\effOuTranspose}\), and the generic dense-basis H-SSBP, \(\effOuDriftMatrix=\effOuRotMat\effOuBlkDiag\effOuRotMat^{-1}\).
For both parametrizations, the unconstrained coordinates for the block parameters corresponding to the scalar rate, block-rate increments \(\tilde{\effOuCartBlockRate}_{\effOuIdxB}\), and oscillatory parameters \(\tilde{\effOuBlkT{}}_{\effOuIdxB}\) were assigned independent \(N(0,1)\) priors.
The coupling parameters \(\effOuCartNormalizingRate_{\effOuIdxB}\) were assigned independent uniform priors on \((-1,1)\).
The orthogonal H-SSBP used independent \(N(0,0.25^2)\) priors on the Givens angles.
For the dense-basis H-SSBP, \(\effOuRotMat\) was represented by its raw entries and regularized by the prior
\begin{align}
\log p(\effOuRotMat)
&=
C
+\lambda\log\left|\det\effOuRotMat\right| - \frac{\lambda \effOuDim}{2}\|\effOuRotMat \|_F^2,
\qquad
\lambda=0.1.
\end{align}

For each data set and fitted parametrization, the MAP optimization was run from five independent random initializations with a common budget of \(5000\) iterations.
Optimization used limited-memory BFGS (L-BFGS) \citep{liu_limited_1989} on the unconstrained inference coordinates with analytic log-posterior gradients.
Orthogonal starts used small Givens-angle perturbations, while dense-basis starts used a raw basis matrix \(\effOuRotMat=\effOuIdentity+\mathbf E\), with entries of \(\mathbf E\) drawn independently from a centered normal distribution with standard deviation \(0.05\), followed by a global normalization to achieve Frobenius norm equal to one.
The MAP estimate reported for a data set and parametrization is the optimization run with the largest final log posterior among the five starts.
We report the drift reconstruction error
\begin{align}
\mathrm{RMSE}
\left(
\widehat{\effOuDriftMatrix},
\effOuDriftMatrix
\right)
&=
\left\{
\frac{1}{\effOuDim^2}
\left\|
\widehat{\effOuDriftMatrix}
-
\effOuDriftMatrix
\right\|_F^2
\right\}^{1/2}.
\end{align}
Each point in Figure~2 of the main manuscript is the median RMSE across the \(25\) selected MAP estimates, with the interquartile range shown as a ribbon.
Optimization stabilization was assessed using the relative log-posterior improvement between the last two reported L-BFGS iterations.
A selected MAP estimate was treated as stabilized when this relative improvement was at most \(10^{-6}\).
All \(950\) selected MAP estimates in Figure~2 satisfied this rule.

\paragraph{Generating drift matrices} We now present the process used to generate the drift matrices for each of the three grids.
\begin{enumerate}
  \item 

The baseline orthogonal truth consisted of one scalar block and two \(2\times2\) H-SSBP blocks,
\begin{align}
\effOuBlk{\effOuIdxB}^{\mathrm{orth}}
&=
\begin{pmatrix}
\effOuCartBlockRate_{\effOuIdxB}
&
\effOuCartBlockRate_{\effOuIdxB}\effOuCartNormalizingRate_{\effOuIdxB}
+
\effOuBlkT{\effOuIdxB}
\\
\effOuCartBlockRate_{\effOuIdxB}\effOuCartNormalizingRate_{\effOuIdxB}
-
\effOuBlkT{\effOuIdxB}
&
\effOuCartBlockRate_{\effOuIdxB}
\end{pmatrix},
\qquad
\effOuDriftMatrix_{\mathrm{orth}}
=
\effOuRotMat_{\mathrm{orth}}
\operatorname{bdiag}
\left(
-0.72,
\effOuBlk{1}^{\mathrm{orth}},
\effOuBlk{2}^{\mathrm{orth}}
\right)
\effOuRotMat_{\mathrm{orth}}^{\effOuTranspose}.
\end{align}
The block parameters were \((\effOuCartBlockRate_{1},\effOuCartBlockRate_{2})=(-0.54604475,-0.89635746)\), \((\effOuCartNormalizingRate_{1},\effOuCartNormalizingRate_{2})=(0.12057934,-0.09024379)\), and \((\effOuBlkT{1},\effOuBlkT{2})=(0.24,0.16)\).
The orthogonal basis \(\effOuRotMat_{\mathrm{orth}}\) was generated by the fixed Givens-angle vector
\begin{align}
\boldsymbol{\effOuGivensAngles}_{\mathrm{orth}}
&=
(0.16,-0.12,0.08,0.05,-0.09,
0.11,-0.06,0.07,-0.04,0.10).
\end{align}
The first experiment stayed within this correctly specified orthogonal family and varied the rotational parameter $\effOuBlkT{1}$ of the first two-dimensional block.
We set
\begin{align}
&\effOuBlkT{1}
=
u\,\effOuBlkT{1}^*,
\qquad
\effOuBlkT{1}^*
=
\left|\effOuCartBlockRate_{1}\effOuCartNormalizingRate_{1}\right|,
\\
&u
\in
\{0.5,0.75,0.95,1.0,1.05,1.25,1.5\}.
\end{align}
The resulting \(2\times2\) block has two real eigenvalues for \(u<1\), a repeated real eigenvalue at \(u=1\), and a complex-conjugate pair for \(u>1\).
This experiment therefore isolates behavior near the real--complex eigenvalue boundary while leaving the orthogonal H-SSBP correctly specified.

\item The second experiment generated diagonalizable but nonnormal drifts by introducing shear in block coordinates.
For shear multiplier \(\effOuShear\), define
\begin{align}
\effOuDriftMatrix_0(\effOuShear)
&=
\operatorname{bdiag}
\left\{
\effOuBlk{1}^{(\effOuShear)},
\effOuBlk{2}^{(\effOuShear)},
-1.5
\right\},
\qquad
\effOuBlk{\effOuIdxB}^{(\effOuShear)}
=
\begin{pmatrix}
-r_{\effOuIdxB}
&
s_{\effOuIdxB}\effOuShear r_{\effOuIdxB}
\\
0
&
-2r_{\effOuIdxB}
\end{pmatrix},
\end{align}
where \(r_1=0.5\), \(r_2=1.0\), the signs \(s_{\effOuIdxB}\in\{-1,1\}\) were fixed by the simulation seed, and the shear multiplier was \(\effOuShear\in\{3,4,5,6,7,8\}\).
The observed-coordinate drift was obtained by applying a fixed dense orthogonal rotation,
\begin{align}
\effOuDriftMatrix_{\effOuShear}
&=
\effOuSchurRot
\effOuDriftMatrix_0(\effOuShear)
\effOuSchurRot^{\effOuTranspose}.
\end{align}
The dense-basis H-SSBP is correctly specified for this experiment, whereas the generating drifts are not representable by the orthogonal H-SSBP.

\item The third experiment used a larger defective block to create a controlled model-misspecification setting.
For coupling \(c\), the truth was
\begin{align}
\effOuDriftMatrix_{\mathrm J}(c)
&=
\effOuSchurRot_{\mathrm J}
\operatorname{bdiag}
\left\{-2,
-\effOuIdentity_4+c\effOuBlkN{\mathrm J}
\right\}
\effOuSchurRot_{\mathrm J}^{\effOuTranspose},
\end{align}
where \(\effOuBlkN{\mathrm J}\) has ones on the first superdiagonal and zeros elsewhere.
The coupling grid was
\begin{align}
c
&\in
\{0.25,0.5,0.75,1.0,1.25,1.5\}.
\end{align}
For \(c>0\), the drift contains a real Jordan chain of length four.
Neither fitted H-SSBP parametrization can represent such a block, so this experiment probes the effect of approximating a larger defective invariant subspace with scalar and two-dimensional blocks.
\end{enumerate}

\paragraph{Results}
Gold curves in Figure~2 of the main manuscript correspond to the orthogonal H-SSBP, and blue curves correspond to the dense-basis H-SSBP.
\begin{enumerate}

\item In the real--complex boundary experiment, the orthogonal truth remained correctly specified throughout the grid.
The orthogonal fit was essentially flat, with median RMSE between \(0.152\) and \(0.156\), while the dense-basis fit was also flat but higher, between \(0.260\) and \(0.262\).
This indicates that the eigenvalue-regime boundary itself did not create a visible loss of accuracy at this sample size.

\item In the diagonalizable non-orthogonal experiment, the orthogonal fit degraded monotonically as shear increased, with median RMSE rising from \(0.648\) to \(1.803\).
The dense-basis fit remained much lower, between \(0.340\) and \(0.358\), consistent with the fact that the generating drifts are representable by the dense-basis H-SSBP but not by the orthogonal specialization.
The dashed gold curve in this panel is a certified lower bound on the drift RMSE attainable by any orthogonal H-SSBP fit to \(\effOuDriftMatrix_{\effOuShear}\).
It is computed as \(d_{\mathrm{diss}}(\effOuDriftMatrix_{\effOuShear})/\effOuDim\), where
\begin{align}
d_{\mathrm{diss}}(\effOuDriftMatrix_{\effOuShear})
&=
\left\|
\left[
\operatorname{sym}(\effOuDriftMatrix_{\effOuShear})
\right]_+
\right\|_F,
\qquad
\operatorname{sym}(\effOuDriftMatrix_{\effOuShear})
=
\frac{
\effOuDriftMatrix_{\effOuShear}
+
\effOuDriftMatrix_{\effOuShear}^{\effOuTranspose}
}{2}.
\end{align}
Here \([\cdot]_+\) denotes the positive-semidefinite part of a symmetric matrix, obtained by retaining its positive eigenvalues and setting all nonpositive eigenvalues to zero.
Because every orthogonal H-SSBP matrix has negative-definite symmetric part under the stability constraints used here, no orthogonal fit can attain a lower RMSE than this floor even with unlimited data.
Indeed, for any orthogonal H-SSBP candidate \(\effOuDriftMatrix \), the Frobenius projection onto symmetric matrices gives
\(\|\effOuDriftMatrix_{\effOuShear}-\effOuDriftMatrix\|_F
\geq
\|\operatorname{sym}(\effOuDriftMatrix_{\effOuShear})-\operatorname{sym}(\effOuDriftMatrix )\|_F\).
Since \(\operatorname{sym}(\effOuDriftMatrix)\prec0\), the distance from \(\operatorname{sym}(\effOuDriftMatrix_{\effOuShear})\) to the negative-semidefinite cone is
\(\|[\operatorname{sym}(\effOuDriftMatrix_{\effOuShear})]_+\|_F\), yielding the plotted RMSE bound after division by \(\effOuDim\).
The dashed curve therefore certifies an irreducible lower bound.
\item In the Jordan experiment, both fitted parametrizations were misspecified.
The orthogonal fit had median RMSE between \(0.285\) and \(0.444\), while the dense-basis fit ranged from \(0.310\) to \(0.329\).
The dense-basis fit was more stable, consistent with the additional flexibility of a general rather than orthogonal change-of-basis matrix.
\end{enumerate}

\section{Supplementary Material for BitMEX Trade-Data Analysis}
\label{suppeffOu.sec:empirical_details}

This section records the BitMEX post-shock analysis summarized in Section~8.1.2 of the main manuscript.
The records were timestamped trades for actively listed cryptocurrency instruments retrieved through the public BitMEX REST API \citep{bitmex_api_2026}.
We focused on the recovery following the market drawdown observed on January~24,~2022.
We analyzed the following instruments: \texttt{XBTUSDT}, \texttt{ETHUSDT}, \texttt{LTCUSDT}, \texttt{BCHUSDT}, and \texttt{DOGEUSDT}.
The analysis window spans seven hours, from 13:00 to 20:00 UTC.
The common post-shock start time was defined by the one-minute \texttt{XBTUSDT} interval containing the local price trough following the drawdown.
For each instrument and timestamp, trades sharing that instrument and timestamp were aggregated as
\begin{align}
\bar p_{j,t}
&=
\frac{
\sum_{k\in \mathcal I_{j,t}} w_k p_k
}{
\sum_{k\in \mathcal I_{j,t}} w_k
},
\\
y_{j,t}
&=
\log \bar p_{j,t}
-
\frac{1}{n_j}\sum_{s=1}^{n_j}\log \bar p_{j,s},
\end{align}
where \(p_k\) and \(w_k\) are the trade price and size, \(\mathcal I_{j,t}\) is the set of trades for instrument \(j\) at timestamp \(t\), and \(n_j\) is the number of aggregated timestamps for instrument \(j\) in the analyzed window.
The analyzed window contained $10{,}096$ raw trades and $7{,}521$ scalar modeled observation events after same-symbol duplicate timestamps were aggregated.
Time was measured in hours. 
The left plot in Figure~\ref{suppeffOu.fig:bitmex_trade_drift_matrix} shows the centered log-price trends.

We represented the data as five scalar irregular observation series that select natural coordinates of a shared latent OU process.
We fit the general invertible-basis H-SSBP with one scalar block and two ordered two-dimensional blocks, an inferred equilibrium mean $\effOuEquilMean$, and a diffusion matrix $\effOuDiffMat$.
The common scalar observation-error variance was fixed at $10^{-5}$.
The dense change-of-basis matrix was represented by its raw entries and assigned the regularization from Section~5.4 of the main manuscript, with $\lambda=0.1$.
MAP estimates were obtained by L-BFGS optimization from ten random starts.
All ten starts converged to a tight high-posterior cluster within $0.25$ log-posterior units of each other, with final gradient norms at most $1.21$ and change-of-basis condition numbers $\operatorname{cond}(\effOuRotMat)$ between $6.81$ and $12.55$.
The displayed H-SSBP MAP had the highest log posterior among these starts, with final gradient norm $0.14$ and $\operatorname{cond}(\effOuRotMat)=6.81$.

\begin{figure}[H]
\centering
\includegraphics[width=\textwidth]{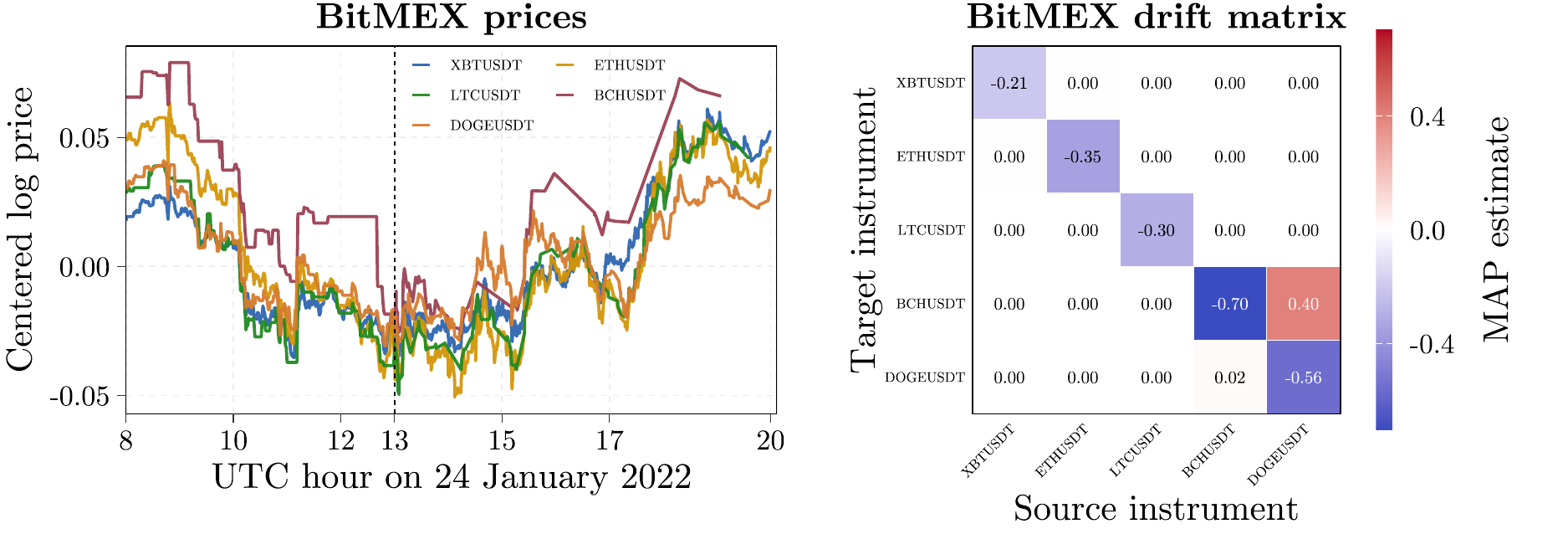}
\caption{
BitMEX January 2022 post-shock trade-data example.
Left: centered log prices from 08:00 to 20:00 UTC, with the dashed vertical line marking the 13:00 UTC start of the analyzed seven-hour recovery window.
The analyzed window uses size-weighted aggregation of trades sharing the same instrument and timestamp.
Right: drift matrix from the general invertible-basis H-SSBP MAP estimate with the highest log posterior among the ten starts.
Rows are target instruments and columns are source instruments, with red/blue indicating positive/negative estimated coupling and white indicating values near zero.
Both diagonal self-reversion entries and off-diagonal couplings are shown.
Color limits are symmetric around zero and include the displayed diagonal entries.
Drift coefficients are per-hour conditional drift couplings for centered log prices.
}
\label{suppeffOu.fig:bitmex_trade_drift_matrix}
\end{figure}

For the displayed highest-posterior MAP in Figure~\ref{suppeffOu.fig:bitmex_trade_drift_matrix}, the fitted diagonal entries ranged from about $-0.21$ to $-0.70$ per hour.
Most off-diagonal entries were close to zero, with a localized positive \texttt{DOGEUSDT}-to-\texttt{BCHUSDT} conditional coupling of comparable order to the self-reversion rates.
Across the ten starts, this was the only off-diagonal pattern with stable sign and a magnitude of comparable order to the diagonal entries.
The fitted two-dimensional block rates satisfied the imposed ordering, and the reconstructed drift spectrum was real and negative.
The corresponding absolute decay rates ranged from about $0.21$ to $0.74$ per hour.
This shows that the H-SSBP is capable of identifying drift matrices with an exclusively real spectrum.

\newpage

\bibliographystyle{apalike}
\bibliography{references}